
\documentclass{article}

\usepackage{microtype}
\usepackage{graphicx}
\usepackage{subfigure}
\usepackage{booktabs} 

\usepackage{hyperref}


\usepackage[accepted]{icml2021}


\icmltitlerunning{Exponentially Many Local Minima in Quantum Neural Networks}

\def\imghome{./figures}
\usepackage{amsfonts, amsmath, amssymb, amsthm, bm, bbm}
\usepackage{mathtools}
\usepackage{braket}
\usepackage{stmaryrd} 
\usepackage{xcolor}
\usepackage[utf8]{inputenc} 
\usepackage[T1]{fontenc}    
\usepackage{hyperref}       
\usepackage{url}            
\usepackage{nicefrac}       
\usepackage{thmtools, thm-restate}
\usepackage{xstring}

\theoremstyle{plain}
\newtheorem{theorem}{Theorem}
\declaretheorem[name=Lemma,sibling=theorem]{lemma}
\newtheorem{corollary}[theorem]{Corollary}

\theoremstyle{definition}
\newtheorem{definition}[theorem]{Definition}

\newtheorem{example}{Example}
\newtheorem{claim}{Claim}[section]


\def\01{\{0,1\}}

\newcommand{\tr}{\operatorname{tr}}

\def\({\left(}
\def\){\right)}

\def\complex{\mathbb{C}}
\def\real{\mathbb{R}}

\def\integer{\mathbb{Z}}

\def\<{\langle}
\def\>{\rangle}

\def\N{\mathcal{N}}
\def\X{\mathcal{X}}
\def\Y{\mathcal{Y}}

\def\B{\mathcal{B}}

\def\S{\mathcal{S}}

\def\1{\mathbbm{1}}
\def\EXP{\mathbb{E}}

\newcommand{\supref}[1]{S.M. Sect.~\ref{#1}}
\newcommand{\clanns}{ClaNNs}

\newcommand{\mlvec}[1]{\boldsymbol{\mathbf{#1}}}

\hypersetup{pdflang=en-US,
    pdftitle={Exponentially Many Local Minima in Quantum Neural Networks},
    pdfauthor={Xuchen You, Xiaodi Wu},
    pdfsubject={Proceedings of the International Conference on Machine Learning 2021}}

\begin{document}

\twocolumn[
\icmltitle{Exponentially Many Local Minima in Quantum Neural Networks}



\icmlsetsymbol{equal}{*}

\begin{icmlauthorlist}
\icmlauthor{Xuchen You}{QUICS,CSUMIACS}
\icmlauthor{Xiaodi Wu}{QUICS,CSUMIACS}
\end{icmlauthorlist}

\icmlaffiliation{QUICS}{Joint Center for Quantum Information and Computer
  Science, University of Maryland}
\icmlaffiliation{CSUMIACS}{Department of Computer Science and Institute for
  Advanced Computer Studies, University of Maryland}

\icmlcorrespondingauthor{X.You}{xyou@umd.edu}
\icmlcorrespondingauthor{X.Wu}{xwu@cs.umd.edu}

\icmlkeywords{Machine Learning, ICML, Quantum Machine Learning, Optimization}

\vskip 0.3in
]



\printAffiliationsAndNotice{}  

\begin{abstract}
Quantum Neural Networks (QNNs), or the so-called variational quantum circuits,
are important quantum applications both because of their similar promises as
classical neural networks and because of the feasibility of their implementation
on near-term intermediate-size noisy quantum machines (NISQ). However, the
training task of QNNs is challenging and much less understood. We conduct a
quantitative investigation on the landscape of loss functions of QNNs and
identify a class of simple yet extremely hard  QNN instances for training.
Specifically, we show for typical under-parameterized QNNs,
there exists a dataset that induces a loss function with the number of spurious
local minima depending exponentially on the number of parameters.
Moreover, we show the optimality of our construction by providing an almost
matching upper bound on such dependence.
While local minima in classical neural networks are due to non-linear
activations, in quantum neural networks local minima appear as a result of the
quantum interference phenomenon.
Finally, we empirically confirm that our constructions can
indeed be hard instances in practice with typical gradient-based optimizers, which
demonstrates the practical value of our findings. 
\end{abstract}


\section{Introduction}
\label{sec:intro}
\textbf{Motivations.} With the recent establishment of quantum supremacy~\cite{google-supremacy, Zhong1460}, the research of quantum computing has entered a new stage where near-term Noisy Intermediate-Scale Quantum (NISQ) computers~\cite{Preskill2018NISQ}
become the important platform for demonstrating quantum applications. \emph{Quantum Neural Networks (QNNs)} (e.g.,~\citet{farhi2018classification, farhi2014quantum}), or the so-called variational quantum method (e.g.,~\citet{NC-VQE}), are the major candidates of applications that can be implemented on NISQ machines.

Typical QNNs replace classical neural networks (\clanns{}), which are just
parameterized classical circuits, by quantum circuits with \emph{classically
  parameterized unitary gates}. Instead of a classical mapping in \clanns{}
~from input to output, QNNs use a \emph{quantum} one which could be very hard
for classical computation to simulate (e.g.,~\citet{harrow2017quantum}) and
hence provide potential quantum speedups for machine learning tasks (e.g., see
the survey by~\citet{biamonte2017quantum} and by~\citet{harrow2017quantum} and
examples in~\citet{schuld2019quantum} and in~\citet{havlivcek2019supervised}).
Moreover, given their quantum-mechanical nature, QNNs (or the variational quantum method) have also demonstrated huge promises in attacking problems in quantum chemistry and material science. 
Contrary to quantum supremacy tasks which serve only as a way to separate quantum and classical computational power but are not necessarily useful,  Google has recently used the same machine to demonstrate the variational quantum method in calculating accurate electronic structures -- an important task in quantum chemistry~\cite{google-VQE}. 
Please see the survey~\cite{benedetti2019parameterized} for more recent exciting developments of QNNs. 

Similar to the classical case, the success of QNN applications will critically depend on the effectiveness of the training procedure which optimizes a \emph{loss function}
in terms of the \emph{read-outs} and the \emph{parameters} of QNNs for specific applications. 
The design of effective training methods has been under intense investigation both empirically and theoretically for \clanns{}.
Moreover, understanding the landscape of the loss functions and designing
corresponding training/optimization methods have recently emerged as a
principled approach to tackle this problem:
\cite{auer1996exponentially,
  safran2017spurious, yun2018small, ding2019sub, venturi2018spurious} showed the existence of
spurious local minima for \clanns{}; In turn, \cite{kawaguchi2016deep,
  du2018power, soudry2016no, nguyen2017loss, li2018over} characterized conditions for benign landscapes in
terms of choice of activation, loss function and (over)-parameterization,
providing insights on the design of \clanns{} and motivating explanations to
the success of gradient descent in
training \clanns{} in certain scenarios
\cite{jacot2018neural, arora2019exact, du2019gradient}; And training methods beyond simple
variants of gradient descent have been devised for training with
guarantees \cite{goel2017reliably, goel2019learning,
  zhong2017recovery, du2018improved}.

Much less has been understood for QNNs.
Most of the study of QNNs takes a trial-and-error approach by empirically comparing the performance of standard classical optimizers on training QNNs' loss functions~\cite{benedetti2019parameterized}.
It has been observed empirically that training QNNs could be very challenging due to the \emph{non-convex} nature of the corresponding loss functions (e.g.,~\citet{PhysRevA.101.012320, PhysRevA.97.022304}). 
However, these empirical studies are unfortunately restricted to small cases due to the limited access to quantum machines of reasonable sizes and the exponential cost in simulating them classically. 

A theoretical study on the training of QNNs would be more \emph{favorable} and \emph{scalable} given the limit on empirical study.
Indeed, a handful of such theoretical progress has been made. One prominent result is that random initialization of parameters will lead to vanishing gradients for much smaller size QNNs than \clanns{}~\cite{mcclean2018barren} and hence pose one unique training difficulty for QNNs. 
Most of the remaining theoretical results are about special cases of QNNs such as \emph{quantum approximate optimization algorithms} (QAOA) (e.g., ~\citet{farhi2014quantum, 2019arXiv191008187F}) and extremely over-parameterized cases (e.g.,~\citet{rabitz2004quantum,russell2016quantum,kiani2020learning}).

In this paper, we conduct a quantitative investigation on the landscape of loss functions for QNNs as a way to study their training issue. 
In particular, we are interested in understanding the properties of local minima of loss functions, such as, (1) the number of local minima depending on the architecture of QNNs, and (2) whether these local minima are \emph{benign} or \emph{spurious} ones, meaning that they are either close to the global minima or saddle points that can be escaped,
or they are truly bad local minima that will hinder the training procedure.
We are also motivated by the observation that QNNs share some similarity with linear neural networks without non-linear activation layers~\cite{kawaguchi2016deep} or one-hidden layer neural networks with quadratic activation~\cite{du2018power} that are both known to have only benign local minima.
The similarity is due to the fact that quantum mechanics underlying QNNs has a linear algebraic formulation similar to the linear part of \clanns{}. (Details in Section~\ref{sec:prelim}.)  
It is hence natural to wonder whether the local minima of QNNs could share these nice properties.

\textbf{Contributions. } Contrary to our original hope, we turn out to identify a class of \emph{simple yet extremely hard} instances of QNNs for the training.
Despite the similarity between QNNs and linear classical neural networks, we demonstrate that \emph{spurious} (or \emph{sub-optimal}) local minima do appear in QNNs and provide a quantitative characterization of the possible number of them. 
We focus on QNNs with the commonly used \emph{square loss} function under a
practical range of the number of parameters (or gates). Specifically, we
identify a general condition of under-parameterized QNNs, which we refer to as
QNNs \emph{with linear independence}. We show for such QNNs, a dataset can be
constructed such that the number of spurious local minima scales
\emph{exponentially} with the number of parameters.   
It demonstrates that QNNs behave quite differently from linear neural networks
(e.g.,~\citet{kawaguchi2016deep}) but share the feature of neurons with
\emph{non-linear} activation functions (e.g.,~\citet{auer1996exponentially}).  
This conceptual paradox could be explained by 
one central phenomenon of quantum mechanics behind QNNs called \emph{interference}. 
We observe that interference replaces the role of non-linear activation in
creating bad local minima for QNNs.
(Section~\ref{sec:construction})

We investigate further and prove that typical under-parameterized QNNs are indeed \emph{with linear independence}.
This indicates that for almost all under-parameterized QNNs, there is a dataset where training with
simple variants of gradient-based methods is hard.
(Section~\ref{sec:linear_indep})

Moreover, we show our construction is almost \emph{optimal} in terms of the dependence of the number of local minima on the number of parameters, by developing an almost matching upper bound. 
This upper bound also demonstrates a sharp separation between QNNs and \clanns{}: For \clanns{}, provided an arbitrary number of training samples, the number of local minima could be unbounded, and hence won't be upper bounded by any function of the number of parameters \cite{auer1996exponentially}. (Section~\ref{sec:upperbound})

Finally, we perform numerical experiments on concrete QNN instances with typical
optimizers, and empirically confirm that our constructions can indeed be hard instances in practice. 
These experiments strengthen the value of our theoretical findings on the practical end. (Section~\ref{sec:experiments})

It is worthwhile mentioning that our investigation on the landscape of loss
functions has a direct implication on the hardness of gradient-based methods.
While it does not rule out the possibility of efficient non-gradient-based
training, there are no obvious solutions to the efficient training for our constructions.
Identifying such training methods would be very interesting. 

\textbf{Related work.} There are only a few previous studies on the training of QNNs, each of which has targeted at some specific parameter range for QNNs.
The observation of vanishing gradients for random initialization of QNNs~\cite{mcclean2018barren} provides hard QNN instances for training, which, however, still require many layers to demonstrate the difficulty of training in practice.  
Our constructions are based on a general condition which includes simple special cases like 1-layer QNNs that are already able to demonstrate QNNs' training difficulty. 

Another line of work~\cite{rabitz2004quantum,russell2016quantum,kiani2020learning} considers the extremely over-parameterized QNN cases.
Specifically, when the number of parameters is comparable to the dimension of the underlying quantum system and the quantum \emph{controllability} condition can be established, all local minima of QNNs' loss functions will become global~\cite{rabitz2004quantum, russell2016quantum}. 
This theoretical prediction has also been observed empirically~\cite{kiani2020learning}. 
However, as the dimension of quantum systems grows \emph{exponentially} with the number of qubits, 
this over-parameterized case can hardly be realistic for any QNN of reasonable size.

\section{Preliminaries}
\label{sec:prelim}
\textbf{Supervised learning.} The goal of supervised learning is to identify a mapping from the feature space $\X$
to the label space $\Y$, given a training set $\{(\mlvec{x}_i,y_i)\}_{i=1}^m\subset (\X\times \Y)^m$ of $m$ samples of \emph{feature vectors} and \emph{labels}. A common practice to find a mapping based on a training set is through empirical risk minimization (ERM), finding a mapping that best align with the training sample with respect to a specific loss function $l:\Y\times\Y\rightarrow\real$. Let $\hat{y}_i$ be the prediction of a certain mapping given $\mlvec{x}_i$. The goal of ERM is to find the mapping that minimizes 
the average loss $\frac{1}{m}\sum_{i=1}^m l(\hat{y}_i, y_i)$. Throughout this paper we will consider square loss $l(\hat{y},y) = (\hat{y} - y)^2$.  


\textbf{Classical neural networks (\clanns{}).} Neural networks are parameterized families of mappings, widely considered for practical problems. Typical feed-forward neural networks are parameterized by a sequence of matrices $\{\mlvec{W}_i\}_{i=1}^t$, such that $\mlvec{W}_i\in\real^{d_i\times d_{i-1}}$, with $d_t = 1$ and $d_0$ is the same as the dimension of the feature space $\X$. For feature vector $\mlvec{x}$, the output $\hat{y}$ of the neural network is 
\begin{align}
    \hat{y} = \mlvec{W}_t\sigma(\mlvec{W}_{t-1}\sigma(\cdots\sigma(\mlvec{W}_1\mlvec{x})\cdots)),
\end{align}
where $\sigma(\cdot)$ denotes an 
element-wise activation
on the output of each layer. (See Figure~\ref{fig:neural-network}.) Linear neural networks~\cite{kawaguchi2016deep} is one special example where ~$\sigma$ is the identity mapping $\sigma(w) = w$:
$\hat{y} = \mlvec{W}_t\mlvec{W}_{t-1}\cdots\mlvec{W}_1\mlvec{x}$.
Another example is one-hidden layer neural networks with quadratic activation $\sigma(w) = w^2$~\cite{du2018power}, where the output 
$\hat{y} =\mlvec{x}^T\mlvec{W}_1^T\mlvec{W}_1\mlvec{x}$. Given the training set $\{(\mlvec{x}_i, y_i)\}_{i=1}^m$, the empirical risk minimization with square loss solves the optimization problem: 
\begin{align} \label{eqn:class}
\min_{\mlvec W_1}\frac{1}{m}\sum_{i=1}^m \bigl(\mlvec{x}_i^T\mlvec{W}_1^T\mlvec{W}_1\mlvec{x}_i-y_i\bigr)^2
\end{align}
A common choice of $\sigma(\cdot)$ is non-linear activation such as Relu or Sigmoid. These activations introduce \emph{non-linearity} which is the source of spurious local minima in neural networks \cite{kawaguchi2016deep,auer1996exponentially}.


\textbf{Quantum neural networks.} QNNs share the layered structure (Figure~\ref{fig:neural-network}) where a linear transformation $\mlvec{U}_i$ is applied on the output of the previous layer, however, with the following differences:

(1) \emph{Input}. The inputs to \clanns{} are feature vectors. Yet for QNNs, a feature vector $\mlvec{x}$ is first encoded into a quantum state $\mlvec{\rho}_{\mlvec{x}}$ then fed to the quantum circuits.
We are not restricted to specific encoding scheme (e.g.,  \cite{mitarai2018quantum,benedetti2019parameterized,lloyd2020quantum}).
For technical convenience, we will directly work with a set of $m$ samples of \emph{quantum encoding} and \emph{labels} 
 $ \S = \{(\mlvec\rho_i, y_i)\}_{i=1}^m $ where $\mlvec\rho_i$ encodes the information of $\mlvec x_i$. 

(2) \emph{Linear Transformation \& Parameterization}. The linear transformations $\{\mlvec{W}_i\}_{i=1}^t$ in \clanns{} could be general matrices, whereas the corresponding $\{\mlvec{U}_i\}_{i=1}^t$ in QNNs must be unitaries. 
Moreover, although $\{\mlvec{U}_i\}_{i=1}^t$ can be efficiently implemented by quantum machines, their classical representations are matrices of exponential dimension in terms of the system size (e.g., the number of qubits in QNNs). 
This makes classical simulation of QNNs extremely expensive and also makes the parameterizations of $\{\mlvec{U}_i\}_{i=1}^t$ different from the straightforward parameterizations of $\{\mlvec{W}_i\}_{i=1}^t$ (explained below).

(3) \emph{Output.} Contrary to \clanns{}, one needs to make a quantum
\emph{measurement} to read information from QNNs (explained below). While there
exist more advanced models of QNNs with additional nonlinearity, we consider
the most basic QNNs, where the measurements are the only source of slight non-linearity allowed by
quantum mechanics, which as we will see won't necessarily create bad local
minima for the training. Note further there is no direct counterpart of
classical non-linear activation $\sigma(\cdot)$ in QNNs of our consideration. 

\begin{figure}[!htbp]
  \centering
  \subfigure[Classical neural networks]{
    \includegraphics[width=.7\linewidth]{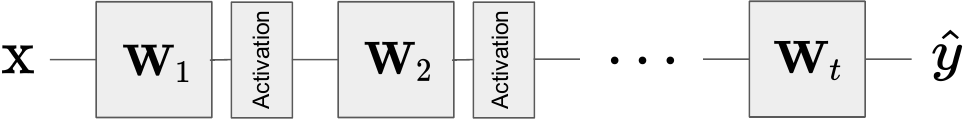}
  }
  \\
  \subfigure[Quantum neural networks]{
    \includegraphics[width=.83\linewidth]{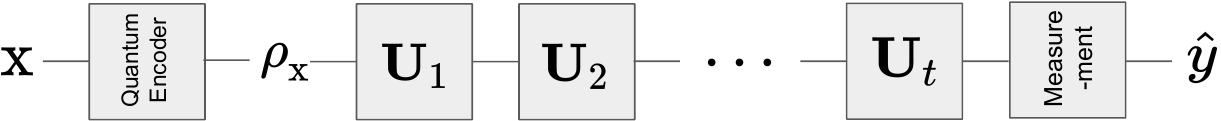}
  }
  \caption{
    An illustration of layer-structured classical and quantum neural networks.
    }
  \label{fig:neural-network}
\end{figure}


\textbf{Mathematical formulation of quantum
states.}
A general quantum state with dimension $d$ can be represented by a \emph{density operator} that is a positive
semidefinite (PSD) Hermitian matrix $\mlvec\rho\in\complex^{d\times d}$ with
$\tr(\mlvec\rho) = 1$.
A quantum state $\mlvec\rho$ is \emph{pure} if $\mlvec\rho = \mlvec v\mlvec v^\dagger$ for a $\ell_2$ unit vector $\mlvec v$.
A two-dimensional quantum state $\mlvec{\rho} \in \complex^{2\times 2}$ is usually referred as a \textit{qubit}, the quantum generalization of the classical binary bit.
The state of $n$ qubits lies in $\otimes_{i=1}^n\complex^{2\times 2}$ following the tensor product of spaces for single qubits, and is a linear operator on a Hilbert space with dimension $d=2^n$, i.e., scales \emph{exponentially} with the number of qubits $n$.

\textbf{Parameterization of quantum transformations.} Instead of directly parameterized matrices $W_i$, QNNs typically consist of \emph{classically parameterized} quantum gates. 
A general form of these gates is $\exp({-i\theta \mlvec H})$, where $\theta$ is
the parameter, $\mlvec H$ the Hamiltonian (i.e., a Hermitian matrix)
, and the exponential is a \emph{matrix} exponential.
For example, a commonly used gate set, called the Pauli rotation gate (e.g.,~\citet{farhi2018classification,li2017hybrid,ostaszewski2019qcsl}), can be expressed as $\exp({-i\theta\mlvec P_c})$ (on $c$-th qubit) or $\exp({-i\theta \mlvec{P}_c\otimes \mlvec{P}_{c^\prime}})$ (on $c$-th and $c'$-th qubits), where $\mlvec P_c$ refers to Pauli $\mlvec X, \mlvec Y, \mlvec Z $ matrices.\footnote{
    $
   \mlvec{X} =
    \begin{bmatrix}
      0 & 1 \\
      1 & 0
    \end{bmatrix},\
  \mlvec{Y} =
    \begin{bmatrix}
            0 & -i \\
            i & 0 \\
          \end{bmatrix},\
  \mlvec{Z} =
    \begin{bmatrix}
      1 & 0 \\
      0 & -1 \\
    \end{bmatrix}.
    $
    For Pauli matrix $\mlvec{Z}$,
    $\exp({-i\theta\mlvec{Z})}=\begin{bmatrix}
      e^{-i\theta} & 0 \\
      0 & e^{i\theta}
    \end{bmatrix}$. 
    }
We can also group gates in QNNs with respect to the layer structure in Figure~\ref{fig:neural-network} by putting gates that can be executed in parallel in the same layer. For example, let $\mlvec V_{i,j} (\theta_{i,j})= \exp({-i\theta_{i,j} \mlvec H_{i,j}})$ be the $j$th gate in the $i$th layer. Then $\mlvec U_i (\mlvec \theta)=\prod_j \mlvec V_{i,j}(\theta_{i,j})$ and 
\begin{equation}
  \mlvec{U}(\mlvec{\theta}) = \mlvec{U}_t(\mlvec{\theta}) \mlvec{U}_{t-1}(\mlvec{\theta})\cdots
\mlvec{U}_1(\mlvec{\theta}),  
\end{equation}
where $\mlvec U(\mlvec \theta)$ refers to the unitary transformation of the entire QNN with parameters $\mlvec \theta$. 
For technical convenience and to highlight the dependence on the number of parameters $p$, we can also write
\begin{align}
  \label{ln:ansatz}
\mlvec{U}(\mlvec{\theta})=
\mlvec{V}_p({\theta_p})
\mlvec{V}_{p-1}({\theta_{p-1}})
\cdots
\mlvec{V}_1({\theta_1}),
\end{align}
 with $\mlvec V_l(\theta_l)= e^{-i\theta_l \mlvec H_l}$ for Hamiltonian $\mlvec{H}_l$ and $l\in[p]$.


\textbf{Quantum measurements and observables.} Quantum \emph{observables}, mathematically formulated as Hermitian matrices $\mlvec{M}\in\complex^{d\times
  d}$, are used in quantum mechanics to encode the information of the classical random outcomes generated by quantum \emph{measurements} on quantum states. 
The expected outcome $\hat{y}$ of observable $M$ on the output state $\mlvec{U}(\mlvec{\theta}) \mlvec\rho \mlvec{U}^\dagger(\mlvec{\theta})$ of any QNN $\mlvec U(\theta)$ is given by 
\begin{align*}
  \hat{y}= &f(\mlvec{\rho},\mlvec\theta) = \tr( \mlvec{U}(\mlvec{\theta}) \mlvec\rho \mlvec{U}^\dagger(\mlvec{\theta})\mlvec{M})\\
  \text{ or } \quad &\tr( \mlvec v^\dagger \mlvec{U}^\dagger(\mlvec{\theta})\mlvec{M}
\mlvec{U}(\mlvec{\theta}) \mlvec v) \text{ when } \mlvec \rho = \mlvec v\mlvec v^\dagger.
\end{align*}

A more complete introduction to quantum mechanics and QNNs can be found in \supref{sec:app_prelim}.

Given a quantum training set $\S=\{(\mlvec{\rho}_i, y_i)\}^m_{i=1}$ and a QNN with output $\hat{y}=f(\mlvec{\rho},\mlvec\theta)$, the empirical risk minimization with square loss optimizes the following loss function: 

\begin{align}
  \label{eqn:loss}
  L(\mlvec\theta; \S) = \frac{1}{m} \sum_{i=1}^m \left( \tr(\mlvec U(\mlvec\theta)\mlvec \rho_i\mlvec U^\dagger(\mlvec \theta)\mlvec M) - y_i \right)^2.
\end{align}
When quantum encoding states are pure, namely  $\mlvec{\rho}_i=\mlvec{v}_i\mlvec{v}_i^\dagger$ for $i\in [m]$, the loss function becomes 
\begin{equation}
 L(\mlvec\theta; \S) = \frac{1}{m} \sum_{j=1}^m \left( \mlvec{v}_j^\dagger\mlvec U^\dagger(\mlvec \theta)\mlvec
  M \mlvec U(\mlvec\theta)\mlvec{v}_j - y_j \right)^2
\end{equation}
which resembles Eqn.(\ref{eqn:class}) from one-hidden layer neural networks with quadratic activation except for unitary transformations.  
It is known in \cite{du2018power} that such neural
networks do not possess spurious local minima almost certainly, whereas we establish a completely different behavior for QNNs. 


\paragraph{Characterization of the landscape.}
For a differentiable function $F$ defined on an unconstrained domain,
$\mlvec \theta^*$ is a \emph{critical} point if and only if
the gradient vanishes at the point: $\nabla F(\mlvec\theta^*) = \mlvec 0$.
$\mlvec\theta$ is a local minimum if and only if
there is an open set $U$ containing $\mlvec \theta^*$ such that $F(\mlvec\theta^*)
\leq F(\mlvec\theta)$ for all $\mlvec\theta\in U$. A local minimum is global if
the minimum value of $F$ is attained at $\theta^*$. For twice-differentiable
function over an unconstrained domain, $\mlvec\theta^*$ is a local minimum if the
Hessian is positive definite at $\mlvec\theta^*$ (sufficient condition) and only
if $\mlvec\theta^*$ is a critical point (necessary condition).

Note further that the form of quantum gates $\exp({-i\theta \mlvec H})$ will be \emph{periodic} in $\theta$ for $\mlvec H$ with rational eigenvalues, which is typically true for commonly used $\mlvec H$ (e.g., Pauli matrices).  
It hence suffices to study the number of (spurious) local minima of the loss function within one period.

\section{Exponentially Many Spurious Local Minima for Under-parameterized QNNs}
\label{sec:construction}
In this section, we present our main result on the constructions of datasets for
$p$-parameter quantum neural network instances with $\Omega(2^p)$ spurious local
minima. We consider QNNs defined in Eqn.~(\ref{ln:ansatz}), with parameterized gates $\mlvec{V}_{l}(\theta_l)$
generated by $\mlvec{H}_l$ with eigenvalues $\pm 1$.
This is the case for single-qubit parameterized gates and
two-qubit gates generated by Kronecker products of Pauli matrices.

Shifting $\mlvec{H}_l$ by $\lambda\mlvec{I}$ for any $\lambda\in\real$
introduces a global phase factor to the output state and does not change the
output $f(\mlvec{\rho},\mlvec{\theta})$. Also, shifting the observable
$\mlvec{M}$ by $\lambda\mlvec{I}$ is equivalent to shifting the labels in the dataset by
$-\lambda$. Without loss of generality, we assume $\tr(\mlvec{H}_l) = 0$
and $\tr(\mlvec{M}) = 0$.

We start by characterizing the output $f(\mlvec{\rho},\mlvec{\theta})$.
For any $l\in[p]$, define linear maps $\Phi^{(0)}_l(\cdot)$,
$\Phi^{(1)}_l(\cdot)$ and $\Phi^{(2)}_l(\cdot)$ such that
\begin{align}
  \Phi^{(0)}_l(\mlvec{A}) &= \frac{1}{2}(\mlvec{A} + \mlvec{H}_l\mlvec{A}\mlvec{H}_l)\\
  \Phi^{(1)}_l(\mlvec{A}) &= \frac{1}{2}(\mlvec{A} - \mlvec{H}_l\mlvec{A}\mlvec{H}_l)\\
  \Phi^{(2)}_l(\mlvec{A}) &= \frac{i}{2}[\mlvec{H}_l, \mlvec{A}]
\end{align}
Here $[\cdot,\cdot]$ is the commutator of two matrices. For any Hermitian
$\mlvec{A}$, $\Phi^{(0)}_l(\mlvec{A})$ commutes with $\mlvec{H}_l$, and the
output of $\Phi^{(1)}_l$ and $\Phi^{(2)}_l$ anti-commute with $\mlvec{H}_l$.
For any vector $\mlvec\xi \in \{0,1,2\}^p$, define:
\begin{align}
  \Phi_{\mlvec{\xi}}(\mlvec{A})= \Phi_{1}^{(\xi_1)}\circ\Phi_{2}^{(\xi_2)}\circ\cdots\circ\Phi_{p}^{(\xi_p)}(\mlvec{A})
\end{align}
with $\circ$ denoting the composition of mappings.

The observable in Heisenberg picture $\mlvec{M}(\mlvec{\theta}) :=
\mlvec{U}^\dagger(\mlvec{\theta})\mlvec{M}\mlvec{U}(\mlvec{\theta})$ can be
expanded as:
\begin{align}
  \sum_{\mlvec{\xi}\in\{0,1,2\}^p}\Phi_{\mlvec{\xi}}(\mlvec{M})
  \prod_{l:\xi_{l}=1}\cos2\theta_{l}
  \prod_{l^\prime:\xi_{l^\prime}=2}\sin2\theta_{l^\prime}
\end{align}
The QNN output $f(\mlvec{\rho}, \mlvec{\theta}) =
\tr(\mlvec{\rho}\mlvec{M}(\mlvec{\theta}))$ can be expressed as the following trigonometric polynomial:  
\begin{align}
  \sum_{\mlvec{\xi}\in\{0,1,2\}^p}\tr(\mlvec{\rho}\Phi_{\mlvec{\xi}}(\mlvec{M}))
  \prod_{l:\xi_{l}=1}\cos2\theta_{l}
  \prod_{l^\prime:\xi_{l^\prime}=2}\sin2\theta_{l^\prime}
  \label{eqn:1sample}
\end{align}
As shown in \supref{sec:app_constructions}, the loss function remains
invariant under the joint transformation $\theta_l\mapsto \theta_l + \frac{\pi}{2}$ and
\begin{align}
  \Phi^{(0)}_{l}(\cdot)\mapsto&\mlvec{H}_l\Phi^{(0)}_{l}(\cdot)\mlvec{H}_l =\Phi^{(0)}_{l}(\cdot)\\
  \Phi^{(1)}_{l}(\cdot)\mapsto&\mlvec{H}_l\Phi^{(1)}_{l}(\cdot)\mlvec{H}_l =-\Phi^{(1)}_{l}(\cdot)\\
  \Phi^{(2)}_{l}(\cdot)\mapsto&\mlvec{H}_l\Phi^{(2)}_{l}(\cdot)\mlvec{H}_l =-\Phi^{(2)}_{l}(\cdot)
\end{align}
Under the transformation $\theta_l\mapsto\theta_l+\frac{\pi}{2}$, terms in Eqn.~(\ref{eqn:1sample}) associated with $\mlvec{\xi}:\xi_l=0$ are invariant, while terms 
associated with $\mlvec{\xi}:\xi_l=1,2$ flip signs.

From an alternative perspective,
$L(\mlvec\theta; \S)$ contains \emph{oscillating wave} components 
proportional to $\cos 4\theta_l$, $\sin 4\theta_l$, $\cos2\theta_l$ and
$\sin2\theta_l$, hence periodic with $\pi$ on each coordinate. However, due the existence of
lower frequency, the periodicity with $\frac{\pi}{2}$ does not always hold for
all datasets. Our construction utilizes the presence and absence of this
$\frac{\pi}{2}$-translational symmetry.

We will focus on a general class of QNN, which we call QNN \emph{with
linear independence}:
\begin{definition}[QNN with linear independence]
  \label{def:linear_indep}
  A QNN is said to be with linear independence, if the
  associated set of $3^p-1$ operators $\{\Phi_{\mlvec{\xi}}(\mlvec{M})\}_{\mlvec\xi\in\{0,1,2\}^p,
  \mlvec\xi\neq \mlvec{0}}$ forms a linearly independent set.
\end{definition}
Note that for the linear independence condition to hold, the dimension of the
QNN $d \geq 3^{p/2}$. Namely, it is a
under-parameterized case, which differentiates us from the over-parameterized
ones~\cite{rabitz2004quantum,russell2016quantum,kiani2020learning}.
Our main result states:
\begin{restatable}[Construction: exponentially many local minima]{theorem}{constructthm}
  \label{thm:construction}
  Consider QNNs composed of unitaries generated by two-level
  Hamiltonians, parameterized by $\mlvec\theta \in \real^p$.
  If the QNN is with linear independence, a dataset $\S$
  can be constructed to induce a loss function $L(\mlvec{\theta};\S)$ 
  with $2^p$ local minima within each period, and $2^p-1$ of these minima
  are spurious with positive suboptimality gap. 
\end{restatable}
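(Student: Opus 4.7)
The plan is to exploit the trigonometric-polynomial structure of the output in Eqn.~(\ref{eqn:1sample}) and design $\S$ so that $L(\mlvec{\theta};\S)$ becomes a \emph{separable} sum of one-dimensional double-well functions, specifically
\begin{equation*}
  L(\mlvec{\theta};\S) \;\propto\; \text{const} \;+\; \sum_{l=1}^{p}\bigl[\cos 4\theta_l \;+\; \varepsilon\sin 2\theta_l\bigr]
\end{equation*}
for some small $\varepsilon>0$. The $\tfrac{\pi}{2}$-periodic term $\cos 4\theta_l$ alone would create two \emph{indistinguishable} local minima in the period $[0,\pi)$ of each coordinate, and the $\pi$-periodic perturbation $\varepsilon\sin 2\theta_l$ precisely breaks that $\tfrac{\pi}{2}$-translational symmetry so that exactly one of the two remains global while the other becomes spurious---this is the ``presence and absence of $\tfrac{\pi}{2}$-translational symmetry'' highlighted in the preamble.

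To realize this loss I will use two samples per coordinate. By the linear-independence hypothesis, the map $\rho\mapsto(\tr(\rho\,\Phi_{\mlvec{\xi}}(\mlvec M)))_{\mlvec\xi\neq\mathbf 0}$ is surjective onto $\real^{3^p-1}$, so I can pick Hermitian $\sigma_l^{+},\sigma_l^{-}$ supported on the single coordinates $\mlvec\xi=\mathbf{1}_l$ and $\mlvec\xi=\mathbf{2}_l$ (vectors with a single nonzero entry at position $l$), with prescribed magnitudes $\sqrt\alpha$ and $\sqrt\beta$. Diluting with the maximally mixed state turns $\sigma_l^{\pm}$ into valid density matrices $\rho_l^{\pm}$, and because $\tr(\mlvec M)=0$ the identity dilution contributes nothing to the output. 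Labels are then chosen as $y_l^{+}=c_l^{+}$ and $y_l^{-}=c_l^{-}-\varepsilon/(2\sqrt\beta)$, where $c_l^{\pm}=\tr(\rho_l^{\pm}\Phi_{\mathbf 0}(\mlvec M))$ are the $\mlvec\xi=\mathbf 0$ offsets. The per-coordinate contribution then expands as
\begin{equation*}
(\sqrt\alpha\cos 2\theta_l)^{2} + \bigl(\sqrt\beta\sin 2\theta_l + \tfrac{\varepsilon}{2\sqrt\beta}\bigr)^{2} \;=\; \text{const} \;+\; \tfrac{\alpha-\beta}{2}\cos 4\theta_l \;+\; \varepsilon\sin 2\theta_l,
\end{equation*}
and setting $\alpha=\beta+2$ produces the target single-coordinate profile; summing over $l$ yields the separable loss.

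The critical-point count then reduces to the one-dimensional function $h(\theta)=\cos 4\theta+\varepsilon\sin 2\theta$, whose derivative factors as $h^{\prime}(\theta)=2\cos 2\theta\,(\varepsilon-4\sin 2\theta)$, giving four roots per period: $\theta=\pi/4,\,3\pi/4$ and the two solutions of $\sin 2\theta=\varepsilon/4$. The second-derivative test on $h^{\prime\prime}(\theta)=-16\cos 4\theta-4\varepsilon\sin 2\theta$ gives $h^{\prime\prime}(\pi/4)=16-4\varepsilon>0$ and $h^{\prime\prime}(3\pi/4)=16+4\varepsilon>0$, identifying both as strict local minima with values $-1+\varepsilon$ (spurious) and $-1-\varepsilon$ (global); the remaining two points evaluate to $h^{\prime\prime}=-16+\varepsilon^{2}<0$ and are maxima for small $\varepsilon$. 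Because $L$ is a sum of such $h$'s in disjoint variables, its Hessian is block-diagonal and every Cartesian product of single-coordinate minima is itself a local minimum of $L$; hence $L$ has exactly $2^{p}$ local minima per period, of which $2^{p}-1$ are suboptimal by a positive gap at least $2\varepsilon$ (before normalization).

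The main obstacle is the dataset-realization step, since linear independence only controls the $3^{p}-1$ coefficients indexed by $\mlvec\xi\neq\mathbf 0$, giving no direct handle on the $\mathbf 0$-th coefficient, the trace of $\sigma_l^{\pm}$, or their positive semidefiniteness. Fortunately, because $\tr(\mlvec M)=0$ a short calculation shows $\tr(\Phi_{\mlvec\xi}(\mlvec M))=0$ for all $\mlvec\xi$, so replacing $\sigma\mapsto\sigma+\mu I$ freely adjusts the trace without disturbing any $\tr(\sigma\Phi_{\mlvec\xi}(\mlvec M))$; PSD-ness is secured by diluting with $I/d$ at the price of a harmless down-scaling of $\alpha,\beta,\varepsilon$, and the leftover $\mathbf 0$-th offsets are absorbed into the labels as indicated. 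A final check that the global $1/m$ normalization preserves the number of critical points and the sign of the gap completes the argument.
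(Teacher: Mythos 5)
Your proposal is correct, and the underlying dataset construction is the same as the paper's: use linear independence to solve a linear system prescribing the coefficients $\tr(\mlvec{\rho}\,\Phi_{\mlvec{\xi}}(\mlvec{M}))$, dilute with $\mlvec{I}/d$ to restore positive semidefiniteness, absorb the $\Phi_{\mlvec{0}}$ offset into the labels, and play the $\tfrac{\pi}{2}$-translational symmetry against a symmetry-breaking perturbation. Where you differ is in how the verification is organized. The paper splits the dataset into a symmetric part $\S_0$ (giving $2^p$ degenerate minima via Lemma~\ref{lm:construction_sym}) and a separate symmetry-breaking part $\S_1$, whose quadratic term must be suppressed by a joint $\epsilon$-scaling of states and labels so that only the linear term $-c\sum_l\cos 2\theta_l$ survives, followed by a neighbourhood/boundary argument (Lemma~\ref{lm:construction_breaksym}) to certify that each of the $2^p$ perturbed minima persists. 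You instead fold the symmetry-breaking term into the cross term of a single squared residual, $\bigl(\sqrt{\beta}\sin 2\theta_l + \tfrac{\varepsilon}{2\sqrt{\beta}}\bigr)^2$, producing a fully separable loss $\sum_l\bigl[\cos 4\theta_l + \varepsilon\sin 2\theta_l\bigr]$ whose critical points are classified by explicit one-dimensional calculus and a block-diagonal Hessian. This buys you a shorter, fully explicit proof with no suppression trick and no boundary argument (and it is essentially the general-case analogue of the paper's concrete Example~\ref{example:base}); the paper's two-lemma structure is less direct but more flexible, since it accommodates arbitrary locations $\mlvec{\theta}^\star$ and non-separable couplings across coordinates (as in Examples~\ref{example:address_minima_position} and~\ref{example:address_decomposability}), which matters for ruling out coordinate-wise greedy optimization.
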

\begin{proof}[Proof of Theorem~\ref{thm:construction}]
    The dataset we construct is composed of two parts $\S_0$ and $\S_1$. The
    first component of the loss function $L(\mlvec\theta; \S_0)$ is constructed
    with $2^p$ local minima using the $\frac{\pi}{2}$-translational symmetry:
    \begin{restatable}[Creating symmetry]{lemma}{constructsymlm}
      \label{lm:construction_sym}
      For QNNs with linear independence as mentioned in
      Theorem~\ref{thm:construction}, a dataset $\S_0$ can be constructed to
      induce a loss function $L(\mlvec{\theta};\S_0)$ that (1) has a local minimum at
      some $\mlvec{\theta}^\star$, and (2) is invariant under translation
      $\theta_l\mapsto\theta_l+\frac{\pi}{2}$ for all $l\in[p]$. 
    \end{restatable}
    Due to the translational invariance,
    for any $\mlvec{\zeta}\in\{0,1\}^p$, $\mlvec{\theta}^\star
    + \frac{\pi}{2}\mlvec{\zeta}$ is a local minimum for $L(\mlvec\theta;\S_0)$,
    forming a total of $2^p$ local minima. A second dataset $\S_1$ is introduced to break this  symmetry,
    creating spurious local minima:
\begin{restatable}[Breaking symmetry]{lemma}{constructbreaksymlm}
  \label{lm:construction_breaksym}
  Consider the QNN, dataset $\S_0$ and local minimum $\mlvec{\theta}^\star$ defined in
  Lemma~\ref{lm:construction_sym}. Let $\Theta$ denote the set of $2^p$ local
  minima due to the translational invariance.
  There exists a dataset $\S_1$ such that
  \begin{multline}
    \inf_{\mlvec\theta\in\N(\mlvec{\theta}^\star)}
    L(\mlvec{\theta}; \S_0) + L(\mlvec{\theta}; \S_1) <\\
    \inf_{\mlvec\theta\in\N(\mlvec{\theta}^\prime)}
    L(\mlvec{\theta}; \S_0) + L(\mlvec{\theta}; \S_1)
    \label{eqn:break_sym2}
  \end{multline}
  for all $\mlvec{\theta}^\prime\in\Theta/\{\mlvec{\theta}^\star\}$, and that
  \begin{align}
    L(\mlvec{\theta}; \S_0) + L(\mlvec{\theta}; \S_1) >
    L(\mlvec{\theta}^\prime; \S_0) + L(\mlvec{\theta}^\prime; \S_1)
    \label{eqn:break_sym1}
  \end{align}
  for all $\mlvec{\theta}^\prime\in\Theta$ and all 
  $\mlvec\theta\in\partial\N(\mlvec{\theta}^\prime)$.
  Here $\N(\cdot)$ denote a bounded and closed neighbourhood, such that $\N(\mlvec\theta)\cap\N(\mlvec\theta^\prime)=\emptyset$ for any
  $\mlvec\theta,\mlvec\theta^\prime \in\Theta$. And let $\partial\N$ denote its
  boundary.
  \end{restatable}
Eqn.~(\ref{eqn:break_sym1}) in Lemma~\ref{lm:construction_breaksym} ensures the
existence of a local minimum within $\N(\mlvec{\mlvec{\theta}})$ for each
$\mlvec{\theta}\in\Theta$, and Eqn.~(\ref{eqn:break_sym2}) promises that only
the local minimum within $\N(\mlvec{\theta}^\star)$ achieves the global optimal
value. Combining $\S_0$ and $\S_1$ finishes the proof for Theorem~\ref{thm:construction}.
\end{proof}
We give proof sketches for Lemma~\ref{lm:construction_sym} and
\ref{lm:construction_breaksym}. The full proofs are postponed to \supref{sec:app_constructions}.
\begin{proof}[Proof sketch for Lemma~\ref{lm:construction_sym}]
  It suffices to construct a dataset $\S_0 =
  \{(\mlvec\rho_k,y_k)\}_{k=1}^{m_0}$, such that (1) for all $k\in[p]$, $f_k(\mlvec{\theta}):=\<\mlvec\rho_k,
  \mlvec{M}(\mlvec\theta)\> - y_k$ is either symmetric or anti-symmetric under
  $\theta_l\mapsto\theta_l + \frac{\pi}{2}$ for all $l\in[p]$, and (2) 
  the intersection $\Theta$ of the set of roots $\Theta_k$ of $f_k(\mlvec{\theta})
  = 0$ is non-empty and contains at least one isolated point
  $\mlvec{\theta}^\star$. For such $\S_0$, $\mlvec{\theta}^\star$ is an isolated
  root of the non-negative loss function $L(\mlvec{\theta}; \S_0) =
  \sum_{k=1}^{m_0}f_k(\mlvec{\theta})^2$.

  The existence of such dataset $\S_0$ follows from the linear independence of
  operators for the QNN. As a result, 
  for any $k\in[m_0]$, the solution to the following linear system for Hermitian $\mlvec{D}_k\in
  \complex^{d\times d}$ is non-empty:
  \begin{align}
    \begin{cases}
      &\tr(\mlvec{D}_k\cdot\mlvec{I}) = 0, \\ 
      &\tr(\mlvec{D}_k\cdot\Phi_{\mlvec{\xi}}(\mlvec{M})) = \hat{f}_{\mlvec{\xi},k},\ \forall \mlvec{\xi}\neq\mlvec{0}. 
    \end{cases}
  \end{align}
  Here $\hat{f}_{\mlvec{\xi},k}$ is the coefficient corresponding to the
  term $\prod_{l:\xi_l=1}\cos2\theta_l\prod_{l^\prime:\xi_{l^\prime}=2}\sin2\theta_{l^\prime}$ in $f_k(\mlvec\theta)$.
  Given the solution $\{\mlvec{D}_k\}_{k=1}^{m_0}$, $\S_0$ can be constructed by setting
  $\mlvec{\rho}_k:=\frac{1}{d}\mlvec{I} + \kappa\mlvec{D}_k$ for a proper
  scaling factor $\kappa$ and let $\ y_k =
  \tr(\mlvec{\rho}_k\Phi_{\mlvec{0}}(\mlvec{M}))$.
\end{proof}
\begin{proof}[Proof sketch for Lemma~\ref{lm:construction_breaksym}]
  Rewrite the loss function as
  \begin{align}
    L(\mlvec\theta; \S_1) =&-\frac{2}{m_1}\sum_{k=1}^{m_1}y_k\tr(\mlvec\rho_k\mlvec{M}(\mlvec\theta))\\
                           &+ \frac{1}{m_1}\sum_{k=1}^{m_1}\tr(\mlvec{\rho}_k \mlvec{M}(\mlvec\theta))^2
                             + \frac{1}{m_1}\sum_{k=1}^{m_1}y_k^2
  \end{align}
  As will be made clear in \supref{sec:app_constructions}, our key observation
  is that, under a joint scaling of $y_k$ and $\mlvec{\rho}_k$, the second
  term can be arbitrarily suppressed while the first term remains the same. Therefore it suffices to
  study the first term ${L}^\prime(\mlvec{\theta}; \S_1) :=
  -\frac{2}{m_1}\sum_{k=1}^{m_1}y_k\tr(\mlvec\rho_k\mlvec{M}(\mlvec\theta))$.
  The linear independence allows us to solve a linear system to construct $\S_1$
  that satisfies the
  requirements in Lemma~\ref{lm:construction_breaksym}.
\end{proof}
\textbf{Remarks.} The statements above involve unitaries generated by two-level
Hamiltonians only.  For more
general local quantum gates,  $\{\mlvec{H}_l\}_{l=1}^p$ are allowed to have more than two
distinct eigenvalues. We are especially interested in Hamiltonians with eigenvalues
$\{E_1,\cdots, E_d\}\subset\mathbb{Z}$, as arbitrary Hamiltonians with rational
spectrum can be converted to ones with integral spectrum with proper shifting and scaling.
Theorem~\ref{thm:construction} can be generalized for $\mlvec{H}_l$'s with
largest eigen-gap $\max_{c,c^\prime\in[d]}|E_c - E_{c^\prime}|$ bounded by
$\Delta$, with the number of spurious local minima being $\Omega(\Delta^p)$.
This observation further supports the intuition of interference as the source of local minima. 
\paragraph{1-layer QNN.}
 A simple example of QNNs with linear
independence is a one-layer circuit with local $\mlvec{H}_l$ acting on the $l$-th qubit,
and a product operator $\mlvec{M}$ as the observable:
\begin{restatable}[One-layer QNNs with product observables]{proposition}{onelayerprop}
  \label{prop:onelayer}
  Consider the family of QNNs composed of unitaries generated by two-level
  Hamiltonians, parameterized by $\mlvec\theta \in \real^p$.
  For all $l\in[p]$, let $\mlvec{H}_l$ be a local Hamiltonian on the $l$-qubit,
  taking the form
  $\mlvec{I}\otimes \cdots\otimes\mlvec{h}_l\otimes\cdots\otimes\mlvec{I}$ for
  some Hermitian $\mlvec{h}_l$ at the $l$-th position, and $\mlvec{M} =
  \mlvec{m}_1\otimes\cdots\otimes\mlvec{m}_p$ such that $\mlvec{m}_l +
  \mlvec{h}_l\mlvec{m}_l\mlvec{h}_l$ and $\mlvec{m}_l -  
  \mlvec{h}_l\mlvec{m}_l\mlvec{h}_l$  are
  non-zero for any $l$.
  There exists a dataset that induces a loss function with $2^p-1$ spurious
  local minima.
\end{restatable}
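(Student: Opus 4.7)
The plan is to reduce Proposition~\ref{prop:onelayer} to Theorem~\ref{thm:construction} by verifying that the one-layer QNN described satisfies the linear-independence condition of Definition~\ref{def:linear_indep}. Once linear independence is established, the theorem yields a dataset producing $2^p$ local minima of which $2^p-1$ are spurious, which is precisely the claim. Note also that $\mlvec{H}_l=\mlvec{I}\otimes\cdots\otimes\mlvec{h}_l\otimes\cdots\otimes\mlvec{I}$ has eigenvalues $\pm 1$ exactly when $\mlvec{h}_l$ does, so the two-level hypothesis of Theorem~\ref{thm:construction} descends to $\mlvec{h}_l$.

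The first step is to exploit the tensor-product structure. Because $\mlvec{H}_l$ acts nontrivially only on qubit $l$, each map $\Phi_l^{(\xi_l)}$ restricts to a single-qubit map $\phi_l^{(\xi_l)}$ on factor $l$ (and identity elsewhere), with $\phi_l^{(0)}(\mlvec{a})=\tfrac{1}{2}(\mlvec{a}+\mlvec{h}_l\mlvec{a}\mlvec{h}_l)$, $\phi_l^{(1)}(\mlvec{a})=\tfrac{1}{2}(\mlvec{a}-\mlvec{h}_l\mlvec{a}\mlvec{h}_l)$, and $\phi_l^{(2)}(\mlvec{a})=\tfrac{i}{2}[\mlvec{h}_l,\mlvec{a}]$. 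Since the $\Phi_l^{(\xi_l)}$ for distinct $l$ act on disjoint tensor factors they commute, so the ordering in the composition is immaterial and
\begin{equation*}
  \Phi_{\mlvec{\xi}}(\mlvec{M}) \;=\; \bigotimes_{l=1}^{p} \phi_l^{(\xi_l)}(\mlvec{m}_l).
\end{equation*}

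The second step is to verify linear independence of the local triple $\{\phi_l^{(0)}(\mlvec{m}_l),\phi_l^{(1)}(\mlvec{m}_l),\phi_l^{(2)}(\mlvec{m}_l)\}$ in $\complex^{2\times 2}$. Because $\mlvec{h}_l$ is a two-level Hermitian on a single qubit, it is unitarily equivalent to $\mlvec{Z}$; writing $\mlvec{m}_l = a\mlvec{I}+b\mlvec{X}+c\mlvec{Y}+d\mlvec{Z}$ in the Pauli basis yields $\phi_l^{(0)}(\mlvec{m}_l)=a\mlvec{I}+d\mlvec{Z}$, $\phi_l^{(1)}(\mlvec{m}_l)=b\mlvec{X}+c\mlvec{Y}$, and $\phi_l^{(2)}(\mlvec{m}_l)=c\mlvec{X}-b\mlvec{Y}$. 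The hypotheses $\mlvec{m}_l\pm\mlvec{h}_l\mlvec{m}_l\mlvec{h}_l\neq 0$ translate to $(a,d)\neq(0,0)$ and $(b,c)\neq(0,0)$; this forces $\phi_l^{(0)}\neq 0$, and the coefficient matrix relating $(\phi_l^{(1)},\phi_l^{(2)})$ to $(\mlvec{X},\mlvec{Y})$ has determinant $-(b^{2}+c^{2})\neq 0$, so $\phi_l^{(1)}$ and $\phi_l^{(2)}$ are linearly independent. Since $\phi_l^{(0)}$ commutes with $\mlvec{h}_l$ while $\phi_l^{(1)},\phi_l^{(2)}$ anti-commute with it, the three are jointly linearly independent.

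The final step is the standard fact that tensor products of linearly independent families are linearly independent, which upgrades the $p$-fold product of linearly independent local triples into a linearly independent set of $3^p$ operators; discarding the index $\mlvec{\xi}=\mlvec{0}$ leaves exactly the $3^p-1$ operators required by Definition~\ref{def:linear_indep}. Theorem~\ref{thm:construction} then applies directly. The only genuinely non-routine step is the second one, where one must check that the hypotheses on $\mlvec{m}_l\pm\mlvec{h}_l\mlvec{m}_l\mlvec{h}_l$ are exactly strong enough both to separate the $\pm 1$ eigenspaces of conjugation by $\mlvec{h}_l$ and, within the $-1$ eigenspace, to keep $\phi_l^{(1)}$ and $\phi_l^{(2)}$ pointing in different directions; the Pauli-basis calculation makes this transparent.
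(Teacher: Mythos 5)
Your proposal is correct and follows essentially the same route as the paper: both reduce the proposition to Theorem~\ref{thm:construction} via the tensor factorization $\Phi_{\mlvec{\xi}}(\mlvec{M}) = \otimes_{l=1}^p \phi_l^{(\xi_l)}(\mlvec{m}_l)$ and certify linear independence factor by factor. The only cosmetic difference is that the paper certifies independence through mutual Frobenius orthogonality of the $3^p$ operators (using $\langle \Phi_l^{(j)}(\mlvec{A}), \Phi_l^{(k)}(\mlvec{A})\rangle = 0$ for $j \neq k$), whereas you verify independence of each local triple by an explicit Pauli-basis computation and then invoke the fact that tensor products of linearly independent families are linearly independent; both arguments hinge on exactly the same non-vanishing hypotheses on $\mlvec{m}_l \pm \mlvec{h}_l\mlvec{m}_l\mlvec{h}_l$.
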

This follows from the fact that $\tr({\Phi}_{\mlvec{\xi}}(\mlvec{M})
{\Phi}_{\mlvec\xi^\prime}(\mlvec{M})) = 0$ if and only if $\mlvec{\xi} \neq \mlvec{\xi^\prime}$.
In \supref{sec:app_constructions},
we provide proof for Proposition~\ref{prop:onelayer} and several concrete
example QNNs to demonstrate that our construction can have local minima at arbitrary $\mlvec{\theta}$, and does not allow
trivial solutions such as coordinate-wise greedy optimization.

\section{Typical QNNs are with Linear Independence}
\label{sec:linear_indep}
In Section~\ref{sec:construction}, we provided a general condition
(Definition~\ref{def:linear_indep})
for QNNs to have exponentially many bad
local minima for some datasets.
In this section, we show that this condition is met for typical
under-parameterized QNNs. To see this, we consider the following
measure over instances of QNNs:
Let $\mlvec{H}$ be a $d$-dimensional Hermitian such that $\tr(\mlvec{H}) = 0$
and $\mlvec{H}^2 = \mlvec I$. A random circuit $\mlvec{U}(\mlvec{\theta})$ is
specified as
\begin{align}
  \mlvec{U}(\mlvec{\theta}) =
  e^{-i\theta_p\mlvec{W}_p\mlvec{H}\mlvec{W}_p^\dagger}
  \cdots
  e^{-i\theta_1\mlvec{W}_1\mlvec{H}\mlvec{W}_1^\dagger}
  \label{eqn:random_model}
\end{align}
with $\{\mlvec{W}_l\}_{l=1}^p$ independently sampled with respect to the Haar measure on the $d$-dimensional unitary
group $U(d)$.

Up to a unitary transformation, this random model is
equivalent to a circuit with $p$ interleaving parameterized gate
$\{e^{-i\theta_l\mlvec{H}}\}_{l=1}^p$ and unitary $\{\tilde{\mlvec W}_l\}_{l=1}^p$
randomly sampled with respect to the Haar measure:
\begin{align}
  \mlvec{U}(\mlvec{\theta}) = \tilde{\mlvec{W}}_pe^{-i\theta_p\mlvec{H}}\tilde{\mlvec{W}}_{p-1}\cdots\tilde{\mlvec{W}}_1e^{-i\theta_1\mlvec{H}}
  \label{eqn:random_model1}
\end{align}
The equivalence is due to the left (or right) invariance of the Haar measure. This interleaving nature of fixed and parameterized gates are shared by existing designs of QNNs, and any $p$-parameter QNN generated by two-level Hamiltonians can be expressed in Eqn.~(\ref{eqn:random_model1}). Moreover, applying polynomially many random 2-qubit gates on random pairs of qubits generates a distribution over gates that approximates the Haar measure up to the $4$-th moments \cite{brandao2016local}, which is what we require in this section.

The Gram matrix for the set $\{\Phi_{\mlvec\xi}(\mlvec M)\}_{\mlvec\xi\in\{0,1,2\}^p,
  \mlvec\xi\neq\mlvec{0}}$ is defined such that the element corresponding to
the pair $(\mlvec{\xi},\mlvec{\xi}^\prime)$ is
$\tr(\Phi_{\mlvec{\xi}}(\mlvec{M})\Phi_{\mlvec{\xi}^\prime}(\mlvec{M}))$. The
Gram matrix is always positive semidefinite, and a positive definite Gram matrix
implies the linear independence of the set.

Using the integral formula with respect to Haar measure on unitary groups
\cite{puchala2011symbolic}, we can
estimate the expectations and variances of the diagonal and off-diagonal
terms, and upper bound the probability of the event:
\begin{align}
  \exists \mlvec\xi: \tr(\Phi_{\mlvec{\xi}}(\mlvec{M})^2)
  \leq \sum_{\mlvec{\xi}^\prime\neq \mlvec{\xi}} |\tr(\Phi_{\mlvec{\xi}}(\mlvec{M})
  \Phi_{\mlvec{\xi}^\prime}(\mlvec{M}))|
\end{align}

Applying the Gershgorin circle theorem \cite{golub1996matrix}, we can lower bound the
probability for a random QNN to have linear independent terms:
\begin{restatable}[Typical under-parameterized QNNs are with linear independence]{theorem}{randomthm}
  \label{thm:random_qnn}
  Consider a random $p$-parameter $d$-dimensional QNN with two-level
  Hamiltonians sampled from the model specified in Eqn.~(\ref{eqn:random_model}).
  Let the observable $\mlvec{M}$ be an arbitrary non-zero trace-$0$ Hermitian.
  Such QNN is with linear independence with probability $\geq 1 - O(d^{-1})$ for fixed $p$, and with
  probability $\geq 1 - O(e^{-p})$ for dimension $d: \log(d) = \Theta(p)$.
\end{restatable}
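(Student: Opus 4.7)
The plan is to prove the claim by showing that, with probability at least $1 - O(d^{-1})$ (resp.~$1 - O(e^{-p})$ in the second regime), the Gram matrix $G_{\mlvec\xi,\mlvec\xi'} := \tr(\Phi_{\mlvec\xi}(\mlvec M)\Phi_{\mlvec\xi'}(\mlvec M))$ on the index set $\{0,1,2\}^p\setminus\{\mlvec 0\}$ is strictly diagonally dominant, and hence positive definite by Gershgorin, which implies linear independence of $\{\Phi_{\mlvec\xi}(\mlvec M)\}$. By the left/right invariance noted in the derivation of Eqn.~(\ref{eqn:random_model1}), it is equivalent to work with the dressed Hamiltonians $\mlvec H_l = \mlvec W_l \mlvec H \mlvec W_l^\dagger$ that are i.i.d.~Haar-conjugates of $\mlvec H$; each $G_{\mlvec\xi,\mlvec\xi'}$ is then a polynomial of degree at most $4$ in the $\mlvec W_l$'s, for which the Haar integrals reduce to Weingarten calculus on $U(d)$.

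The first computation is the expectation. Using the second-moment identity for Haar unitaries from \cite{puchala2011symbolic}, together with $\tr(\mlvec H)=\tr(\mlvec M)=0$ and $\mlvec H^2 = \mlvec I$, I expect $\EXP[\tr(\Phi_{\mlvec\xi}(\mlvec M)^2)]$ to factor across the $p$ independent $\mlvec W_l$'s, giving a positive combinatorial constant $c(|\mlvec\xi|,d)\,\tr(\mlvec M^2)$ that is bounded below by a polynomial-in-$1/d$ factor. For $\mlvec\xi\neq\mlvec\xi'$, at any coordinate $l$ on which $\xi_l\neq\xi_l'$ one of the three maps $\Phi^{(0)}_l, \Phi^{(1)}_l, \Phi^{(2)}_l$ applied to a factor produces a term orthogonal to the others after integrating over the corresponding $\mlvec W_l$, so the expectation vanishes.

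The next step is to control the variances via fourth-moment Weingarten calculus. I expect each entry $G_{\mlvec\xi,\mlvec\xi'}$ to satisfy $\mathrm{Var}(G_{\mlvec\xi,\mlvec\xi'}) = O(\tr(\mlvec M^2)^2 / d^2)$, up to constants that grow at most polynomially in $p$, because each additional random $\mlvec W_l$ contributes a suppression factor of $1/d$ through the leading Weingarten terms $\mathrm{Wg}(\mathrm{id},d)\sim d^{-2}$ relative to the ``disconnected'' contribution. Chebyshev together with a union bound over the $(3^p-1)^2$ pairs then gives a probability at most $O\!\bigl(9^p \cdot \tr(\mlvec M^2)^2/(d^2 \tau^2)\bigr)$ that some entry deviates by more than $\tau$ from its mean. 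Choosing $\tau$ of order $\tr(\mlvec M^2)/(d \cdot 3^p)$ makes the off-diagonal row-sum strictly smaller than the diagonal lower bound from the previous step, yielding Gershgorin dominance; the resulting failure probability is $O(9^{2p}/d^{0})$ after tuning, which we can bring down to $O(d^{-1})$ for fixed $p$ (large $d$) and to $O(e^{-p})$ when $\log d \geq C p$ for sufficiently large~$C$.

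The main obstacle I anticipate is the fourth-moment bookkeeping. The composition $\Phi_{\mlvec\xi}$ expands into a sum of $2^p$ conjugation patterns involving the $\mlvec H_l$'s (commutators and anti-commutators), so $G_{\mlvec\xi,\mlvec\xi'}^2$ expands into a sum of $O(4^p)$ traces, each of which must be integrated against the Haar measure by a Weingarten formula that itself sums over $(4!)^2 = 576$ permutation pairs per independent $\mlvec W_l$. Keeping track of which of these contributions survive with their leading $1/d^{\Theta(1)}$ scaling---and verifying that the leading-in-$1/d$ off-diagonal contributions really do cancel, rather than merely being small---is where the careful algebra lives, and is the step most likely to require the detailed computation deferred to the supplementary material.
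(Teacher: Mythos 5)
Your proposal follows essentially the same route as the paper's proof: reduce to positive definiteness of the Gram matrix $G_{\mlvec\xi,\mlvec\xi'}=\tr(\Phi_{\mlvec\xi}(\mlvec M)\Phi_{\mlvec\xi'}(\mlvec M))$, establish diagonal dominance via Gershgorin, compute first and second moments by Weingarten/Haar integration over the independent $\mlvec W_l$ (diagonal mean $\approx \tr(\mlvec M^2)/2^p$, off-diagonal mean $0$, variance suppressed in $1/d$), and finish with Chebyshev plus a union bound over the $O(9^p)$ pairs. The one quantitative slip is your choice of deviation threshold $\tau\sim\tr(\mlvec M^2)/(d\cdot 3^p)$: the extra $1/d$ cancels the $d$-dependence of the variance in Chebyshev and leaves a useless bound of order $9^{p}$ or worse; the threshold should instead be a constant fraction of the diagonal mean divided by the row length, i.e.\ $\tau\sim\frac{1}{3}\cdot\frac{\tr(\mlvec M^2)}{2^p\cdot 3^p}$ with no $1/d$, as in the paper, after which the stated failure probabilities $O(d^{-1})$ (fixed $p$) and $O(e^{-p})$ (for $\log d=\Theta(p)$) follow.
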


Please refer to \supref{sec:app_linear_indep} for the full proof.

\section{Upper Bound on the Number of Local Minima}
\label{sec:upperbound}
Our construction above possesses $2^p$ local minima for $p$ parameters, whereas
the classical work of \citet{auer1996exponentially} demonstrates a construction
for a single neuron with $\lfloor m/p\rfloor^p$ local minima for $m$ training samples.
Note that the latter could grow unboundedly with $m$.
In this section, we show, however, this classical unbounded growth of local minima does not hold for QNNs. 
In fact, we could establish an almost matching upper bound for $2^p$. All the formal proofs are deferred to \supref{sec:app_upperbound}.

To that end, let us examine the \emph{Fourier} expansion of the loss function
$L(\mlvec{\theta}, \S)$ (Eqn.~(\ref{eqn:loss})).
Let $T_l$ be the period of $L(\mlvec{\theta}; \S)$ corresponding to $\theta_l$, and 
$\hat{L}(\mlvec k)$ the Fourier coefficient for  $\mlvec k = (k_1, \cdots, k_p)^T\in \mathbb{Z}^p$. We have
\begin{align}
  L(\mlvec\theta;\S)
  = \sum_{\mlvec k\in K}\hat{L}(\mlvec k)\prod_{l=1}^p\Bigl(\cos\frac{k_l\theta_l}{T_l}+ i\sin\frac{k_l\theta_l}{T_l}\Bigr)
\end{align}
where $K \subseteq \mathbb{Z}^p$ is the support of the Fourier coefficients.

One critical observation is that,
for arbitrary choice of two-level $\{\mlvec{H}_l\}_{l=1}^p$, observable $\mlvec{M}$ and
training set $\S$, the support $K$ of the Fourier spectrum is bounded in
$\ell_1$-norm: $\max_{\mlvec k\in K}\sum_{l=1}^p|k_l| \leq 2p$, indicating that
the Fourier degree of $L(\mlvec{\theta};\S)$ is upper
bounded by $2p$ (See \supref{subsec:app_ft}). 

By definition, a local minimum must be a critical point, hence it suffices to bound
the number of critical points for functions with Fourier spectrum supported on a $\ell_1$-bounded set.
Define $G_l(\mlvec\theta)$ as $\frac{\partial}{\partial\theta_l}L(\mlvec\theta;\S)$:
\begin{align}
  G_l(\mlvec\theta)
  = &\sum_{\mlvec k\in K}k_l\hat{L}(\mlvec k)\bigl(-\sin\frac{k_l\theta_1}{T_l}+ i\cos\frac{k_l\theta_l}{T_l}\bigr)\\ 
                     &\quad\cdot\prod_{l^\prime\neq l} \bigl(\cos\frac{k_{l^\prime}\theta_{l^\prime}}{T_{l^\prime}}+ i\sin\frac{k_{l^\prime}\theta_{l^\prime}}{T_{l^\prime}}\bigr)
\end{align}
Notice that the Fourier spectrum of $G_l$ is supported on the same set $K$.
A critical point of $L(\mlvec{\theta};\S)$ must satisfy that for all
$l\in[p]$, $G_l(\mlvec\theta) = 0$.
By basic trigonometry, $\cos k\theta$ can be expressed as a degree-$k$ polynomial of
$\cos\theta$ and $\sin k\theta$ as a degree-$(k-1)$ polynomial of $\cos\theta$ multiplied by $\sin\theta$. Consider the change of variable
\begin{align}
  c_l = \cos (\theta_l/T_l),\ s_l = \sin({\theta_l}/{T_l}),\
  \forall l\in[p].
\end{align}
Let $g_l(c_1, s_1, \cdots, c_p, s_p)$ be the multivariate polynomial constraints
corresponding to $G_l(\mlvec\theta)$ after the change of variable. For each
$g_l$, the sum of degrees of $c_{l^\prime}$ and $s_{l^\prime}$ is bounded by
$\max_{\mlvec{k}\in K}|k_{l^\prime}|$, and the degree $\mathsf{deg}(g_l)$ of
$g_l$ is bounded by $\max_{\mlvec{k}\in}\sum_{l=1}^p|k_l|\leq 2p$.
The change of variable is one-to-one from $\theta_l\in[0,T_l)$ to a pair of $(c_l,
s_l)\in\real^2$ under the constraint $c_l^2 + s_l^2 = 1$. Therefore, it suffices
to count the number of roots of the polynomial system with $2p$ parameters
and $2p$ constraints: 
\begin{align}
  g_l(c_1, s_1, \cdots, c_p, s_p) = 0,\ c_l^2 + s_l^2 - 1 = 0
\end{align}
for all $l\in[p]$.
Notice that for a general polynomial system, the number of critical points can be
unbounded. For example, consider a system composed of constant polynomials,
every point in the domain is a critical point. This corresponds to the constant
loss function, where the gradients vanish everywhere with positive
semidefinite Hessians. For this reason, we will focus on the non-degenerated case
with finitely many local minima. Under the premise of non-degeneracy, 
by B\'ezout's Theorem (e.g. Section 3.3 in \citet{cox2006using}), the number of
roots can be bounded by
the product of the degree of polynomial constraints
$2^p\mathsf{deg}(g_1)\mathsf{deg}(g_2)\cdots\mathsf{deg}(g_p) \leq (4p)^p$.
A formal statement of the above derivation is as follows:
\begin{restatable}[Upper bound: the number of local minima]{theorem}{upperthm}
  \label{thm:upper}
  Consider non-degenerated QNNs composed of unitaries generated by two-level Hamiltonians $\{H_l\}_{l=1}^p$ with $p$ parameters. For 
  training set $\S$, within each period, the loss function $L(\mlvec\theta; \S)$ possesses at most
  $(4p)^p$ local minima.
\end{restatable}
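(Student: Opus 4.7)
The plan is to reduce counting local minima of $L(\mlvec\theta;\S)$ to counting isolated common zeros of a polynomial system and then invoke B\'ezout's Theorem, following the sketch already provided in the excerpt. First, I would write out the Fourier expansion: each gate $\mlvec V_l(\theta_l)=e^{-i\theta_l\mlvec H_l}$ with $\mlvec H_l^2=\mlvec I$ equals $\cos\theta_l\,\mlvec I-i\sin\theta_l\,\mlvec H_l$, so the per-sample output $\tr(\mlvec\rho_k\mlvec M(\mlvec\theta))$ takes the form of Eqn.~(\ref{eqn:1sample}), a trigonometric polynomial of per-coordinate frequency at most $2$. Squaring and averaging yields $L(\mlvec\theta;\S)$ as a trigonometric polynomial whose Fourier support $K\subset\mathbb{Z}^p$, with respect to the natural periods $T_l$, satisfies $\max_{\mlvec k\in K}\sum_l|k_l|\leq 2p$; this uniform $\ell_1$-degree bound is the crucial ingredient. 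I would prove it by tracking how the $p$ tensor factors $\cos 2\theta_l,\sin 2\theta_l$ combine under the squaring $f^2$ and checking that, after identifying $T_l$ with the fundamental period, each coordinate contributes at most $2$ to the integer-valued frequency.

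Next, since every local minimum is a critical point, it suffices to bound common zeros of the partial derivatives $G_l:=\partial_l L$, which inherit the same Fourier support $K$. I would introduce the change of variables $c_l=\cos(\theta_l/T_l)$ and $s_l=\sin(\theta_l/T_l)$ and use the Chebyshev-type identities ($\cos k\phi$ is a polynomial of degree $k$ in $(\cos\phi,\sin\phi)$, and similarly for $\sin k\phi$) to convert $G_l(\mlvec\theta)=0$ into a polynomial equation $g_l(c_1,s_1,\ldots,c_p,s_p)=0$ of total degree at most $\sum_{l'}|k_{l'}|\leq 2p$. Augmenting with the circle constraints $h_l(c_l,s_l):=c_l^2+s_l^2-1=0$ preserves the bijection between $\theta_l\in[0,T_l)$ and points on $S^1$, giving $2p$ polynomial equations in $2p$ unknowns. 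The non-degeneracy hypothesis ensures that every critical point is isolated (equivalently, the affine variety is zero-dimensional), so B\'ezout's Theorem (Section~3.3 of \citet{cox2006using}) applies and yields $\prod_l\deg(g_l)\cdot\prod_l\deg(h_l)\le(2p)^p\cdot 2^p=(4p)^p$, which bounds the number of local minima per period.

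The main obstacle is the tight $\ell_1$-Fourier degree bound of $2p$; the B\'ezout step itself is routine once we are in the polynomial regime. Tracking how frequencies interact when multiplying and squaring trigonometric polynomials in $p$ variables requires careful bookkeeping, and one must verify that the period convention is consistent so that the $k_l$'s are indeed integers. A secondary subtlety is the non-degeneracy assumption: without it, $L$ could be constant along some subtorus and the variety could have positive-dimensional components, in which case B\'ezout only controls isolated zeros and the count of local minima would not be bounded. Non-degeneracy rules this out since positive-definite Hessians at minima force them to be isolated, making the polynomial count directly applicable.
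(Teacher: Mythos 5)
Your proposal follows essentially the same route as the paper's proof: the same $\ell_1$ Fourier-degree bound $\max_{\mlvec k\in K}\sum_l|k_l|\le 2p$ for the loss (and hence for each $G_l$), the same Chebyshev change of variables to $(c_l,s_l)$ with the circle constraints $c_l^2+s_l^2=1$, and the same application of B\'ezout's theorem under the non-degeneracy hypothesis, giving $2^p\cdot(2p)^p=(4p)^p$. The argument is correct and no gaps are apparent.
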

We also prove a similar result for the more general case where the generators are
Hamiltonians with integral spectrum: let $\Delta$ be the largest eigen-gap for
each of the Hamiltonians, the number of local minima within each period is
upper bounded by $O((\Delta p)^p)$. Please refer to \supref{sec:app_upperbound} for details. 

\section{Experiments}
\label{sec:experiments}
We investigate the practical performance of the common optimizers on our construction in this section. 
It is well-known in the classical literature that the existence of spurious
local minima does not necessarily cause difficulties in optimization (e.g.,~\citet{ge2017optimization}). 
We show, however, our constructions can indeed be hard instances for training in practice.

To that end, we evaluate a specific construction from
Proposition~\ref{prop:onelayer} in Section~\ref{sec:construction} by using the
standard optimizers with randomly initialized parameters uniformly sampled from
the domain\footnote{For $p$-parameter instances, we uniformly sample the initial parameters from
$[0,2\pi)^p$.}, and visualize the distribution of function values at
convergence.
For $p$-parameter instances, our construction involves $p$-qubits. We choose
$\mlvec{h}_1 = \cdots = \mlvec{h}_p = \mlvec{Z}$ and
$\mlvec{m}_1=\cdots=\mlvec{m}_p = \mlvec{Y} + \mlvec{I}$. The specific form of the instance and all the training details are provided in \supref{sec:app_num}.

\paragraph{Implementation}The experiments are run on Intel Core i7-7700HQ
Processor (2.80GHz) with 16G memory. 
We classically simulate the training with
Pytorch \cite{paszke2019pytorch}, using the analytical form of the objective
function for the purpose of efficiency.

\paragraph{Optimizers}
The QNN instances are trained with three popular optimizers
in classical optimization or machine learning: \textsf{Adam}\cite{kingma2014adam},
\textsf{RMSProp}\cite{bengio2015rmsprop}, and \textsf{L-BFGS}\cite{liu1989limited}. The first two
methods~\cite{kingma2014adam,bengio2015rmsprop} are variants of vanilla gradient descent with adaptive learning rate.
and are widely used for training large-scale deep neural networks as well as for
the quantum counterparts \cite{killoran2019continuous,
  mari2019transfer,lloyd2020quantum, ostaszewski2019qcsl, sweke2019stochastic}.
The last method~\cite{liu1989limited} is an efficient implementation of the approximate Newton method
that utilizes the second-order information (i.e. the Hessian). For all instances
and optimizers, we use the exact gradient induced by the dataset without
stochasticity from the mini-batched gradient descent.

It turns out, for all the examined instances and all three optimizers, under random
initialization, the optimizations converge to local minima with
non-negligible suboptimality (i.e., different from the global one by a non-negligible amount) with high probability. In Figure~\ref{fig:dist0}, we train the $4$-parameter construction
with \textsf{RMSProp} and repeat for 100 times. Let $\mlvec\theta_i$ and $\mlvec\theta_f$
denote the parameters at initialization and at convergence. The function values at initialization
$L(\mlvec{\theta}_i;\S)$ are supported on a continuous spectrum as shown in gray. After training and converging with
\textsf{RMSProp}, 
the function values $L(\mlvec{\theta}_f;\S)$ fall into discretized
values as shown in orange.
The smallest training loss attainable in our construction is $0$,
therefore only the leftmost bar (to the left of the dotted \textbf{black} vertical
line) corresponds to the
global minimum.
Namely, the success probability of converging to the global minimum is very small. 
A similar phenomenon persists for instances with more parameters and with different optimizers in Figure~\ref{fig:dist}.
As the number of bad local minima grows exponentially in our construction, the
success probability should also in theory decay exponentially. This is
empirically confirmed in Figure~\ref{fig:decay}, where we illustrate the precise
empirical success probability for all these tests. Moreover, as shown in
\supref{subsec:app_noise}, the tendency of exponential decay remains unchanged
in the presence of label noises, indicating the robustness of our constructions. 

\begin{figure}[!htbp]
  \centering
    \includegraphics[width=.8\linewidth]{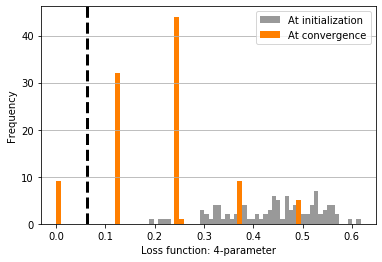}
  \caption{
    Loss functions at random initialization and at
    convergence for $4$-parameter instances trained with \textsf{RMSProp}, repeated
    for 100 times. The function values are supported on a continuous spectrum at
    initialization as plotted in \textbf{\color{gray} gray} and converge to discretized values as
    plotted in \textbf{\color{orange} orange}. }
  \label{fig:dist0}
\end{figure}
\begin{figure}[!htbp]
  \centering
    \subfigure{
    \includegraphics[width=.8\linewidth]{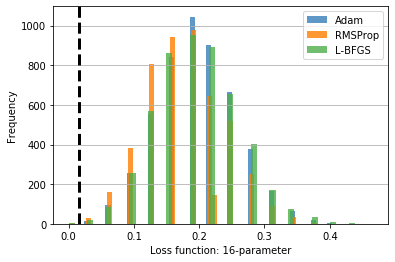}
  }
  \caption{
    Distributions of loss functions at convergence for instances with
    $16$ parameters trained with \textsf{Adam}, \textsf{RMSProp} and \textsf{L-BFGS},
    repeated 5000 times with uniformly random initialization. 
    All methods fail to converge to the global minimum 0.0 with high probability.}
  \label{fig:dist}
\end{figure}

\begin{figure}[!htbp]
  \centering
    \includegraphics[width=.8\linewidth]{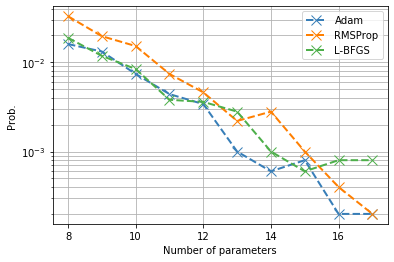}
    \caption{
      The decay of success rate for finding the global minimum under random
      initialization with \textsf{Adam}, \textsf{RMSProp}, \textsf{L-BFGS}. For each data
      point, we repeat the experiments for 5000 times. 
  }
  \label{fig:decay}
\end{figure}

\paragraph{Beyond the constructed datasets}
To demonstrate the generality of our results, we repeat the experiments
for datasets with more practical significance: for $p$-parameter
instances, we choose the input state to be a $p$-qubit encoding of
$\mlvec{x}\in[0,2\pi)^{2p}$ via $\mlvec{X}$- and $\mlvec{Y}$-rotations on each of the qubits.
The associated label is either $1$ or $0$, depending on the sign of
$\mlvec{w}^T\mlvec{x}$, with $\mlvec{w}$ being the normal vector of a hyperplane in $\real^{2p}$.
These datasets have the interpretation as an encoding of a linearly
separable classical concept. In Figure\ref{fig:8-qubit-common}, we plot the function
values at convergence for an $8$-parameter instance: no more than 4 of the 70 random initializations have reached the
global minima. This is repeated for instances with $2,4$ and $6$ qubits. While we no
longer have a clear exponential dependency in the success rate, the number of local
minima increases significantly as the number of parameters increases (see
\supref{subsec:app_common_datasets}). 
This observation suggests that our theory and experiments on the constructed 
datasets can capture the practical difficulty in training under-parameterized
QNNs with gradient-based methods.

\begin{figure}[!htbp]
  \centering
  \includegraphics[width=\linewidth]{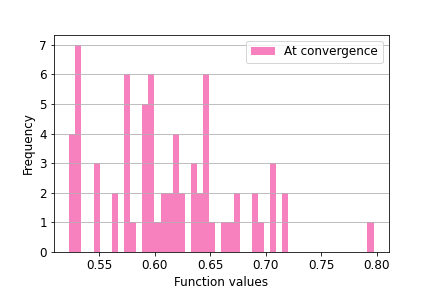}
  \caption{Function values at convergence for training an 8-parameter instance
    with \textsf{RMSProp} on the linearly-separable classical concept. No more than 4 among the 70 random
    initializations find the global minima, indicating the existence of many
    sub-optimal local minma.}
  \label{fig:8-qubit-common}
\end{figure}

\section{Conclusion}
\label{sec:conclusion}
In this work, we provide a characterization of the landscape for
under-parameterized QNNs, by showing that in the worst-case, the number of local
minima can increase exponentially with the number of parameters. Supported by
numerical simulations, our result suggests when under-parameterized, QNNs may not be efficiently solved by
gradient-based black-box methods.

This work leaves several open questions:
\begin{itemize}
\item Given the knowledge of the data distribution, can we design a QNN architecture with a benign landscape?
\item We know that when sufficiently parameterized (e.g.
  \cite{russell2016quantum}), the landscape for optimizing variational quantum ansatz can be benign. It is therefore natural
  to ask, fixing the system size, how does the landscape change as the number of
  parameters increases?
\item
  Classically, despite the provable bad landscape of shallow neural
  networks(e.g. \citet{safran2017spurious}), \citet{goel2019learning}
  came up with algorithms that can minimize the loss with guarantees. Can we design an algorithm (beyond gradient-based method) that can solve the
  optimization problem efficiently and provably?
 \end{itemize}

\section*{Acknowledgements}
We thank reviewers for useful comments.
This work received support from the U.S. Department of Energy,
Office of Science, Office of Advanced Scientific Computing
Research, Accelerated Research in Quantum
Computing and Quantum Algorithms Team programs, as well as the U.S. National Science Foundation grant CCF-1755800, CCF-1816695, and CCF-1942837
(CAREER).

\bibliographystyle{icml2021}

\clearpage

\appendix
\renewcommand{\theequation}{\thesection.\arabic{equation}}
\setcounter{equation}{0}
\section{Preliminaries}
\label{sec:app_prelim}
In this section we introduce necessary backgrounds on quantum information and
linear algebra. For further readings, we refer the readers to the books by
Nielsen and Chuang~\cite{nielsen2002quantum} and by Watrous~\cite{watrous2018theory}.
\subsection{Linear Algebra}
\textbf{Basic notions.}  We use $\mlvec{I}_{d\times d}$ to denote the identity
matrix in $\complex^{d\times d}$. The subscripts are omitted when there is
no ambiguity on the dimension. Let $\dagger$ denote the conjugate transpose of
complex matrices.
A matrix $\mlvec{U}\in\complex^{d\times d}$ is said to be a unitary if
$\mlvec{U}\mlvec{U}^\dagger = \mlvec{I}_{d\times d}$.
A matrix is said to be Hermitian if its conjugate transpose is itself:
$\mlvec{A} = \mlvec{A}^\dagger$. Let $[\cdot, \cdot]$ denote the commutator of
two matrices, such that $[\mlvec{A},\mlvec{B}]:=\mlvec{A}\mlvec{B}-\mlvec{B}\mlvec{A}$.

\textbf{Inner product and adjoints.} The inner product between two Hermitians
$\mlvec{A}$ and $\mlvec{B}$ is defined as:
\begin{align}
  \<\mlvec A, \mlvec B\> := \tr(\mlvec{A}\mlvec{B})
\end{align}
For a linear map $\Phi(\cdot)$, the adjoint of the linear map $\Phi^*(\cdot)$
is defined such that
\begin{align}
  \<\mlvec{A}, \Phi(\mlvec{B})\> = \<\Phi^*(\mlvec{A}), \mlvec{B}\>
\end{align}
The Frobenius norm of a Hermitian $\mlvec{A}$ is defined as:
\begin{align}
  \|\mlvec{A}\|_F := \sqrt{\<\mlvec{A}, \mlvec{A}\>}
\end{align}
\textbf{Exponent of matrices.}
The exponent of a Hermitian $\mlvec{A}$ is defined using Taylor expansion
\begin{align}
\exp(\mlvec{A}):=\sum_{k=0}^{\infty}\frac{{\mlvec{A}}^{k}}{k!},
\end{align}
Let $\{E_j\}_{j=1}^d$ and $\{\mlvec{u}_j\}_{j=1}^d$ be the eigenvalues and
eigenvectors of $\mlvec{A}$,
We have
\begin{multline}
  \exp(\mlvec{A}):=\sum_{k=0}^{\infty}\frac{{\mlvec{A}}^{k}}{k!}\\
  = \sum_{k=0}^\infty\sum_{j=1}^d\frac{1}{k!} E_j^k \mlvec{u}_j\mlvec{u}_j^\dagger
  = \sum_{j=1}^d e^{E_j}\mlvec{u}_j\mlvec{u}_j^\dagger,
\end{multline}
For $\theta\in\real$ and Hermitian $\mlvec{H}$, $\exp(-i\theta\mlvec{H})$ is
unitary, since:
\begin{align}
  \exp(-i\theta\mlvec{H})^\dagger
  &= \Bigl( \sum_{k=0}^\infty \frac{\bigl( -i\theta \mlvec H \bigr)^k}{k!} \Bigr)^\dagger \\
  &= \sum_{k=0}^\infty \frac{\bigl( i\theta \mlvec H \bigr)^k}{k!}
  = \exp(i\theta\mlvec{H}),
\end{align}
and
\begin{align}
  &\exp(i\theta\mlvec{H})\exp(-i\theta\mlvec{H})
  =\bigl( \sum_{j=1}^d e^{i\theta E_j}\mlvec{u}_j\mlvec{u}_j^\dagger\bigr)\\
  =&\bigl( \sum_{k=1}^d e^{-i\theta E_k}\mlvec{u}_k\mlvec{u}_k^\dagger\bigr)
  =\mlvec{I}_{d\times d}.
\end{align}

\textbf{Kronecker product.}
For $\mlvec{A}\in\complex^{d_1\times d_2}$ and $\mlvec{B}\in\complex^{d_2\times
  d_2}$, the Kronecker product of $\mlvec{A}$ and $\mlvec{B}$ is defined as:
\begin{align}
  \mlvec{A}\otimes \mlvec{B} =
  \begin{bmatrix}
    A_{11}\mlvec{B}   & A_{12}\mlvec{B}   & \cdots & A_{1d_1}\mlvec{B}  \\
    A_{21}\mlvec{B}   & A_{22}\mlvec{B}   & \cdots & A_{2d_1}\mlvec{B}  \\
    \vdots            & \vdots           & \ddots & \vdots            \\
    A_{d_11}\mlvec{B} & A_{d_1 2}\mlvec{B} & \cdots & A_{d_1d_1}\mlvec{B}
  \end{bmatrix}
\end{align}
where $A_{ij}$ is the $(i,j)$-th element of matrix $\mlvec{A}$.
As can be seen from the definition $\mlvec{A}\otimes\mlvec{B}\in\complex^{d_1d_2\times d_1 d_2}$. We also use the
symbol $\otimes$ for direct product of Hilbert spaces depending on the context.

By definition, the trace of $\mlvec{A}\otimes\mlvec{B}$
\begin{align}
  \tr\bigl(  \mlvec{A}\otimes\mlvec{B}\bigr) = \sum_{j=1}^{d_1} A_{jj} \tr(\mlvec{B}) =  \tr(\mlvec{A})\tr(\mlvec{B})
\end{align}

\subsection{Quantum Information}
We now describe notions in quantum information using linear algebra. We start by
formulating quantum mechanics in the language of density matrices.

\textbf{Quatum states.}
A \emph{quantum state} with dimension $d$ is can be represented by a positive
semidefinite (PSD) Hermitian $\mlvec{\rho} \in\complex^{d\times d}$, the \emph{density
  matrix}. For any PSD Hermitian $\mlvec{\rho}$ with $\tr(\mlvec{\rho}) = 1$,
there is a correponding quantum state, and vice versa.

Quantum states with rank-$1$ density matrices are referred to as \emph{pure
  states}. For a pure state, its density matrix $\mlvec{\rho}$ allows a
eigen-decompostion $\mlvec{\rho} = \mlvec{v}\mlvec{v}^\dagger$ with $\mlvec{v}\in\complex^d$
being a $\ell_2$-unit vector.
$\mlvec{v}$ is referred to as the state vector representation of a
pure quantum state.

The  state space of a system composed of two subsystem with dimension
$d_1$ and $d_2$ has a dimension of $d_1 d_2$. And a density matrix for a such
composite system lies in $\complex^{d_1\times d_1}\otimes\complex^{d_2\times
  d_2}$. If a state can be expressed as a Kronecker product of density matrices,
we say it is a \emph{product state}. 
A simplistic example is $\mlvec{\rho}_{12} := \mlvec{\rho}_1\otimes \mlvec{\rho}_2$ with
$\mlvec{\rho}_1\in\complex^{d_1}$ and $\mlvec{\rho}_2\in\complex^{d_2}$.

Analogous to classical binary bits, the basic element for
quantum computers are \emph{qubits}. The state space of a single qubit is
$2$-dimensional, and the density matrix for a system composed of $n$ qubits lies
in $\otimes_{i=1}^n\complex^{2\times 2}$.

\textbf{Unitary gates.}
An operation over a quantum state is a linear map that is completely positive and
preserves the trace (See \cite{watrous2018theory} for a rigorous definition). Throughout this
paper we focus on \emph{unitary operators}. In the context of quantum circuit models,
operations are also referred to as \emph{gates}.

Under the density matrix representation, a unitary gate, denoted by
$\mlvec{U}\in\complex^{d\times d}$, transforms a state
$\mlvec{\rho}\in\complex^{d\times d}$ to $\mlvec{\rho}^\prime =
\mlvec{U}\mlvec{\rho} \mlvec{U}^\dagger$. The positive semidefiniteness and the
trace are preserved. For a pure state $\mlvec{\rho} =
\mlvec{v}\mlvec{v}^\dagger$, under the state vector representation, the unitary
gate $\mlvec{U}$ transforms $\mlvec{v}$ to $\mlvec{v}^\prime = \mlvec{U}\mlvec{v}$.

A set of unitary gates commonly used on a qubit are the \emph{Pauli
gates}: 
\begin{align}\label{eq:paulis}
   \mlvec{X} =
    \begin{bmatrix}
      0 & 1 \\
      1 & 0
    \end{bmatrix},\
  \mlvec{Y} =
    \begin{bmatrix}
            0 & -i \\
            i & 0 \\
          \end{bmatrix},\
  \mlvec{Z} =
    \begin{bmatrix}
      1 & 0 \\
      0 & -1 \\
    \end{bmatrix};
\end{align}
Note that the Pauli gates are both unitary and Hermitian.

We are especially interested in unitary gates parameterized by $\theta\in\real$
as $\exp(-i\theta\mlvec{H})$ for a Hermitian $\mlvec{H}$. We sometimes
refer to $\mlvec{H}$ as the \emph{Hamiltonian} that generates the parameterized quantum gate.
For any $\lambda\in\real$, shifting $\mlvec{H}$ by $\lambda\cdot\mlvec{I}$ does
not change the parameterized gate. To see this, for any $\mlvec{\rho}$, consider $\tilde{\mlvec{H}} =
\mlvec{H} + \lambda\mlvec{I}$:
\begin{align}
  {e^{-i\theta\tilde{\mlvec{H}}}}\mlvec{\rho}{e^{i\theta\tilde{\mlvec{H}}}}
  &= e^{-i\theta\lambda}\cdot{e^{-i\theta{\mlvec{H}}}}\mlvec{\rho}{e^{i\theta{\mlvec{H}}}}\cdot e^{i\theta\lambda}\\
  &= {e^{-i\theta{\mlvec{H}}}}\mlvec{\rho}{e^{i\theta{\mlvec{H}}}}.
\end{align}

\paragraph{Quantum measurements and observables} A measurement of quantum states
is specified by 
a set of matrices $\{\mlvec{M}_m\}\subset \complex^{d\times d}$ with
$\mlvec{M}_m^\dagger\mlvec{M}_m = \mlvec{I}_{d\times d}$, and
a set of outcomes $\{\lambda_m\}\subset \real$. 
Such measurements on the density matrix $\mlvec{\rho}$ yield the outcome
$\lambda_m$ with probability $\tr\bigl( \mlvec{M}_m^\dagger\mlvec{M}_m\mlvec\rho \bigr)$.
The probabilities are normalized
\begin{align}
  &\sum_m\tr\bigl( \mlvec{M}_m^\dagger\mlvec{M}_m\mlvec\rho \bigr)\\
  =&\tr\bigl( \sum_m\mlvec{M}_m^\dagger\mlvec{M}_m\mlvec\rho \bigr)\\
  =&\tr\bigl(\mlvec\rho \bigr) = 1.
\end{align}
In this paper we mainly focus on the expected outcome of a measurement. Let
$\mlvec{M}$ denote the \emph{observable} $\sum_m \lambda_m
\mlvec{M}_m^\dagger\mlvec{M}_m$, the expected value of the outcome is  
\begin{align}
  \sum_m\lambda_m\tr\bigl( \mlvec{M}_m^\dagger\mlvec{M}_m\mlvec\rho \bigr)
&  =\tr\bigl( \sum_m\lambda_m\mlvec{M}_m^\dagger\mlvec{M}_m\mlvec\rho \bigr)\\
&  =\tr\bigl( \mlvec{M} \mlvec\rho \bigr).
\end{align}

\setcounter{equation}{0}
\section{Proofs for Constructions}
\label{sec:app_constructions}
In this section, we provide the detailed proof for the existence of constructions appeared in
Section~\ref{sec:construction}.

We start by recalling the definitions of $\Phi_{l}^{(j)}(\cdot)$ and summarize
some useful facts about these linear maps in
Subsection~\ref{subsec:app_properties}, as well as the observables in the
Heisenberg picture.
In Subsection~\ref{subsec:app_construction_thm} we elaborate on our result on
the existence of hard datasets for $p$-parameter QNNs with linear independence, which leads to the
following theorem:
\constructthm*
In Subsection~\ref{subsec:app_construction_prop} we identify a family of 1-layer
QNNs with linear independence, which gives rise to Proposition~\ref{prop:onelayer}.
In addition, we instantiated concrete datasets to illustrate the generality of
our construction. 
\subsection{Linear Maps $\Phi_{l}^{(j)}(\cdot)$}
\label{subsec:app_properties}
Recall the definitions of  $\Phi_{l}^{(j)}$:
For all $l\in[p]$, define $\Phi_{l}^{(j)}$ such that for any Hermitian $\mlvec{A}$:
\begin{align}
  \Phi_{l}^{(0)}(\mlvec{A}) &= \frac{1}{2}(\mlvec{A} + \mlvec{H}_l \mlvec{A} \mlvec{H}_l)\\
  \Phi_{l}^{(1)}(\mlvec{A}) &= \frac{1}{2}(\mlvec{A} - \mlvec{H}_l \mlvec{A} \mlvec{H}_l)\\
  \Phi_{l}^{(2)}(\mlvec{A}) &= \frac{i}{2}[\mlvec{H}_l, \mlvec{A}]
\end{align}
The subscript $l$ will be dropped for general $\mlvec{H}$.

It can be easily verified that these mappings maps Hermitians to Hermitians, and
the traces of the output:
\begin{align}
  \tr\left(\Phi^{(0)}_l(\mlvec{A})\right) &= \tr\left(\frac{\mlvec{A} + \mlvec{H}_l\mlvec{A}\mlvec{H}_l}{2}\right) = \tr(\mlvec{A})\\
  \tr\left(\Phi^{(1)}_l(\mlvec{A})\right) &= \tr\left(\frac{\mlvec{A} - \mlvec{H}_l\mlvec{A}\mlvec{H}_l}{2}\right) = 0\\
  \tr\left(\Phi^{(2)}_l(\mlvec{A})\right) &= \tr\left(\frac{i[\mlvec{H}_l, \mlvec{A}]}{2}\right) = 0
\end{align}

\paragraph{Adjoint} The adjoints of the maps are:
\begin{align}
  \left(\Phi^{(0)}_l\right)^*(\mlvec{A}) &= \Phi^{(0)}_l(\mlvec{A}) \\
  \left(\Phi^{(1)}_l\right)^*(\mlvec{A}) &= \Phi^{(1)}_l(\mlvec{A}) \\
  \left(\Phi^{(2)}_l\right)^*(\mlvec{A}) &= -\Phi^{(2)}_l(\mlvec{A}) 
\end{align}
For all pairs of $j,k\in\{0,1,2\}$, we summarize the compostition of mappings
$(\Phi^{(j)}_l)^*\circ \Phi^{(k)}_l (\cdot)$  in Table~\ref{table:adjoints}, and
the inner products $\<\Phi^{(j)}_l(\mlvec{A}), \Phi^{(k)}_l(\mlvec{A})\>$ in Table~\ref{table:inner-product}.
\begin{table}[!htp]
  \caption{Composition $(\Phi^{(j)}_l)^*\circ \Phi^{(k)}_l (\cdot)$}
  \label{table:adjoints}
  \centering
  \begin{tabular}{|l|l|l|l|}
    \hline
    j\textbackslash{}k & 0 & 1 & 2 \\ \hline
    0 & $\Phi^{(0)}_l(\cdot)$ & $0$ & $0$ \\ \hline
    1 & $0$ & $\Phi^{(1)}_l(\cdot)$ & $\Phi^{(2)}_l(\cdot)$ \\ \hline
    2 & $0$ & $-\Phi^{(2)}_l(\cdot)$ & $\Phi^{(1)}_l(\cdot)$ \\ \hline
  \end{tabular}
\end{table}
\begin{table}[!htp]
  \caption{Inner product $\<\Phi^{(j)}_l(\mlvec{A}), \Phi^{(k)}_l (\mlvec{A})\>$}
  \label{table:inner-product}
  \centering
  \begin{tabular}{|l|l|l|l|}
    \hline
    j\textbackslash{}k & 0 & 1 & 2 \\ \hline
    0 & $\<\mlvec{A},\Phi^{(0)}_l(\mlvec{A})\>$ & $0$ & $0$ \\ \hline
    1 & $0$ & $\<\mlvec{A},\Phi^{(1)}_l(\mlvec{A})\>$ & $0$ \\ \hline
    2 & $0$ & $0$ & $\<\mlvec{A}, \Phi^{(1)}_l(\mlvec{A})\>$ \\ \hline
  \end{tabular}
\end{table}

All off-diagonal elements are zero in Table~\ref{table:inner-product}, implying
the orthogonality of $\{\Phi_{l}^{(j)}\}_{j=0,1,2}$. This
follows from the fact
\begin{align}
  \<\mlvec{A}, \Phi_l^{(2)}(\mlvec{A})\> = \frac{i}{2}(\tr(\mlvec{AHA}) - \tr(\mlvec{A}^2\mlvec{H})) = 0
\end{align}
We are now ready to prove the expansion of the observable in Heisenberg picture:
\begin{claim}[Observable in Heisenberg Picture]
  \label{claim:heisenbergpic}
  For quantum neural networks defined in Theorem~\ref{thm:construction}, the
  observable in Heisenberg picture
  $\mlvec{U}(\mlvec\theta)^\dagger\mlvec{M}\mlvec{U}(\mlvec{\theta})$ can be
  expressed as:
  \begin{align}
    \sum_{\mlvec{\xi}\in\{0,1,2\}^p}\Phi_{\mlvec{\xi}}(\mlvec{M})
    \prod_{l:\xi_{l}=1}\cos2\theta_{l}
    \prod_{l^\prime:\xi_{l^\prime}=2}\sin2\theta_{l^\prime}
  \end{align}
  where $\Phi_{\mlvec{\xi}}$ is defined as the following composed mapping:
  \begin{align}
    \Phi_1^{(\xi_1)}\circ \Phi_2^{(\xi_2)}\circ \cdots\circ
    \Phi_p^{(\xi_p)}
  \end{align}
\end{claim}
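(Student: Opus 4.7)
My plan is to prove the claim by first establishing the single-layer conjugation identity and then iterating it over all $p$ layers.

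The starting observation is that for a two-level Hamiltonian $\mlvec{H}_l$ with $\mlvec{H}_l^2 = \mlvec{I}$, the Taylor series for the matrix exponential collapses to $e^{-i\theta_l\mlvec{H}_l} = \cos\theta_l\cdot\mlvec{I} - i\sin\theta_l\cdot\mlvec{H}_l$. First I would use this to compute, for an arbitrary Hermitian $\mlvec{A}$, the conjugation
\begin{align*}
e^{i\theta_l\mlvec{H}_l}\mlvec{A}\,e^{-i\theta_l\mlvec{H}_l}
&= \cos^2\theta_l\cdot\mlvec{A} + \sin^2\theta_l\cdot\mlvec{H}_l\mlvec{A}\mlvec{H}_l
   + i\sin\theta_l\cos\theta_l\cdot[\mlvec{H}_l,\mlvec{A}].
\end{align*}

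Next I would split $\mlvec{A}$ via the identity $\mlvec{A} = \Phi^{(0)}_l(\mlvec{A}) + \Phi^{(1)}_l(\mlvec{A})$, where $\Phi^{(0)}_l(\mlvec{A})$ commutes with $\mlvec{H}_l$ and $\Phi^{(1)}_l(\mlvec{A})$ anticommutes with $\mlvec{H}_l$ (both facts follow directly from $\mlvec{H}_l^2 = \mlvec{I}$ and the definitions). The commuting part is invariant under the conjugation. For the anticommuting part, using $\mlvec{H}_l\Phi^{(1)}_l(\mlvec{A})\mlvec{H}_l = -\Phi^{(1)}_l(\mlvec{A})$ and the double-angle identities $\cos^2\theta_l - \sin^2\theta_l = \cos 2\theta_l$ and $2\sin\theta_l\cos\theta_l = \sin 2\theta_l$, the displayed equation collapses to
\begin{align*}
\cos 2\theta_l\cdot\Phi^{(1)}_l(\mlvec{A}) + i\sin 2\theta_l\cdot\mlvec{H}_l\Phi^{(1)}_l(\mlvec{A}).
\end{align*}
The algebraic identity $i\mlvec{H}_l\Phi^{(1)}_l(\mlvec{A}) = \frac{i}{2}(\mlvec{H}_l\mlvec{A}-\mlvec{A}\mlvec{H}_l) = \Phi^{(2)}_l(\mlvec{A})$ (which again uses $\mlvec{H}_l^2=\mlvec{I}$) then yields the clean single-layer expansion
\begin{align*}
e^{i\theta_l\mlvec{H}_l}\mlvec{A}\,e^{-i\theta_l\mlvec{H}_l}
= \sum_{\xi_l\in\{0,1,2\}} \Phi^{(\xi_l)}_l(\mlvec{A})\cdot c(\xi_l,\theta_l),
\end{align*}
with $c(0,\theta)=1$, $c(1,\theta)=\cos 2\theta$, $c(2,\theta)=\sin 2\theta$.

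Finally I would induct on the layer index. Since $\mlvec{U}^\dagger(\mlvec\theta)\mlvec{M}\mlvec{U}(\mlvec\theta) = e^{i\theta_1\mlvec{H}_1}\cdots e^{i\theta_p\mlvec{H}_p}\mlvec{M}\,e^{-i\theta_p\mlvec{H}_p}\cdots e^{-i\theta_1\mlvec{H}_1}$, I would peel off the innermost conjugation with $\mlvec{H}_p$ first, obtaining a linear combination indexed by $\xi_p$. Applying the single-layer formula to each of these three operands with $\mlvec{H}_{p-1}$ gives a sum over $(\xi_{p-1},\xi_p)$, and so on, until after $p$ steps the expression equals
\begin{align*}
\sum_{\mlvec\xi\in\{0,1,2\}^p}\Phi^{(\xi_1)}_1\!\circ\!\Phi^{(\xi_2)}_2\!\circ\!\cdots\!\circ\!\Phi^{(\xi_p)}_p(\mlvec{M})\prod_{l=1}^p c(\xi_l,\theta_l),
\end{align*}
which by the definition of $\Phi_{\mlvec\xi}$ is exactly the stated formula. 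None of these steps is conceptually deep; the only place that demands care is recognizing the identity $i\mlvec{H}_l\Phi^{(1)}_l(\mlvec{A}) = \Phi^{(2)}_l(\mlvec{A})$, which is what turns the naive quadratic trigonometry into the clean $\cos 2\theta_l / \sin 2\theta_l$ basis, and correctly bookkeeping the order of composition so that the outermost conjugation corresponds to the leftmost $\Phi^{(\xi_1)}_1$ in the composed map.
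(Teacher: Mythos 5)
Your proposal is correct and follows essentially the same route as the paper's proof: both reduce $e^{-i\theta_l\mlvec{H}_l}$ to $\cos\theta_l\,\mlvec{I} - i\sin\theta_l\,\mlvec{H}_l$ (the paper via the spectral projections $\mlvec{P}_\pm$, you via the Taylor series with $\mlvec{H}_l^2=\mlvec{I}$), establish the single-conjugation identity $e^{i\theta\mlvec{H}}\mlvec{A}e^{-i\theta\mlvec{H}} = \Phi^{(0)}(\mlvec{A}) + \Phi^{(1)}(\mlvec{A})\cos 2\theta + \Phi^{(2)}(\mlvec{A})\sin 2\theta$, and iterate it layer by layer. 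You simply spell out the ``basic trigonometry'' step that the paper leaves implicit, and your care with the composition order is consistent with the claim's definition of $\Phi_{\mlvec\xi}$.
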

\begin{proof}
  For a two-level Hamiltonian $\mlvec{H}$ with eigenvalues $\pm1$, let
  $\mlvec{P}_+$ and $\mlvec{P}_-$ be projections into subspaces of $\complex^d$
  corresponding to eigenvalues $+1$ and $-1$:
  \begin{align}
    \mlvec{H} = \mlvec P_{+} - \mlvec P_{-},\ \text{and}\ \mlvec P_+  + \mlvec P_- = \mlvec{I};
  \end{align}

  For all $\theta\in\real$, the parameterized unitary 
  \begin{align}
    \exp(-i\theta\mlvec{H})
    &= e^{-i\theta}\mlvec P_+ + e^{i\theta} \mlvec P_- \\
    &= \cos\theta (\mlvec P_+  + \mlvec P_-) - \sin\theta i(\mlvec P_+ - \mlvec P_-) \\
    &= \cos\theta \mlvec I - i \sin\theta \mlvec H
  \end{align}

 By basic trigonometry, $e^{i\theta\mlvec{H}}{\mlvec{A}}e^{-i\theta\mlvec{H}}$
 can be expressed as
 \begin{align}
   \Phi^{(0)}(\mlvec{A}) + \Phi^{(1)}(\mlvec{A})\cos 2\theta + \Phi^{(2)}(\mlvec{A})\sin 2\theta
   \label{eqn:one-step-trig}
 \end{align}
 for any Hermitian $\mlvec{A}$. Claim~\ref{claim:heisenbergpic} then follows
 from sequantial application of Eqn.~(\ref{eqn:one-step-trig}).
\end{proof}

Under transformation 
$\mlvec\theta\mapsto\mlvec\theta + \mlvec{\alpha}$ for some
$\mlvec{\alpha}\in\real^p$, the linear maps transform as:
\begin{align}
  \begin{cases}
    \Phi_{l}^{(0)}(\cdot) &\rightarrow \Phi_{l}^{(0)}(\cdot)\\
    \Phi_{l}^{(1)}(\cdot) &\rightarrow \cos 2\alpha_l\Phi_{l}^{(1)}(\cdot) + \sin 2\alpha_l\Phi_{l}^{(2)}(\cdot)\\
    \Phi_{l}^{(2)}(\cdot) &\rightarrow -\sin 2\alpha_l\Phi_{l}^{(1)}(\cdot) + \cos 2\alpha_l\Phi_{l}^{(2)}(\cdot)
  \end{cases}
\end{align}
This is because
$e^{i(\theta+\alpha)\mlvec{H}}{\mlvec{A}}e^{-i(\theta+\alpha)\mlvec{H}}$ can be
expressed as
\begin{multline}
  \Phi^{(0)}(\mlvec{A}) 
  +(\cos 2\alpha\Phi^{(1)}(\mlvec{A}) + \sin 2\alpha\Phi^{(2)}(\mlvec{A}))\cos 2\theta\\
  +(-\sin 2\alpha\Phi^{(1)}(\mlvec{A}) + \cos 2\alpha\Phi^{(2)}(\mlvec{A}))\sin 2\theta
\end{multline}
As a consequence, each term of $f(\mlvec\rho;\mlvec\theta)$ remain
invariant under the joint transformation $\theta_l\mapsto\theta_l +
\frac{\pi}{2}$ and
\begin{align}
  \Phi^{(0)}_{l}(\cdot)\mapsto&\mlvec{H}_l\Phi^{(0)}_{l}(\cdot)\mlvec{H}_l =\Phi^{(0)}_{l}(\cdot)\\
  \Phi^{(1)}_{l}(\cdot)\mapsto&\mlvec{H}_l\Phi^{(1)}_{l}(\cdot)\mlvec{H}_l =-\Phi^{(1)}_{l}(\cdot)\\
  \Phi^{(2)}_{l}(\cdot)\mapsto&\mlvec{H}_l\Phi^{(2)}_{l}(\cdot)\mlvec{H}_l =-\Phi^{(2)}_{l}(\cdot)
\end{align}
\subsection{Proof for Lemma~\ref{lm:construction_sym} and \ref{lm:construction_breaksym}}
\label{subsec:app_construction_thm}
The construction for Theorem~\ref{thm:construction} consists of two steps. For
the first step, dataset $\S_0$ is constructed with $2^p$ local minima invariant
under the $\frac{\pi}{2}$ translational symmetry:
\begin{align}
  \theta_l \mapsto \theta_l + \frac{\pi}{2}.
\end{align}
Therefore the existence of a single local minimum $\mlvec{\theta}^
\star$ indicates a set of local minima $\Theta$.
For the second step, we construct a data set $\S_1$ to break the symmetry.
A combination of these two data set with proper scaling gives us a desired
dataset for Theorem~\ref{thm:construction}.

Here we provide the proof of Lemma~\ref{lm:construction_sym} and
Lemma~\ref{lm:construction_breaksym} in details.
\constructsymlm*
\begin{proof}
  It suffices to construct a dataset $\S_0 =
  \{(\mlvec\rho_k,y_k)\}_{k=1}^{m_0}$, such that (1) for all $k\in[p]$, $f_k(\mlvec{\theta}):=\<\mlvec\rho_k,
  \mlvec{M}(\mlvec\theta)\> - y_k$ is either symmetric or anti-symmetric under
  $\theta_l\mapsto\theta_l + \frac{\pi}{2}$ for all $l\in[p]$, and (2) 
  the intersection $\Theta$ of the set of roots $\Theta_k$ of $f_k(\mlvec{\theta})
  = 0$ is non-empty and contains at least one isolated point
  $\mlvec{\theta}^\star$. For such $\S_0$, $\mlvec{\theta}^\star$ is an isolated
  root of the non-negative loss function $L(\mlvec{\theta}; \S_0) =
  \sum_{k=1}^{m_0}f_k(\mlvec{\theta})^2$.

  For a concrete construction, consider $\{f_k(\mlvec\theta)\}_{k=1}^{m_0}$ such that
  $f_k(\mlvec\theta) = \propto \sin(2 \sum_{l=1}^p\eta^{(k)}_l(\theta_l -
  \theta^\star_l))$ for a set of vectors
  $\{\mlvec\eta^{(k)}\}_{k=1}^{m_0}\subseteq\{-1,0,1\}^{p}_{l=1}$ that spans
  $\real^p$, and arbitrary $\mlvec{\theta}^\star\in\real^p$.

  The translational symmetry holds due to the periodicity of $\sin$-functions;
  the loss function has a minima at the vanishing point $\mlvec\theta^\star$.

To see that $\mlvec\theta^\star$ is indeed a isolated minima (that there
exists a neighbourhood within which the loss function vanishes only at
$\mlvec{\theta}^\star$), consider
$\N(\mlvec{\theta}^\star) := \{\mlvec{\theta}\in\real^p: \forall k\in[m_0],
|(\mlvec{\eta}^{(k)})^T(\mlvec{\theta} - \mlvec\theta^\star)| < \frac{\pi}{4}\}$.
Conditioned on $\mlvec\theta \in \N(\mlvec\theta^\star)$,
\begin{align}
  L(\mlvec\theta; \S_0) = 0 &\implies\\
  &\forall k\in[m_0], \sum_{l=1}^{p}\eta^{(k)}_{l}(\theta_l - \theta_l^\star) = 0
\end{align}
which then implies $\mlvec\theta = \mlvec\theta^\star$ since $\{\mlvec\eta^{(l)}_k\}_{l=1}^p$ spans $\real^p$.

The existence of such dataset $\S_0$ follows from the linear independence of
operators for the QNN: for any $k\in[m_0]$, the solution set to the following linear system for Hermitian $\mlvec{D}_k\in
\complex^{d\times d}$ is non-empty:
\begin{align}
  \begin{cases}
    &\<\mlvec{D}_k, \mlvec{I}\> = 0, \\ 
    &\<\mlvec{D}_k, \Phi_{\mlvec{\xi}}(\mlvec{M})\> = \hat{f}_{\mlvec{\xi},k},\ \forall \mlvec{\xi}\neq\mlvec{0}. 
  \end{cases}
\end{align}
where $\mlvec{I}$ is the $d$-dimensional identity, and $\hat{f}_{\mlvec{\xi},k}$ denotes the coefficient corresponding to the
term $\prod_{l:\xi_l=1}\cos2\theta_l\prod_{l^\prime:\xi_{l^\prime}=2}\sin2\theta_{l^\prime}$ in 
$\sin(2\sum_{l=1}^p\eta^{(k)}_l (\theta_l-\theta^\star_l))$.

As $\tr(\Phi_{\mlvec{\xi}}(\mlvec{M})) = 0$, $\mlvec{I}$ is orthogonal to all
$\Phi_{\mlvec{\xi}}(\mlvec{M})$. Therefore the constraint set
$\{\Phi_{\mlvec{\xi}}(\mlvec{M})\}_{\mlvec{\xi}\neq\mlvec{0}}\cup\{\mlvec{I}\}$ is linear
independent. As a result, the linear system is guaranteed to have a set of
solution $\{\mlvec{D}_k\}_{k=1}^{m_0}$.

Given the solution $\{\mlvec{D}_k\}_{k=1}^{m_0}$, the dataset can be constructed
as:
\begin{align}
  \mlvec\rho_k:=\frac{1}{d}\mlvec{I} + \kappa \mlvec{D}_k,\quad
  y_k = \tr(\mlvec{\rho}_k\Phi_{\mlvec{0}}(\mlvec{M}))
\end{align}
for all $k$, with $\kappa$ be the largest positive real number such that 
$\frac{1}{d}\mlvec{I} + \kappa \mlvec{D}_k$ is positive semidefinite.

It can be verified by elementary calculation that such dataset yields a loss
function $L(\mlvec\theta;\S_0) = \kappa\sum_{k=1}^{m_0}\sin(2(\mlvec{\eta}^{(k)})^T(\mlvec{\theta}-\mlvec{\theta}^\star))$.
\end{proof}
\constructbreaksymlm*
\begin{proof}
  Rewrite the loss function induced by $\S_1$ as:
  \begin{align}
    L(\mlvec\theta; \S_1) &= - \frac{2}{m_1}\sum_{k=1}^{m_1}\<y_k\mlvec\rho_k, \mlvec{M}(\mlvec\theta)\>\label{line:term1}\\
                          &+ \frac{1}{m_1}\sum_{k=1}^{m_1}\(\<\mlvec\rho_k, \mlvec{M}(\mlvec\theta)\>\)^2\label{line:term2}\\
                          &+ \frac{1}{m_1}\sum_{k=1}^{m_1}y_k^2
  \end{align}
  For any positive $\epsilon$, consider the following joint scaling of $\mlvec{\rho}_k$ and $y_k$:
  \begin{align}
    \begin{cases}
      \mlvec\rho_k &\mapsto \epsilon\mlvec\rho_k + (1 - \epsilon)\frac{1}{d} \mlvec{I} \\
      y_k &\mapsto \frac{1}{\epsilon}y_k
    \end{cases}
  \end{align}
  Under such scaling, for arbitrary $\epsilon$, term~(\ref{line:term1}) remains the
  same; the term~(\ref{line:term2}) can be arbitrarily suppressed by choosing
  sufficiently small $\epsilon$.
  Therefore it suffices to consider the first term $L^\prime(\cdot; \S_1) :=
  -\frac{2}{m_1}\sum_{k=1}^{m_1}y_k\<\mlvec\rho_k,\mlvec{M}(\mlvec{\theta})\>$.

  Without loss of generality, assume $\mlvec{\theta}^\star = \mlvec{0}$.
  Consider a dataset $\S_1$ such that
  $L^\prime(\mlvec\theta;\S_1) \propto -\sum_{k=1}^{m_1} \cos(2
  (\tilde{\mlvec{\eta}}^{(k)})^T\mlvec{\theta})$ for a set
  $\{\tilde{\mlvec{\eta}}^{(k)}\}_{k=1}^{m_1}\subset\{0,1\}^p$ that spans $\{0,1\}^p$. 

  For any $\mlvec{\zeta} \in\{0,1\}^p$,
  \begin{align}
    &L^\prime(\mlvec{\theta}^\star + \frac{\pi}{2}\mlvec{\zeta}; \S_1)\\
    \propto & -\sum_{k=1}^{m_1} \cos(
              2(\tilde{\mlvec{\eta}}^{(k)})^T\mlvec{\theta}^\star + 
              \<\tilde{\mlvec{\eta}}^{(k)}, \mlvec{\zeta}\>\pi)\\
    \propto &  -\sum_{k=1}^{m_1} (-1)^
              {\<\tilde{\mlvec{\eta}}^{(k)}, \mlvec{\zeta}\>}
  \end{align}

  The fact that $\{\tilde{\mlvec\eta}^{(k)}\}_{k=1}^{m_1}$ spans $\{0,1\}^p$
  indicate that the solution to
  \begin{align}
    \forall k\in[m_1],\ 
    \<\tilde{\mlvec{\eta}}^{(k)}, \mlvec{\zeta}\> = 0\ (\mathit{mod}\ 2)
  \end{align}
  is unique. Therefore such $\S_1$ breaks the $\pi/2$-translational symmetry as
  required. Similar to the proof of
  Lemma~\ref{lm:construction_sym}, the existence of such dataset follows from the
  linear independence of the operators.

  For a concrete construction, let $\S_1$ be a dataset such that
$L^\prime(\mlvec\theta;\S_1) = -\frac{2}{m_1}\sum_{k=1}^{m_1}\<y_k\mlvec\rho_k,\mlvec{M}(\theta)\> =
-c\sum_{l=1}^p\cos(\theta_l)$.  Due to the flexible scaling of the labels
$\{y_k\}_{k=1}^{m_1}$, such $\S_1$ can be found for arbitrary $c$.

Let $\B_r(\mlvec\theta)$ denote the closed
$\ell_2$-ball centered at $\mlvec{\theta}$ with radius $r$, and let
$\partial\B_r$ denote its boundary.
For the local minimum $\mlvec\theta^\star$ of $L(\mlvec{\theta};\S_0)$, let $(r,
L_0)$ be a pair of real numbers such that
$\inf_{\mlvec{\theta}\in\partial\B_r(\mlvec\theta^\star)}L(\mlvec{\theta};
\S_0) > L_0$ for some positive real number $L_0$.

The requirements in Lemma~\ref{lm:construction_breaksym} is then met by choosing
$c<\frac{L_0}{2pr^2}$. To see this we will make use of the estimation $1 - \frac{1}{2}
\theta_l^2 \leq \cos\theta_l \leq 1$. 
For any $\mlvec\theta^\prime\in\Theta$, the loss function
$L(\mlvec\theta;\S_0) + L^\prime(\mlvec\theta;\S_1)$ evaluated at all points on the 
boundary of $B_r(\theta^\prime)$ is at least $\frac{L_0}{2}$ larger than
$L(\mlvec\theta^\prime,\S_0) + L^\prime(\mlvec{\theta}^\prime, \S_1)$. Therefore
the second requirement in Lemma~\ref{lm:construction_breaksym} is met.

For the first requirement, we have
that (1) $L(\mlvec{\theta^\star};\S_0) + L^\prime(\mlvec{\theta}^\star;\S_0) =
-pc$, and (2) for all $\mlvec\theta^\prime\in\Theta/\{\mlvec\theta^\star\}$, for
all $\mlvec\theta\in\B_r(\mlvec\theta^\prime)$, $L(\mlvec{\theta};\S_0) + L^\prime(\mlvec{\theta};\S_0) >
 0 -(p-1)c + (1-\frac{1}{2}r^2)c = -pc + (2 - \frac{1}{r^2}) c$. The suboptimality
gap is therefore at least $c(2 - \frac{1}{2}r^2)$.
\end{proof}
\paragraph{Remarks.} In the proof for Lemma~\ref{lm:construction_sym} and
\ref{lm:construction_breaksym}, we made use of specific forms of $\sin$- and
$\cos$-functions for the clarity of proof. However, the linear independence of the
operators allow us to construct loss function beyond these specific forms, as will be
made clear in Example~\ref{example:address_minima_position} and \ref{example:address_decomposability}.

\subsection{Proof for Proposition~\ref{prop:onelayer} and Concrete Constructions}
\label{subsec:app_construction_prop}
\onelayerprop*
\begin{proof}
Proposition~\ref{prop:onelayer} follows directly from the orthogonality of the
operators. For any $\mlvec{\xi}\in\{0,1,2\}^p$, the operator
$\Phi_{\mlvec{\xi}}(\mlvec{M})$ can be expressed in the tensor product form $\otimes_{l=1}^p\tilde\Phi_{l}^{(\xi_l)}(\mlvec{m}_l)$.
where $\tilde\Phi_{l}^{(j)}$ are linear maps associated with $\mlvec{h}_l$.
For any $\mlvec{\xi}$ and $\mlvec{\xi}^\prime \in \{0,1,2\}^p / {\mlvec{0}}$:
\begin{align}
    & \<\Phi_{\mlvec{\xi}}(\mlvec{M}), \Phi_{\mlvec{\xi}^\prime}(\mlvec{M})\>\\
  = & \prod_{l=1}^p\<\tilde\Phi_{l}^{(\xi_l)}(\mlvec{m}_l), \tilde\Phi_{l}^{(\xi^\prime_l)}(\mlvec{m}_l)\>\\
  = & \prod_{l=1}^p\|\tilde\Phi_{l}^{(\xi_l)}(\mlvec{m}_l)\|_F^2\delta_{\xi_l, \xi^\prime_l}
\end{align}
Therefore $\{\Phi_{\mlvec\xi}(\mlvec{M})\}$ forms a linear independent set.
\end{proof}
We now move on to concrete constructions of datasets for QNNs defined in
Proposition~\ref{prop:onelayer}. To facilitate narrative, assume $\mlvec{M}
= \mlvec{M}_0^{\otimes p}$, and $\mlvec{h}_l = \mlvec{H}_0$.

For $j=0,1,2$, define $\mlvec{D}^{(j)}$ as $\Phi^{(j)}(\mlvec{M}_0)$.
And let $\mlvec{\rho}^{(j)}$ be a proper linear combination of $\mlvec{D}^{(j)}$
and $\mlvec{I}$ that is positive semidefinite and trace-$1$.

For $l\in[p]$, define
\begin{align}
  \mlvec{\rho}_{0,l} &= \bigl(\otimes_{r=1}^{l-1}\mlvec \rho^{(0)}\bigr) \otimes \mlvec \rho^{(1)} \otimes \bigl(\otimes_{r=l+1}^p \mlvec \rho^{(0)}\bigr)\\
  \mlvec{\rho}_{1,l} &= \bigl(\otimes_{r=1}^{l-1}\mlvec \rho^{(0)}\bigr) \otimes \mlvec \rho^{(2)} \otimes \bigl(\otimes_{r=l+1}^p \mlvec \rho^{(0)}\bigr)\\
  y_{0,l} &= 0\\
  y_{1,l} &= \tr(\rho^{(0)}\Phi^{(0)}(\mlvec{M}_0))^{p-1}\tr(\rho^{(1)}\Phi^{(1)}(\mlvec{M}_0))
\end{align}

Let $\S_0 = \{(\mlvec{\rho}_{0,l}, y_{0,l})\}_{l=1}^p$ be the dataset with
$\frac{\pi}{2}$-translational symmetry, with the resulting loss function
proportional to $\sum_{l=1}^p\sin\theta_l^2$; Let $\S_1 = \{(\mlvec{\rho}_{1,l},
y_{1,l})\}_{l=1}^p$ be the dataset that breaks the symmetry, with the loss
function proportional to $\sum_{l=1}^p(\cos\theta_l - 1)^2$.

\begin{example}
  \label{example:base}
  For a concrete example, consider $\mlvec{M}_0 = \mlvec{Y}+ \mlvec{I}$, and
  $\mlvec{H}_0 =\mlvec{Z}$. Choose
  $\mlvec{\rho}^{(0)} = \frac{1}{2}(\mlvec{Z} + \mlvec{I})$, 
  $\mlvec{\rho}^{(1)} = \frac{1}{2}(\mlvec{X} + \mlvec{I})$, 
  $\mlvec{\rho}^{(2)} = \frac{1}{2}(\mlvec{Y} + \mlvec{I})$.
  Construct the dataset $\S_0$ and $\S_1$ as described above, and let the
  dataset $\S$ be the combination of $\S_0$ and $\S_1$ with reweighting factor
  $4:1$. The loss function takes the form:
  \begin{align}
    \frac{1}{2p}\sum_{l=1}^p \sin^2(2\theta_l) + \frac{1}{4}(\cos 2\theta_l - 1)^2
  \end{align}
\end{example}

For clarity, the construction in Example~\ref{example:base} was purposefully
designed with two limitations. First of all, the construction has a fixed global minima at
$\mlvec\theta^\star=\mlvec{0}$. Also, the
loss function of the construction in Example~\ref{example:base} can be 
decomposed into $p$ single-parameter functions. Therefore the training
problem can be solved by greedily optimizing each of the coordinate $\theta_l$.

To address the first limitation, we propose the following example:
\begin{example}
  \label{example:address_minima_position}
  For
  \begin{align}
    \mlvec{\rho}_{0,l} &= \bigl(\otimes_{r=1}^{l-1}\mlvec \rho^{(0)}\bigr) \otimes \mlvec \rho^{(1)} \otimes \bigl(\otimes_{r=l+1}^p \mlvec \rho^{(0)}\bigr)\\
    \mlvec{\rho}_{1,l} &= \bigl(\otimes_{r=1}^{l-1}\mlvec \rho^{(0)}\bigr) \otimes \mlvec \rho^{(2)} \otimes \bigl(\otimes_{r=l+1}^p \mlvec \rho^{(0)}\bigr)\\
    y_{0,l} &= \sin(\frac{\pi}{50}),\quad y_{1,l} = \cos(\frac{\pi}{50})
  \end{align}
  Construct dataset $\S_0$, $\S_1$ as:
  \begin{align}
    \S_0 = \{(\mlvec{\rho}_{0,l}, y_{0,l})\}_{l=1}^p,\quad
    \S_1 = \{(\mlvec{\rho}_{1,l}, y_{1,l})\}_{l=1}^p
  \end{align}
  and let the
  dataset $\S$ be the combination of $\S_0$ and $\S_1$ with reweighting factor
  $4:1$. The loss function takes the form:
  \begin{multline}
    \frac{1}{2p}\sum_{l=1}^p\big((\sin2\theta_l - \sin\frac{\pi}{50})^2 \\+ \frac{1}{4}(\cos2\theta_l - \cos\frac{\pi}{50})^2 \big).
  \end{multline}
\end{example}
The resulting loss function has a local minimum at
$(\frac{\pi}{100},\cdots,\frac{\pi}{100})^T$.  

To address the decomposability issue, consider the following example:
\begin{example}
  \label{example:address_decomposability}
  Let $\S_0, \S_1$ be as defined in
  Example~\ref{example:address_minima_position}.
  Let $\mlvec{\rho}_{2,l,k}$ denote a product state with $\mlvec{\rho}^{(1)}$
  for the $k$-th and $l$-th qubits, and $\mlvec{\rho}^{(0)}$ for the rest.
  The loss function for training sample $(\mlvec{\rho}_{2,l,k}, \cos^2\frac{\pi}{50})$ is
  $(\cos 2\theta_l\cos 2\theta_k - \cos^2\frac{\pi}{50})^2$. Combining this additional term with $\S_0$
  and $\S_1$ gives rise to non-decomposable loss functions that cannot be solved
  by optimizing each coordinate independently. 
\end{example}

As will be seen in \supref{sec:app_num}, the construction of Example~\ref{example:address_minima_position} and
\ref{example:address_decomposability} indicates that our construction method is
general, and can lead to instances that are hard to optimize
with gradient-based methods and do not admit other trivial optimization methods.

\setcounter{equation}{0}
\section{Proof for Typical QNNs with Linear Dependence}
\label{sec:app_linear_indep}
\newcommand{\PhiXiM}[1][]{\Phi_{\mlvec{\xi}
    \ifx\\#1\\\else_{#1}\fi}
  (\mlvec{M})}
\newcommand{\PhiXiMprime}[1][]{\Phi_{\mlvec{\xi}^\prime
    \ifx\\#1\\\else_{#1}\fi}
  (\mlvec{M})}
\newcommand{\Phimap}[2]{\Phi^{#1}_{#2}}

In this section, we show that typical under-parameterized QNNs are with linear
independence. To that end, we consider random $d$-dimensional $p$-parameter QNNs sampled with respect to the
following measure: 

Let $\mlvec{H}$ be a $d$-dimension Hermitian such that $\tr(\mlvec{H}) = 0$ and
$\mlvec{H}^2 = \mlvec{I}$. The circuit for our random QNN is:
\begin{align}
  \mlvec{U}(\mlvec{\theta}) =
  e^{-i\theta_p\mlvec{W}_p\mlvec{H}\mlvec{W}_p^\dagger}
  \cdots
  e^{-i\theta_1\mlvec{W}_1\mlvec{H}\mlvec{W}_1^\dagger}
  \label{eqn:app_random_model}
\end{align}
with $\{\mlvec{W}_l\}_{l=1}^p$ independently sampled with respect to the Haar measure on the $d$-dimensional unitary
group $U(d)$.

Following from the fact $e^{-i\theta\mlvec{W}\mlvec{H}\mlvec{W}^\dagger} =
\mlvec{W} e^{-i\theta\mlvec{H}} \mlvec{W}^\dagger$ for Hermitian $\mlvec{H}$ and
unitary $\mlvec{W}$, we can rewrite Eqn.~(\ref{eqn:app_random_model}) as:
\begin{multline}
  \mlvec{U}(\mlvec{\theta}) = \mlvec{W}_p
  e^{-i\theta_p\mlvec{H}} \big(\mlvec{W}_p^\dagger\mlvec{W}_{p-1}\big) \\
  \cdots\
  \big(\mlvec{W}_2^\dagger \mlvec{W}_1\big) e^{-i\theta_1\mlvec{H}} \mlvec{W}_1^\dagger
\end{multline}
Due to the left (and right) invariance of the Haar measure, up to a unitary transformation, the random model in Eqn.~(\ref{eqn:app_random_model}) is
equivalent to a circuit with $p$ interleaving parameterized gate
$\{e^{-i\theta_l\mlvec{H}}\}_{l=1}^p$ and unitary $\{\tilde{\mlvec W}_l\}_{l=1}^p$
randomly sampled with respect to the Haar measure:
\begin{align}
  \mlvec{U}(\mlvec{\theta}) = \tilde{\mlvec{W}}_pe^{-i\theta_p\mlvec{H}}\tilde{\mlvec{W}}_{p-1}\cdots\tilde{\mlvec{W}}_1e^{-i\theta_1\mlvec{H}}
  \label{eqn:app_random_model1}
\end{align}
This interleaving nature of fixed and parameterized gates is shared by existing
designs of QNNs, and any $p$-parameter QNN generated by two-level Hamiltonians
can be expressed in Eqn.~(\ref{eqn:app_random_model1}). Morever, applying
polynomially many random 2-qubit gates on random pairs of qubits generates a
distribution over gates that approximates the Haar measure up to the 4-th
moments \cite{brandao2016local}, which is what we require in the proof in this section.

In the rest of this section, we provide detailed proof for Theorem~\ref{thm:random_qnn}.
\randomthm*

\subsection{Proof of Theorem~\ref{thm:random_qnn}}
\label{subsec:app_random}
\begin{proof}
  
Let $\Xi=\{0,1,2\}^p / \{\mlvec{0}\}$ denote the set of all $\mlvec\xi \in \{0,1,2\}^p$ except for $\mlvec\xi =
(0,\cdots,0)^T$. Our goal is to show that
$\{\Phi_{\mlvec{\xi}}\}_{\xi\in\Xi}$ is linearly independent with high probability.

To show the linear independence of $\{\Phi_{\mlvec{\xi}}\}$, it suffices to show
that its Gram matrix is positive semidefinite (Theorem 7.2.10 in \citet{horn2013matrix}).
The Gram matrix $\mlvec{G}$ is defined such that the $(\mlvec\xi, \mlvec\xi^\prime)$-element $G_{\mlvec{\xi},\mlvec{\xi}^\prime}:=\<\Phi_{\xi}(\mlvec{M}),
\Phi_{\mlvec\xi^\prime}(\mlvec{M})\>$, for all pairs of $\mlvec\xi,\mlvec\xi^\prime \in\Xi$.

By the Gershgorin circle theorem \cite{golub1996matrix}, it
suffices to show that with high probability
\begin{multline}
  \<\Phi_{\mlvec{\xi}}(\mlvec{M}), \Phi_{\mlvec{\xi}}(\mlvec{M})\> >\\
  \sum_{\mlvec\xi^\prime\in\Xi, \mlvec\xi^\prime \neq \mlvec\xi} 
  |\<\Phi_{\mlvec{\xi}}(\mlvec{M}), \Phi_{\mlvec{\xi^\prime}}(\mlvec{M})\>|
\end{multline}
for all $\mlvec\xi \in \Xi$.

Using the Chebyshev inequality with the moment estimations in
Lemma~\ref{lm:exp_and_var}, we have:
\begin{align}
  \Pr\left[\tr(\Phi_{\mlvec\xi}(\mlvec{M})^2) < \frac{2}{3}\frac{\tr(\mlvec M^2)}{2^p} \right] \leq O(d^{-1})
\end{align}
and 
\begin{multline}
  \Pr\big[|\tr(\Phi_{\mlvec\xi}(\mlvec{M}) \Phi_{\mlvec\xi^\prime}(\mlvec{M}))|
    \\>
    \frac{1}{3\cdot 3^p}\frac{\tr(\mlvec M^2)}{2^p} \big] \leq O(3^pd^{-1})
\end{multline}
for $\mlvec\xi \neq \mlvec{\xi}^\prime$.
Combined with the union bound, 
we can show that:
\begin{multline}
  \Pr\big(
  \exists \mlvec{\xi}\in\Xi:
  \<\Phi_{\mlvec{\xi}}(\mlvec{M}) \Phi_{\mlvec{\xi}}(\mlvec{M})\>\leq\\
  \sum_{\mlvec{\xi}^\prime\in\Xi:\mlvec{\xi}^\prime\neq\mlvec{\xi}}
  |\<\Phi_{\mlvec{\xi}}(\mlvec{M}) \Phi_{\mlvec{\xi}^\prime}(\mlvec{M})\>|
  \big)
  \leq O(e^{p}d^{-1})
\end{multline}
\end{proof}

\subsection{Moments}
\begin{lemma}[Expectations and variances]\label{lm:exp_and_var}
  Consider the set of operators $\{\Phi_{\mlvec\xi}(\mlvec{M})\}$ of random
  $d$-dimensional $p$-parameter QNNs defined in
  Eqn.~(\ref{eqn:app_random_model}). The expectations of diagonal and
  off-diagonal terms of the associated Gram matrix are:
  \begin{align}
    \EXP[\tr(\Phi_{\mlvec\xi}(\mlvec{M})\Phi_{\mlvec\xi}(\mlvec{M}))] &= \frac{\tr(\mlvec{M}^2)}{2^p}(1 + O(pd^{-2}))\\
    \EXP[\tr(\Phi_{\mlvec\xi}(\mlvec{M})\Phi_{\mlvec\xi^\prime}(\mlvec{M}))] &= 0\\
  \end{align}
  for all $\mlvec{\xi}\in\Xi$ and $\mlvec{\xi}^\prime \neq \mlvec{\xi}$. The
  variances are:
  \begin{align}
    \mathbb{V}[\tr(\Phi_{\mlvec\xi}(\mlvec{M})\Phi_{\mlvec\xi^\prime}(\mlvec{M}))] &=  \frac{\tr(\mlvec{M}^2)^2}{4^p}O(d^{-1})
  \end{align}
  for all $\mlvec{\xi},\mlvec{\xi}^\prime\in\Xi$.
\end{lemma}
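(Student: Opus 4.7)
My plan is to leverage the independence of the Haar-distributed unitaries $\{\mlvec{W}_l\}_{l=1}^p$ and reduce the nested computation of $\EXP[\tr(\Phi_{\mlvec{\xi}}(\mlvec{M})\Phi_{\mlvec{\xi}'}(\mlvec{M}))]$ to a sequence of elementary second- and fourth-moment Haar integrals via the Weingarten calculus~\cite{puchala2011symbolic,collins2006integration}. The key observation is that, after conditioning on $\{\mlvec{W}_{l'}\}_{l'\neq l}$, the map $\Phi^{(\xi_l)}_l$ depends on $\mlvec{W}_l$ only through $\mlvec{H}_l=\mlvec{W}_l\mlvec{H}\mlvec{W}_l^\dagger$, and each appearance of $\Phi^{(\xi_l)}_l$ contributes at most two copies of $\mlvec{W}_l$, so that each conditional expectation requires only low-order Haar moments.

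The first step is to establish the single-coordinate building block. Using the degree-$2$ Weingarten formula together with $\tr\mlvec{H}=0$ and $\mlvec{H}^2=\mlvec{I}$, one obtains $\EXP[\mlvec{H}_l\mlvec{A}\mlvec{H}_l]=\frac{1}{d^2-1}(d\tr(\mlvec{A})\mlvec{I}-\mlvec{A})$, from which it follows that for any pair of traceless Hermitian operators $\mlvec{A},\mlvec{B}$,
\begin{align*}
\EXP_{\mlvec{W}_l}[\tr(\Phi^{(j)}_l(\mlvec{A})\Phi^{(k)}_l(\mlvec{B}))]=c_j(d)\,\tr(\mlvec{A}\mlvec{B})\,\delta_{jk},
\end{align*}
with $c_0(d)=\frac{d^2-2}{2(d^2-1)}$ and $c_1(d)=c_2(d)=\frac{d^2}{2(d^2-1)}$, both equal to $\tfrac{1}{2}+O(d^{-2})$. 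The off-diagonal vanishing either holds as an exact polynomial identity in $\mlvec{H}_l$ (the cases $(j,k)\in\{(0,1),(0,2)\}$, via cyclicity of trace) or follows from $\EXP[\mlvec{H}_l]=0$ (the case $(j,k)=(1,2)$, which reduces to $\tfrac{i}{2}\tr(\mlvec{H}_l[\mlvec{B},\mlvec{A}])$).

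The second step iterates the building block along $l=1,\ldots,p$. Writing $\mlvec{N}_{l+1}:=\Phi^{(\xi_{l+1})}_{l+1}\circ\cdots\circ\Phi^{(\xi_p)}_p(\mlvec{M})$ and $\mlvec{N}'_{l+1}$ analogously, each $\mlvec{N}_{l+1}$ is traceless (since $\tr\mlvec{M}=0$ and every $\Phi^{(j)}_l$ either preserves tracelessness or outputs a traceless operator) and is independent of $\mlvec{W}_l$. The tower property then produces a vanishing expectation at the first coordinate of disagreement when $\mlvec{\xi}\neq\mlvec{\xi}'$, and for $\mlvec{\xi}=\mlvec{\xi}'$,
\begin{align*}
\EXP[\tr(\Phi_{\mlvec{\xi}}(\mlvec{M})^2)]=\prod_{l=1}^p c_{\xi_l}(d)\cdot\tr(\mlvec{M}^2)=\frac{\tr(\mlvec{M}^2)}{2^p}\bigl(1+O(pd^{-2})\bigr),
\end{align*}
as claimed.

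For the variance, I would apply the same tower-of-expectations strategy to the squared random variable $\tr(\Phi_{\mlvec{\xi}}(\mlvec{M})\Phi_{\mlvec{\xi}'}(\mlvec{M}))^2$. Each conditional integral over $\mlvec{W}_l$ then involves four copies of $\mlvec{H}_l$ and must be evaluated with the fourth-moment Weingarten formula, whose output is a sum over $S_4$ pair-contractions weighted by Weingarten coefficients of size $1/\mathrm{poly}(d)$. The leading-order term exactly reproduces the square of the second-moment expression, so subtracting $(\EXP[\tr(\cdot)])^2$ kills the dominant contribution; each surviving term carries an extra factor of at least $d^{-2}$ relative to the leading piece, and compounding over the $p$ coordinates gives the claimed $\mathbb{V}=O(d^{-1})\cdot\tr(\mlvec{M}^2)^2/4^p$. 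The main obstacle is bookkeeping the $S_4$-permutation sum carefully enough that the leading-order cancellation is manifest and the residual is genuinely $O(d^{-1})$ rather than merely $O(1)$; I expect to handle this by exploiting the swap-operator form in which the Weingarten output naturally appears, so that the orthogonality between identity- and swap-components in the two-copy Hilbert space makes the cancellation transparent coordinate by coordinate.
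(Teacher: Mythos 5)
Your treatment of the expectations is correct and follows essentially the same route as the paper: integrate out one Haar unitary at a time, use that the inner composition $\mlvec{N}_{l+1}$ is traceless and independent of $\mlvec{W}_l$, and apply a single-coordinate identity. Your building block $\EXP_{\mlvec{W}_l}[\tr(\Phi^{(j)}_l(\mlvec{A})\Phi^{(k)}_l(\mlvec{B}))]=c_j(d)\tr(\mlvec{A}\mlvec{B})\delta_{jk}$ with $c_0=\frac{d^2-2}{2(d^2-1)}$ and $c_1=c_2=\frac{d^2}{2(d^2-1)}$ is exactly the content of the paper's Cases 1--3 for the first moments, packaged more cleanly; the exact vanishing for $(j,k)\in\{(0,1),(0,2)\}$ and the reduction of $(1,2)$ to $\tfrac{i}{2}\tr(\mlvec{H}_l[\mlvec{B},\mlvec{A}])$, killed by $\EXP[\mlvec{H}_l]=0$, all check out.

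The variance half, however, is only a plan, and the one quantitative claim you make in it is wrong in a way that matters. You assert that after subtracting $(\EXP[\cdot])^2$ every surviving term is suppressed by at least $d^{-2}$ relative to the leading piece. That fails for pairs $\mlvec{\xi}\neq\mlvec{\xi}'$ differing at a coordinate $l$ with $\xi_l,\xi_l'\in\{1,2\}$: there the map $(\Phi^{(\xi_l)}_l)^*\circ\Phi^{(\xi_l')}_l=\pm\Phi^{(2)}_l$ puts a \emph{single} copy of $\mlvec{H}_l$ inside each trace, and the relevant conditional integral is $\EXP_l[\tr(\mlvec{X}\mlvec{H}_l)\tr(\mlvec{Y}\mlvec{H}_l)]=\tr(\mlvec{X}\mlvec{Y})/(d-d^{-1})$ --- one factor of $d^{-1}$, not $d^{-2}$. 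Since the expectation of the inner product vanishes for such pairs, this term \emph{is} the variance, and it is precisely why the lemma states $O(d^{-1})$ rather than $O(d^{-2})$; the paper's Case~2 of the second-moment computation bounds it by $\frac{1}{d-d^{-1}}\EXP[\tr(\Phi_{\mlvec{\xi}_{l+1:p}}(\mlvec{M})^2)\tr(\Phi_{\mlvec{\xi}'_{l+1:p}}(\mlvec{M})^2)]$. A second issue you will hit when carrying out the recursion: at a matching coordinate the conditional fourth-moment integral does not close on $\EXP[\langle\cdot,\cdot\rangle^2]$ alone --- it produces an $O(d^{-2})$ admixture of $\EXP[\|\Phi_{\mlvec{\xi}_{l+1:p}}(\mlvec{M})\|_F^2\,\|\Phi_{\mlvec{\xi}'_{l+1:p}}(\mlvec{M})\|_F^2]$, so a second, coupled recursion for that quantity must be run first (the paper shows it equals $(1+O(pd^{-2}))\tr(\mlvec{M}^2)^2/4^p$) before the main one can be closed. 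Neither issue is fatal to your strategy, but as written the variance statement is not yet proved.
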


\begin{proof}
  Throughout the proof, we will use $\EXP_l[\cdot]$ to denote expectation with
  respect to $\mlvec{H}_l$, and use $\EXP_{l:p}[\cdot]$ to denote integral over the product measure over $\mlvec
  W_l, \cdots, \mlvec{W}_p$. The subscripts will be dropped when there is no confusion.

  We start by showing some basic (in-)equalities using the formula for integrals
  over the Haar measure~\cite{puchala2011symbolic}. 
  Recall that $\mlvec{H}_l = \mlvec{W}_l\mlvec{H}\mlvec{W}_l^\dagger$.
  Let $\mlvec{A}$ and $\mlvec{B}$ be two Hermitian matrices such that $\tr(\mlvec{A}) = \tr(\mlvec{B})
  = 0$. For integrals where $\mlvec{H}_l$ appears once:
  \begin{align}
    \EXP_l[\tr(\mlvec{AH}_l)] &= \frac{1}{d}\tr(\mlvec{A})\tr(\mlvec{H}) = 0
  \end{align}
  For integrals where $\mlvec{H}_l$ appears twice, we have:
  \begin{align}
     &\EXP_l[\tr(\mlvec{AH}_l)\tr(\mlvec{BH}_l)]\\
    =&\frac{\tr(\mlvec{AB})\tr(\mlvec{H}^2)+ \tr(\mlvec{A})\tr(\mlvec{B})\tr(\mlvec{H})^2}{d^2-1}\\
    -&\frac{\tr(\mlvec{AB})\tr(\mlvec{H})^2+ \tr(\mlvec{A})\tr(\mlvec{B})\tr(\mlvec{H}^2)}{d(d^2-1)}\\
    =&\frac{1}{d - d^{-1}}\tr(\mlvec{AB})
  \end{align}
  \begin{align}
    &\EXP_l[\tr(\mlvec{AH}_l\mlvec{BH}_l)]\\
    =&\frac{\tr(\mlvec{AB})\tr(\mlvec{H})^2+ \tr(\mlvec{A})\tr(\mlvec{B})\tr(\mlvec{H}^2)}{d^2-1}\\
    -&\frac{\tr(\mlvec{AB})\tr(\mlvec{H}^2)+ \tr(\mlvec{A})\tr(\mlvec{B})\tr(\mlvec{H})^2}{d(d^2-1)}\\
    =&-\frac{1}{d^2 - 1}\tr(\mlvec{AB})
  \end{align}
  For integrals with $\mlvec{H}_l$ appearing $4$ times, we use the
  following estimation   \begin{align}
     &|\EXP[\tr(\mlvec{A}\mlvec{H}_l\mlvec{A}\mlvec{H}_l)
           \tr(\mlvec{B}\mlvec{H}_l\mlvec{B}\mlvec{H}_l)]|\\
    = & O(d^{-4})\max\{|\tr(\mlvec{H}^2)|^2, |\tr(\mlvec{H}^4)|\}\\
                           \cdot &\max\{|\tr(\mlvec{A}^2\mlvec{B}^2)|,|\tr(\mlvec{A}\mlvec{B}\mlvec{A}\mlvec{B})|,\\
                                 &\quad\quad
                    |\tr(\mlvec{A}\mlvec{B})^2|, |\tr(\mlvec{A}^2)\tr(\mlvec{B}^2)|\}\\
  = & O(d^{-2}) \tr(\mlvec{A}^2)\tr(\mlvec{B}^2)
\end{align}
Here the first relation follows from formula on Equation~(3) in
\cite{puchala2011symbolic} and expressions in Sec. 6 of \cite{collins2006integration});
and the second relation follows from matrix Cauchy-Schwarz inequalities and
trace inequalities from \cite{petz1988characterizations,petz1994survey}.

Similarly we have:
  $|\EXP[\tr(\mlvec{A}\mlvec{H}_l\mlvec{B}\mlvec{H}_l)
    \tr(\mlvec{A}\mlvec{H}_l\mlvec{B}\mlvec{H}_l)]|
  =  O(d^{-2}) \tr(\mlvec{A}^2)\tr(\mlvec{B}^2)$.
\paragraph{First moments}
We start by calculating the first moments of
$\<\Phi_{\mlvec{\xi}}(\mlvec{M}),\Phi_{\mlvec{\xi}}(\mlvec{M})\>$
and
$\<\Phi_{\mlvec{\xi}}(\mlvec{M}), \Phi_{\mlvec{\xi^\prime}}(\mlvec{M})\>$.
\begin{align}
   &\EXP\<\Phi_{\mlvec{\xi}}(\mlvec{M}),\Phi_{\mlvec{\xi}^\prime}(\mlvec{M})\>\label{ln:app_expectation1}\\
  =&\EXP\<\Phi_{1}^{(\xi_1)}\circ\Phi_{\mlvec{\xi}_{2:p}}(\mlvec{M}),\Phi_{1}^{(\xi^\prime_1)}\circ\Phi_{\mlvec{\xi}^\prime_{2:p}}(\mlvec{M})\>\\
  =&\EXP\<\Phi_{\mlvec{\xi}_{2:p}}(\mlvec{M}),(\Phi_{1}^{(\xi_1)})^*\circ\Phi_{1}^{(\xi^\prime_1)}\circ\Phi_{\mlvec{\xi}^\prime_{2:p}}(\mlvec{M})\>
\end{align}

By the basic results on the adjoints in \supref{subsec:app_properties},
$(\Phi_{l}^{(\xi_l)})^*\circ\Phi_{l}^{(\xi^\prime_l)}(\cdot)
=\Phi_{l}^{(0)}(\cdot)$ (or $\Phi_{l}^{(1)}(\cdot)$) if $\xi_l = \xi_l^\prime=0$
(or $\xi_l = \xi_l^\prime\neq 0$). In the case $\xi_l \neq \xi_l^\prime$, if
$\xi_l$ and $\xi_l^\prime$ are both in $\{1,2\}$,
$(\Phi_{l}^{(\xi_l)})^*\circ\Phi_{l}^{(\xi^\prime_l)}(\cdot)
=\pm\Phi_{l}^{(2)}(\cdot)$. Otherwise
$(\Phi_{l}^{(\xi_l)})^*\circ\Phi_{l}^{(\xi^\prime_l)}(\cdot)=0$. We treat these
three cases separately.
\paragraph{Case 1: $\xi_1 = \xi_1^\prime$.}
  \begin{align}
   &(\ref{ln:app_expectation1})\\
  =&\EXP\<\Phi_{\mlvec{\xi}_{2:p}}(\mlvec{M}),\Phi_{1}^{(0/1)}\circ\Phi_{\mlvec{\xi}_{2:p}^\prime}(\mlvec{M})\>\\
  =&\frac{1}{2}\EXP_{2:p}\<\Phi_{\mlvec{\xi}_{2:p}}(\mlvec{M}),
     \Phi_{\mlvec{\xi}_{2:p}^\prime}(\mlvec{M})\>\\
\pm&\frac{1}{2}\EXP\<\Phi_{\mlvec{\xi}_{2:p}}(\mlvec{M}),
       \mlvec{H}_1\Phi_{\mlvec{\xi}_{2:p}^\prime}(\mlvec{M}) \mlvec{H}_1\>\\
  =&\frac{1}{2}(1 \mp \frac{1}{d^2-1})\EXP_{2:p}\<\Phi_{\mlvec{\xi}_{2:p}}(\mlvec{M}), \Phi_{\mlvec{\xi}_{2:p}^\prime}(\mlvec{M})\> 
\end{align}
The sign of the second term depends on whether $\xi_1$ is $0$ or $1,2$.
Therefore the first moment of the Frobenius norm of $\Phi_{\mlvec\xi}(\mlvec{M})$:
\begin{align}
  &\EXP\< \Phi_{\mlvec\xi}(\mlvec{M}), \Phi_{\mlvec\xi}(\mlvec{M})\>\\
  =& \frac{1}{2}(1\pm\frac{1}{d^2 - 1})
     \EXP_{2:p}
     \<\Phi_{\mlvec\xi_{2:p}}(\mlvec{M}), \Phi_{\mlvec\xi_{2:p}}(\mlvec{M})\>\\
  =& (1+O(p d^{-2})) \frac{\tr(\mlvec{M}^2)}{2^p}
\end{align}
\paragraph{Case 2: $\xi_1,\xi_1^\prime\in\{1,2\}$ and $\xi_1\neq \xi_1^\prime$.}
\begin{align}
  &(\ref{ln:app_expectation1})\\
  =&\pm\EXP\<\Phi_{\mlvec{\xi}_{2:p}}(\mlvec{M}),\Phi_{1}^{(2)}\circ\Phi_{\mlvec{\xi}_{2:p}^\prime}(\mlvec{M})\>\\
  =&\pm\frac{i}{2}\EXP_{2:p}\<\Phi_{\mlvec{\xi}_{2:p}}(\mlvec{M}),\\
   &\qquad\mlvec{H}_1\Phi_{\mlvec{\xi}_{2:p}^\prime}(\mlvec{M})
    -\Phi_{\mlvec{\xi}_{2:p}^\prime}(\mlvec{M})\mlvec{H}_1
     \>\\
  =&\pm\frac{i}{2}\big(\tr(\PhiXiMprime[2:p]\PhiXiM[2:p])\tr(\mlvec{H})\\
  &\qquad{} - \tr(\PhiXiM[2:p]\PhiXiMprime[2:p])\tr(\mlvec{H})\big)\\
  =& 0
\end{align}
\paragraph{Case 3: Either $\xi_1$ or $\xi_1^\prime=0$ and $\xi_1\neq \xi_1^\prime$.}
\begin{align}
  &(\ref{ln:app_expectation1})\\
  =&\EXP\<\Phi_{\mlvec{\xi}_{2:p}}(\mlvec{M}),0\circ\Phi_{\mlvec{\xi}_{2:p}^\prime}(\mlvec{M})\>\\
  =& 0
\end{align}
Combining Case 2 and Case 3, $\EXP\<\Phi_{\mlvec{\xi}}(\mlvec{M}),
\Phi_{\mlvec{\xi^\prime}}(\mlvec{M})\> = 0$ for $\mlvec{\xi}\neq
\mlvec{\xi}^\prime$. 

\paragraph{Second moments}
The correlation between the square of Frobenius norm can calculated recursively as:
\begin{align}
  & \EXP[\|\PhiXiM[1:p]\|_F^2\cdot\|\PhiXiMprime[1:p]\|_F^2]\\
  =& \EXP[\<\PhiXiM[1:p], \PhiXiM[1:p]\>\\
   &\qquad\qquad\quad\<\PhiXiMprime[1:p],\PhiXiMprime[1:p]\>]\\
 =& \EXP\<\PhiXiM[2:p] \Phimap{(0/1)}{l} \circ \PhiXiM[2:p]\>\\
  &\qquad\qquad\quad\cdot\<\PhiXiMprime[2:p] \Phimap{(0/1)}{l} \circ \PhiXiMprime[2:p]\>\\
  =& \frac{1}{4}\Big\{\EXP[\|\PhiXiM[2:p]\|_F^2\cdot\|\PhiXiMprime[2:p]\|_F^2]\\
\pm&\EXP[\tr(\PhiXiM[2:p]\mlvec{H}_l\PhiXiM[2:p]\mlvec{H}_l) \|\PhiXiMprime[2:p]\|_F^2]\\
\pm&\EXP[\|\PhiXiM[2:p]\|_F^2 \tr(\PhiXiMprime[2:p]\mlvec{H}_l\PhiXiMprime[2:p]\mlvec{H}_l)]\\
\pm&\EXP[
    \tr(\PhiXiM[2:p]\mlvec{H}_l\PhiXiM[2:p]\mlvec{H}_l) \\
  &\qquad\qquad\quad\cdot\tr(\PhiXiMprime[2:p]\mlvec{H}_l\PhiXiMprime[2:p]\mlvec{H}_l) 
    ]
    \big\}
  \\
  =&\frac{1 + O(d^{-2})}{4}
     \EXP[\|\PhiXiM[2:p]\|_F^2\cdot\|\PhiXiMprime[2:p]\|_F^2]
\end{align}

Therefore the diagonal elements of the Gram matrix has second moments:
\begin{align}
  & \EXP[\|\PhiXiM[1:p]\|_F^2\cdot\|\PhiXiM[1:p]\|_F^2]\\
  =&\frac{1 + O(d^{-2})}{4}\EXP[\|\PhiXiM[2:p]\|_F^2\cdot\|\PhiXiM[2:p]\|_F^2]\\
  =&(1 + O(pd^{-2}))\frac{\tr(\mlvec{M}^2)^2}{4^p}
\end{align}

We are now ready to calculate the second moments for the off-diagonal elements
of the Gram matrix. For $\mlvec{\xi}, \mlvec{\xi}\in\Xi$:
\begin{align}
  &\EXP\<\PhiXiM[l:p], \PhiXiMprime[l:p]\>^2\label{ln:app_square_of_product}\\
 =&\EXP\<\PhiXiM[l+1:p], (\Phi^{\xi_l}_l)^* \circ \Phi^{\xi_l^\prime}_l \circ\PhiXiMprime[l+1:p]\>^2
\end{align}

\paragraph{Case 1: $\xi_l = \xi_l^\prime$.}
\begin{align}
   &  (\ref{ln:app_square_of_product})\\
  = &\EXP\<\PhiXiM[l+1:p], \Phi^{(0/1)}_l \circ\PhiXiMprime[l+1:p]\>^2\\
  = &\frac{1}{4}\EXP\big[\tr(\PhiXiM[l+1:p] \PhiXiMprime[l+1:p])^2\\
   &\pm2\tr(\PhiXiM[l+1:p] \PhiXiMprime[l+1:p] )\\
   &\qquad\qquad\cdot\tr(\PhiXiM[l+1:p]\mlvec{H}_l\PhiXiMprime[l+1:p]\mlvec{H}_l)\\
   &+\tr(\PhiXiM[l+1:p]\mlvec{H}_l\PhiXiMprime[l+1:p]\mlvec{H}_l)^2\big]\\
  = &\frac{1}{4}\{(1\pm\frac{2}{d^2 - 1})\EXP\big[\tr(\PhiXiM[l+1:p] \PhiXiMprime[l+1:p])^2\big]\\
    & + O(d^{-2})\EXP[\|\PhiXiM[l+1:p]\|_F^2\cdot\|\PhiXiMprime[l+1:p]\|_F^2]\}
\end{align}
\paragraph{Case 2: $\xi_l,\xi_l^\prime\in\{1,2\}$ and $\xi_l\neq \xi_l^\prime$.}
Up to a sign flip, we have
\begin{align}
  &  (\ref{ln:app_square_of_product})\\
  = &\EXP\<\PhiXiM[l+1:p], \Phi^{(2)}_l \circ\PhiXiMprime[l+1:p]\>^2\\
  = &-\frac{1}{4}\EXP[\<\PhiXiM[l+1:p],\\
      &\qquad\quad\quad\mlvec{H}_l\PhiXiMprime[l+1:p] - \PhiXiMprime[l+1:p] \mlvec{H}_l\>^2]\\
  = &-\frac{1}{4}\EXP\{
      \tr(\PhiXiMprime[l+1:p] \PhiXiM[l+1:p]\mlvec{H}_l)^2\\
      &+\tr(\PhiXiM[l+1:p] \PhiXiMprime[l+1:p]\mlvec{H}_l)^2\\
  &-2\tr( \PhiXiM[l+1:p] \PhiXiMprime[l+1:p]\mlvec{H}_l)\\
  &\quad\cdot\tr( \PhiXiMprime[l+1:p] \PhiXiM[l+1:p]\mlvec{H}_l)
      \}\\
  = &-\frac{1}{2(d - d^{-1})}
      \EXP\big[\tr((\PhiXiM[l+1:p]\PhiXiMprime[l+1:p])^2)\\
      &-       \tr(\PhiXiM[l+1:p]^2 \PhiXiMprime[l+1:p]^2)\big]\\
  \leq & \frac{1}{d - d^{-1}}\EXP[
         \tr(\PhiXiM[l+1:p]^2) \tr(\PhiXiMprime[l+1:p]^2)]
\end{align}

\paragraph{Case 3: Either $\xi_l$ or $\xi_l^\prime=0$ and $\xi_l\neq
  \xi_l^\prime$.} For the case where $\xi_l \neq \xi_l^\prime$, and one of
$\xi_l$ and $\xi_l^\prime$ is $0$, the correlation 
$\EXP\<\PhiXiM[l:p], \PhiXiMprime[l:p]\>^2 = 0$.

Combining the above three cases, we have the variance bounded:
\begin{align}
  \mathbb{V}(\<\Phi_{\mlvec\xi}(\mlvec{M}), \Phi_{\mlvec\xi^\prime}(\mlvec{M})\>)
  =O(d^{-1})\frac{\tr(\mlvec{M}^2)^2}{4^p}
\end{align}
\end{proof}

\setcounter{equation}{0}
\section{Proofs for Upper Bounds}
\label{sec:app_upperbound}
For any $p$-parameter quantum circuit of consideration, we can express the
circuit as:
\begin{align}
  \mlvec{U}(\mlvec\theta) = \mlvec{V}_p(\theta_p)\mlvec{V}_{p-1}(\theta_{p-1})\cdots\mlvec{V}_1(\theta_1),
\end{align}
where $\mlvec{V}_l(\theta_l) = \exp(-i\theta_l\mlvec{H}_l)$ for some Hermitian
$\mlvec{H}_l$. We can bound the number of local minima in
$L(\mlvec\theta;\S)$ depending the choice of $\{\mlvec{H}_l\}_{l=1}^p$. In this section,
we provide a proof to Theorem~\ref{thm:upper} in
Section~\ref{sec:upperbound}:
\upperthm*
QNN with multi-qubit parameterized gates can have generating Hamiltonians $\{\mlvec{H}_l\}_{l=1}^p$
with more than two different eigenvalues. Specifically we consider Hamiltonians
with integral eigenvalues $\{E_1,\cdots,E_d\}$ such that
$\max_{c,c^\prime}|E_c-E_c^\prime| \leq \Delta$.
We can generalize Theorem~\ref{thm:upper} as:
\begin{theorem}[An upper bound for the more general setting]
  \label{thm:upper-gen}
  Consider $p$-parameter quantum neural networks composed of unitaries generated by
  Hamiltonians with integral spectrum gaps bounded by $\Delta$.
  The loss function $L(\mlvec\theta; \S)$ possesses at most
  $(4\Delta p)^p$ local minima, within each period, provided that the instance
  is not degenerated (i.e. the number of critical points is finite).
\end{theorem}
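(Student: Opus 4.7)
The plan is to follow the template of Theorem~\ref{thm:upper}, simply tracking the spectrum gap $\Delta$ through the Fourier-degree argument. The first step is to redo the Heisenberg-picture expansion of $\mlvec{M}(\mlvec\theta)=\mlvec{U}^\dagger(\mlvec\theta)\mlvec{M}\mlvec{U}(\mlvec\theta)$ for general integer-spectrum $\mlvec{H}_l=\sum_{c=1}^d E_{l,c}\mlvec{u}_{l,c}\mlvec{u}_{l,c}^\dagger$. A single conjugation by $e^{-i\theta_l\mlvec{H}_l}$ introduces phases $e^{i\theta_l(E_{l,c'}-E_{l,c})}$, so each layer contributes to the $l$-th Fourier frequency an eigenvalue-difference of absolute value at most $\Delta$. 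Composing the $p$ layers shows that the Fourier support $K$ of $\mlvec{M}(\mlvec\theta)$ satisfies $|k_l|\leq \Delta$ per coordinate and $\sum_l|k_l|\leq p\Delta$ in $\ell_1$-norm.

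Because $L(\mlvec\theta;\S)$ is a sum of squared terms $(\tr(\mlvec\rho_i\mlvec{M}(\mlvec\theta))-y_i)^2$, expanding the squares convolves the Fourier spectra and at most doubles both bounds: the support of $\hat L$ satisfies $|k_l|\leq 2\Delta$ and $\sum_l|k_l|\leq 2p\Delta$. The change of variables $c_l=\cos(\theta_l/T_l)$, $s_l=\sin(\theta_l/T_l)$ then maps each partial derivative $G_l(\mlvec\theta)$ to a multivariate polynomial $g_l(c_1,s_1,\dots,c_p,s_p)$ whose degree in each pair $(c_{l'},s_{l'})$ is at most $2\Delta$ and whose total degree is at most $2p\Delta$, by the Chebyshev-style identities writing $\cos(k\theta)$ and $\sin(k\theta)$ as degree-$k$ polynomials in $\cos\theta,\sin\theta$. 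Critical points of $L$ within one period correspond to real roots of the system $\{g_l=0,\; c_l^2+s_l^2-1=0:l\in[p]\}$ in $2p$ variables, and under the non-degeneracy hypothesis B\'ezout's theorem yields
\begin{align}
\prod_{l=1}^p 2\cdot\prod_{l=1}^p \deg(g_l) \;\leq\; 2^p(2p\Delta)^p=(4p\Delta)^p.
\end{align}

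The main technical point requiring care is the Fourier-degree accounting across the $p$ layers. In the two-level case $\Delta=2$ each conjugation either leaves a coordinate untouched or contributes $\pm 2$, so the bound is immediate. For general integer spectra one must verify that stacking conjugations cannot create frequencies larger than $\Delta$ in any coordinate: this holds because each layer acts on the single coordinate $\theta_l$ independently of the others, and the frequency it introduces is one of the finitely many gaps $E_{l,c}-E_{l,c'}$, each of magnitude at most $\Delta$. Once that per-coordinate bound is established, summing over $l$ gives the $\ell_1$-bound $p\Delta$, and the remainder of the argument proceeds mechanically as above.
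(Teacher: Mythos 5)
Your proposal is correct and follows essentially the same route as the paper: bound the Fourier degree of $L(\mlvec\theta;\S)$ by expanding $\mlvec{M}(\mlvec\theta)$ in the eigenbases of the $\mlvec{H}_l$ (each coordinate's frequency is an eigenvalue gap, hence at most $\Delta$, doubled to $2\Delta$ by squaring the residuals), then convert the critical-point equations to a polynomial system via $c_l=\cos(\theta_l/T_l)$, $s_l=\sin(\theta_l/T_l)$ and apply B\'ezout under the non-degeneracy assumption to get $2^p(2\Delta p)^p=(4\Delta p)^p$. The paper's Lemma on the Fourier transform of the loss carries the same per-coordinate and $\ell_1$ accounting you describe, so no further comment is needed.
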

Note that any Hamiltonians with rational eigenvalues are included with proper scaling and shifting.

In Subection~\ref{subsec:app_ft}, we provide an upper bound on the Fourier
degree of the loss function. In Subsection~\ref{subsec:app_bezout}, we bound the
number of local minima for functions with bounded Fourier degree by considering
the number of roots of a polynomial system.
\subsection{Fourier Spectrum of the Loss Function}
\label{subsec:app_ft}
We first present a lemma on the Fourier spectrum of the loss function.
For all $l\in[p]$, let $\{E_i^{(l)}\}_{i=1}^d$ be the integral eigenvalues for $\mlvec{H}_l$ and let $\Delta_l$
denote the largest eigen-gap in absolute value: $\Delta_l:=
\max_{i,j\in[d]}|E_i^{(l)} - E_j^{(l)}|$. For arbitrary choice of training set $\S$, the
loss function $L(\mlvec\theta;\S)$ in Equation~\ref{eqn:loss} has the
following property:
\begin{lemma}[Fourier Transformation of the loss function: Generalized version]
\label{lm:FT}
  Let $\hat{L}: \real^p\rightarrow \complex$ be the
  Fourier Transformation of $L(\mlvec\theta;\S)$, namely for any
  $\mlvec{k}=(k_1,k_2,\cdots, k_p)^T\in\mathbb{Z}^p$, define:
  \begin{multline}
    \hat{L}(\mlvec k) := \frac{1}{T_1 T_2 \cdots T_p}\int_{[0,
      T_1]\times\cdots\times[0, T_p]} L(\mlvec\theta; \S)\\
    \cdot\exp\Bigl(-i \sum_{l=1}^p \frac{k_l \theta_l}{T_l}\Bigr)
    \mlvec d\mlvec{\theta}
  \end{multline}
  where $T_l$ is the period of $L(\mlvec\theta;\S)$ in $\theta_l$.
  Let $K$ be the support of $\hat{L}$ (i.e. $K:=\{\mlvec k\in\real^p |
  \hat{L}(\mlvec k) \neq 0\}$).
  The Fourier degree of the loss function $\Delta_K := \max_{\mlvec
    k\in K}\sum_{l=1}^p|k_l|$ is bounded by $\sum^p_{l=1} \frac{T_l\cdot\Delta_{l}}{\pi}$.
\end{lemma}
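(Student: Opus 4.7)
My plan is to expand the loss function explicitly as a trigonometric polynomial using the spectral decompositions of the generating Hamiltonians, and then read off the maximum frequency per coordinate.

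First, I will decompose each $\mlvec{H}_l = \sum_{j=1}^d E_j^{(l)} \mlvec{P}_j^{(l)}$ spectrally, giving $\mlvec{V}_l(\theta_l) = \sum_j e^{-i\theta_l E_j^{(l)}} \mlvec{P}_j^{(l)}$. Substituting into $\mlvec{U}(\mlvec{\theta}) = \mlvec{V}_p(\theta_p)\cdots \mlvec{V}_1(\theta_1)$ and its adjoint, and then into the single-sample readout $f(\mlvec{\rho},\mlvec{\theta}) = \tr\bigl(\mlvec{U}(\mlvec{\theta})\mlvec{\rho}\mlvec{U}^\dagger(\mlvec{\theta})\mlvec{M}\bigr)$, every term in the resulting expansion carries an exponential factor of the form $\exp\bigl(i\sum_l \theta_l (E_{j_l'}^{(l)} - E_{j_l}^{(l)})\bigr)$ with a matrix-valued coefficient built from products of spectral projectors. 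Consequently $f(\mlvec{\rho},\mlvec{\theta})$ is a trigonometric polynomial whose frequencies in coordinate $\theta_l$ lie in the difference set $\{E_{j}^{(l)} - E_{j'}^{(l)} : j,j'\in[d]\}$, and are in particular bounded in magnitude by $\Delta_l$.

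Next, I will lift this to the loss. Expanding $L(\mlvec{\theta};\S) = \frac{1}{m}\sum_i f(\mlvec{\rho}_i,\mlvec{\theta})^2 - \frac{2}{m}\sum_i y_i f(\mlvec{\rho}_i,\mlvec{\theta}) + \text{const}$, the only frequency-broadening step is the square. Multiplying two trigonometric polynomials with per-coordinate frequency bound $\Delta_l$ yields one with bound $2\Delta_l$, since $e^{i\omega\theta_l}\cdot e^{i\omega'\theta_l} = e^{i(\omega+\omega')\theta_l}$ with $|\omega+\omega'| \le 2\Delta_l$. Hence $L(\mlvec{\theta};\S)$, as a function of $\theta_l$, has angular frequencies bounded in magnitude by $2\Delta_l$.

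Finally, I will match this against the Fourier series notation. Since $L$ is periodic in $\theta_l$ with period $T_l$, its Fourier expansion uses modes whose angular frequency in $\theta_l$ is of the form $2\pi k_l / T_l$ for $k_l \in \mathbb{Z}$. A non-vanishing coefficient $\hat{L}(\mlvec{k})$ therefore requires $2\pi |k_l|/T_l \le 2\Delta_l$, i.e.\ $|k_l| \le T_l\Delta_l/\pi$. Summing over $l\in[p]$ gives $\Delta_K = \max_{\mlvec{k}\in K}\sum_{l=1}^p |k_l| \le \sum_{l=1}^p T_l\Delta_l/\pi$, proving the lemma. The main point requiring care is the bookkeeping between the physical angular frequencies emerging from the spectral expansion (which are differences of integer eigenvalues) and the Fourier-mode indices $k_l$ tied to the period $T_l$; everything else is a straightforward consequence of the spectral decomposition and the squaring of a trigonometric polynomial.
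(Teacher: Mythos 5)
Your proposal is correct and follows essentially the same route as the paper's proof: spectrally decompose each $\mlvec{H}_l$, expand $f(\mlvec{\rho},\mlvec{\theta})$ as a trigonometric polynomial whose per-coordinate frequencies are eigenvalue differences bounded by $\Delta_l$, observe that squaring in the loss at most doubles each frequency, and convert the resulting per-coordinate bound $|k_l|\leq T_l\Delta_l/\pi$ into the $\ell_1$ bound by summing over $l$. Your explicit care with the $2\pi$ normalization between angular frequencies and mode indices is warranted (the paper's stated Fourier convention is loose on this point), but the argument and the final bound coincide with the paper's.
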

\begin{proof}
  For all $l\in[p]$, let $\{\mlvec{u}_i^{(l)}\}_{i=1}^d$ be the eigenvectors of $\mlvec{H}_l$ with corresponding
  eigenvalues $\{E_i^{(l)}\}_{i=1}^d$:
  \begin{equation}
  \mlvec{H}_l = \sum_{i=1}^dE_i^{(l)} \mlvec{u}_i^{(l)}
  {\mlvec{u}_i^{(l)}}^\dagger
  \end{equation}
  The unitary gate paramtrized by $\theta_l$ is therefore
  \begin{align}
  \mlvec{V}_l(\theta_l) = \exp(-i\theta_l\mlvec{H}_l) = \sum_{i=1}^de^{-i\theta_lE_i^{(l)}}\mlvec{u}_i^{(l)}{\mlvec{u}_i^{(l)}}^\dagger
  \end{align}
  And the unitary gate $\mlvec{U}(\mlvec{\theta})$ can be written as:
  \begin{align}
    \mlvec{U}(\mlvec{\theta}) &= \mlvec{V}_p(\theta_p)\mlvec{V}_{p-1}(\theta_{p-1})\cdots \mlvec{V}_1(\theta_1)\\
                              &= \bigl(\sum_{i_p\in[d]}e^{-i\theta_p E^{(p)}_{i_p}} {\mlvec{u}_{i_p}^{(p)}}{\mlvec{u}_{i_p}^{(p)}}^{\dagger}\bigr)\\
                                &\qquad\cdots\bigl(\sum_{i_1\in[d]}e^{-i\theta_1 E^{(1)}_{i_1}} {\mlvec{u}_{i_1}^{(1)}}{\mlvec{u}_{i_1}^{(1)}}^{\dagger} \bigr)\\
                              &=\sum_{\mlvec{i}\in[d]^p}
                                \Bigl(\prod_{l=1}^{p-1}{\mlvec{u}_{i_{l+1}}^{(l+1)}}^\dagger {\mlvec{u}_{i_{l}}^{(l)}}\Bigr)
                                \Bigl( \prod_{l=1}^pe^{-i\sum_{l=1}^p\theta_l E_{i_l}^{(l)}} \Bigr)\\
                                &\qquad\cdot(\mlvec{u}_{i_p}^{(p)}{\mlvec{u}_{i_1}^{(1)}}^\dagger)
  \end{align}
  The output of the neural network given the density matrix $\mlvec{\rho}$ is therefore
  \begin{align}
    f(\mlvec\rho,\mlvec\theta)
    &= \tr\Bigl(\mlvec{V}_p(\theta_p)\mlvec{V}_{p-1}(\theta_{p-1})\cdots \mlvec{V}_1(\theta_1)\mlvec\rho\\
    &\qquad\cdot\mlvec{V}_1(\theta_1)^\dagger \mlvec{V}_{2}(\theta_{2})^\dagger \cdots \mlvec{V}_p(\theta_p)^\dagger\mlvec{M} \Bigr)\\
    &=\sum_{\mlvec{i}\in[d]^p}\sum_{\mlvec{j}\in[d]^p}
      \hat{f}_{\mlvec{ij}}(\mlvec\rho) \cdot e^{i\sum_{l=1}^p\theta_l(E^{(l)}_{j_l}-E^{(l)}_{i_l})}
    \end{align}
    where for any $\mlvec{i}, \mlvec{j} \in [d]^p$
    \begin{multline}
      \hat{f}_{\mlvec{ij}}(\mlvec\rho)=
      \bigl({\mlvec{u}_{i_1}^{(1)}}^\dagger\mlvec{\rho}{\mlvec{u}_{j_1}^{(1)}}\bigr)
      \bigl({\mlvec{u}_{j_p}^{(p)}}^\dagger\mlvec{M}{\mlvec{u}_{i_p}^{(p)}}\bigr)\\
      \cdot\Bigl(\prod_{l=1}^{p-1}{\mlvec{u}_{i_{l+1}}^{(l+1)}}^\dagger{\mlvec{u}_{i_{l}}^{(l)}}{\mlvec{u}_{j_l}^{(l)}}^\dagger{\mlvec{u}_{j_{l+1}}^{(l+1)}}\Bigr)
  \end{multline}
  This indicates the Fourier coefficients of $f(\mlvec{\rho};\mlvec{\theta})$ is
  supported on a subset of $\tilde{K}:=\{(k_1,\cdots,k_p) |\forall l\in[p],
  \exists i,j\in[d]: k_l =
  \frac{(E^{(l)}_{i}-E^{(l)}_{j})T_l}{2\pi}\}$, and that the Fourier
  degree of $f(\mlvec\rho,\mlvec\theta)$ is bounded by
  $\sum_{l=1}^p\frac{T_l\cdot\Delta_l}{2\pi}$.
  Therefore for arbitray $\mlvec{\rho}$ and label
  $y$, the Fourier degree of the square loss $\bigl(
  f(\mlvec{\rho},\mlvec\theta)-y \bigr)^2$, $\Delta_K\leq
  \sum_{l=1}^p\frac{T_l\cdot\Delta_l}{\pi}$.
  Same holds for loss function $L(\mlvec\theta;\S)$
  with arbitrary training set $\S$.
\end{proof}
  For $\mlvec{H}_l$ with integral eigenvalues, 
  \begin{align}
     & \exp(-i(\theta_l + 2\pi)\mlvec{H}_l)\\
    =& \sum_{i=1}^de^{-i(\theta_l + 2\pi)E_i^{(l)}}\mlvec{u}_i^{(l)}{\mlvec{u}_i^{(l)}}^\dagger\\
    =& \exp(-i\theta_l\mlvec{H}_l).
  \end{align}
  Hence $T_l\leq 2\pi$ and $\Delta_{K} 
  \leq 2\sum_{l=1}^p\Delta_l$. Let $\Delta$ be $\max_l\Delta_l$, the Fourier
  degree $\Delta_K$ of the loss function is bounded by $2\Delta p$. 

For Hamiltonians with two-levels, we have the following corollary:
\begin{corollary}
  For quantum neural network instances composed of unitaries generated by
  two-level Hamiltonians, the Fourier degree of the loss function
  $L(\mlvec\theta;\S)$ is bounded by $2p$ for arbitrary dataset $\S$.
\end{corollary}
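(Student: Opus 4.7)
The plan is to obtain the Corollary as a direct specialization of Lemma~\ref{lm:FT}, with the only nontrivial ingredient being a sharper bound on the period $T_l$ in the two-level case. Lemma~\ref{lm:FT} already gives $\Delta_K \leq \sum_{l=1}^p T_l\Delta_l/\pi$, so all I need is to pin down $T_l$ and $\Delta_l$ for each $l$ when $\mlvec H_l$ is a traceless Hermitian satisfying $\mlvec H_l^2 = \mlvec I$ (which is the normalization adopted throughout Section~\ref{sec:construction}).

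First, I would note that under this normalization the eigenvalues of $\mlvec H_l$ are $\pm 1$, so the largest eigen-gap is $\Delta_l = 2$ for every $l$. The key step is then to establish $T_l \leq \pi$, i.e., that the loss function is $\pi$-periodic in each coordinate, rather than merely $2\pi$-periodic as the bare argument for integral spectra would give. For this I would invoke Claim~\ref{claim:heisenbergpic} from Subsection~\ref{subsec:app_properties}: the observable in the Heisenberg picture expands as
\begin{equation}
  \mlvec{U}^\dagger(\mlvec\theta)\mlvec M\mlvec U(\mlvec\theta)
  = \sum_{\mlvec\xi\in\{0,1,2\}^p}\Phi_{\mlvec\xi}(\mlvec M)
  \prod_{l:\xi_l=1}\cos 2\theta_l\prod_{l':\xi_{l'}=2}\sin 2\theta_{l'},
\end{equation}
so each $\theta_l$ enters only through $\cos 2\theta_l$ and $\sin 2\theta_l$. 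Consequently the QNN output $f(\mlvec\rho,\mlvec\theta)=\tr(\mlvec\rho\,\mlvec U^\dagger \mlvec M\mlvec U)$ and hence the square-loss $L(\mlvec\theta;\S)$ are all $\pi$-periodic in each $\theta_l$, giving $T_l\leq \pi$.

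Plugging $T_l\leq\pi$ and $\Delta_l=2$ into Lemma~\ref{lm:FT} yields
\begin{equation}
  \Delta_K \;\leq\; \sum_{l=1}^p \frac{T_l\,\Delta_l}{\pi} \;\leq\; \sum_{l=1}^p \frac{\pi\cdot 2}{\pi} \;=\; 2p,
\end{equation}
which is the desired bound. I do not anticipate any real obstacle here: once Claim~\ref{claim:heisenbergpic} supplies the factor-of-two frequency doubling, the rest is a one-line substitution into the already-established Fourier-degree inequality. The only care to take is to state the Corollary's hypotheses consistently with those of Section~\ref{sec:construction} (trace-zero, $\pm 1$-spectrum) so that both $T_l\leq\pi$ and $\Delta_l=2$ apply uniformly across all $l\in[p]$ and for an arbitrary training set $\S$.
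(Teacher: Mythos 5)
Your proposal is correct and follows the same route as the paper: specialize Lemma~\ref{lm:FT} by noting $\Delta_l=2$ for $\pm 1$ eigenvalues and $T_l\leq\pi$ from the $\cos 2\theta_l$, $\sin 2\theta_l$ dependence in the Heisenberg-picture expansion, then substitute to get $\Delta_K\leq 2p$. The paper's own proof is exactly this one-line substitution, citing the $\pi$-periodicity "shown earlier" where you cite Claim~\ref{claim:heisenbergpic} explicitly.
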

\begin{proof}
  As shown earlier, for any Hermitian $\mlvec{M}$ and for $\mlvec H_l$ with the eigenvalues $\pm
  1$, 
  the output $f(\mlvec{\rho},\mlvec{\theta})$ is periodic in $\pi$ for each coordinate.
  Also notice for all $l\in[p]$, $\Delta_l = 2$. Hence the Fourier degree
  $\Delta_K$ of
  $L(\mlvec\theta;\S)$ is bounded by $2p$.
\end{proof}

\subsection{Change of Variable and Root Counting}
\label{subsec:app_bezout}
In this subsection, we elaborate on the change of variable and the upper bound on
number of critical points by B\'ezout's Theorem. This would complete our proof for
Theorem~\ref{thm:upper} and \ref{thm:upper-gen}.

Let $T_l$ be the period of $L(\mlvec{\theta}; \S)$ in $\theta_l$, and 
$\hat{L}(\mlvec k)$ the Fourier coefficient for $\mlvec k = (k_1, \cdots, k_p)^T\in \mathbb{Z}^p$. We have
\begin{multline}
  L(\mlvec\theta;\S)
             = \sum_{\mlvec k\in K}\hat{L}(\mlvec k)\Bigl(\cos\frac{k_1\theta_1}{T_1}+ i\sin\frac{k_1\theta_1}{T_1}\Bigr)\cdot\\
               \cdots \cdot \Bigl(\cos\frac{k_p\theta_p}{T_p}+ i\sin\frac{k_p\theta_p}{T_p}\Bigr)
\end{multline}
Here $K \subseteq \mathbb{Z}^p$ is the support of the Fourier coefficients.

By definition, a local minimum must be a critical point, hence it suffices to bound
the number of critical points for $L(\mlvec{\theta};\S)$.
Define $G_l(\mlvec\theta)$ as
\begin{align}
  G_l(\mlvec\theta) &=\frac{\partial}{\partial\theta_l}L(\mlvec\theta;\S)\\
  &= \sum_{\mlvec k\in K}k_l\hat{L}(\mlvec k)\bigl(-\sin\frac{k_l\theta_1}{T_l}+ i\cos\frac{k_l\theta_l}{T_l}\bigr)\\
  &\cdot\prod_{l^\prime\neq l} \bigl(\cos\frac{k_{l^\prime}\theta_{l^\prime}}{T_{l^\prime}}+ i\sin\frac{k_{l^\prime}\theta_{l^\prime}}{T_{l^\prime}}\bigr)
\end{align}
We can tell from above expression that the Fourier spectrum of $G_l$ is supported on the same set $K$.
A critical point of $L(\mlvec{\theta};\S)$ must satisfies that for all
$l\in[p]$, $G_l(\mlvec\theta) = 0$.

By induction, $\cos k\theta$ can be expressed as a degree-$k$ polynomial of
$\cos\theta$ and $\sin k\theta$ as a degree-$(k-1)$ polynomial of $\cos\theta$ multiplied by $\sin\theta$. Consider the change of variable
\begin{align}
  c_l = \cos (\theta_l/T_l),\ s_l = \sin({\theta_l}/{T_l}),\
  \forall l\in[p].
\end{align}
Let $g_l(c_1, s_1, \cdots, c_p, s_p)$ be the multivariate polynomial constraints
corresponding to $G_l(\mlvec\theta)$ after the change of variable:
\begin{multline}
    \sum_{\mlvec k\in K}k_l\hat{L}(\mlvec k)
    \bigl(-s_lU_{k_l-1}(c_l) + iT_{k_l}(c_l)\bigr)\\
    \cdot\prod_{l^\prime\neq l}
    \bigl(T_{k_{l^\prime}}(c_{l^\prime})+ is_{l^\prime} U_{k_{l^\prime}-1}(c_{l^\prime})\bigr)
\end{multline}
where $T_k(\cdot)$ and $U_k(\cdot)$ are Chebyshev polynomials of the first and
second kind.
For each
$g_l$, the sum of degrees of $c_{l^\prime}$ and $s_{l^\prime}$ is bounded by
$\max_{\mlvec{k}\in K}|k_{l^\prime}|$, and the degree $\mathsf{deg}(g_l)$ of
$g_l$ is bounded by $\Delta_K = \max_{\mlvec{k}\in}\sum_{l=1}^p|k_l|$.
The change of variable is one-to-one from $\theta_l\in[0,T_l)$ to a pair of $(c_l,
s_l)\in\real^2$ under the constraint $c_l^2 + s_l^2 = 1$. Therefore, it suffices
to count the number of roots of the polynomial system with $2p$ parameters
and $2p$ constraints: 
\begin{align}
  \begin{cases}
    &g_1(c_1, s_1, \cdots, c_p, s_p)  = 0,\\ 
    &\vdots\\
    &g_p(c_1, s_1, \cdots, c_p, s_p)  = 0,\\ 
    &h_1(c_1, s_1, \cdots, c_p, s_p) = c_1^2 + s_1^2 - 1 =0,\\
    &\vdots\\
    &h_p(c_1, s_1, \cdots, c_p, s_p) = c_p^2 + s_p^2 - 1 =0.\\
  \end{cases}
\end{align}

Notice that for general polynomial system, the number of critical points can be
unbounded. For example, consider a system composed of constant polynomials,
every point in the domain is a critical point. This corresponds to constant
loss function, where the gradients vanishes everywhere with positive
semidefinite Hessians. For this reason, we will focus on the non-degenerated case
with finitely many critical points. Under the premise of non-degeneracy, 
B\'ezout's Theorem (e.g. Section 3.3 in \cite{cox2006using}) states that the number of roots can bounded by
the product of degree of polynomial constraints
$2^p\mathsf{deg}(g_1)\mathsf{deg}(g_2)\cdots\mathsf{deg}(g_p)$.
$ \leq (2\Delta_K)^p$.

We also prove a similar result for the more general case where the generators are
Hamiltonians with integral spectrum: let $\Delta$ be the largest eigen-gap for
each of the Hamiltonians, the number of local minima within each period is
upper bounded by $O((\Delta p)^p)$. 
Combined with results in Subsection~\ref{subsec:app_ft}, the proof for Theorem~\ref{thm:upper} and
\ref{thm:upper-gen} is complete.

\setcounter{equation}{0}
\section{Numerical Results}
\label{sec:app_num}
For all the experiments in this section, we
study the $p$-parameter QNN as mentioned in
Example~\ref{example:base} in \supref{sec:app_constructions}, where :
\begin{align}
  \mlvec{M} &:= \otimes_{l=1}^p(\mlvec{Y} + \mlvec{I})\\
  \mlvec H_l &:= \bigl( \otimes_{r=1}^{l-1}\mlvec{I} \bigr) \otimes \mlvec{Z} \otimes \bigl( \otimes_{l+1}^p\mlvec{I} \bigr),\ \forall l\in[p]
\end{align}

In \supref{subsec:app_exp_optimization}, we provide details and
more numerical results for experiments described in
Section~\ref{sec:experiments}.
In \supref{subsec:app_exp_visual}, we visualize the $2$-d loss
landscape of Example~\ref{example:address_decomposability}.

\subsection{Training with Gradient-based Methods}
\label{subsec:app_exp_optimization}
In this subsection, we use Example~\ref{example:address_minima_position} to
demonstrate that our construction can be hard to train with \textit{gradient-based methods}.
The loss function of the example can be expressed as
\begin{multline}
  \frac{1}{2p}\sum_{l=1}^p\big((\sin2\theta_l - \sin\frac{\pi}{50})^2\\ + \frac{1}{4}(\cos2\theta_l - \cos\frac{\pi}{50})^2 \big).
\end{multline}
with global minima at
$\mlvec\theta^\star =  (\frac{\pi}{100},\cdots,\frac{\pi}{100})^T$.

\paragraph{Hyperparameters.} For all three optimizers,
we choose the (initial) learning rate to be $0.01$. For \textsf{RMSProp}, we
choose the smoothing constant $\alpha$ for mean-square estimation to be $0.99$.
For \textsf{Adam}, we set the averaging coefficients $\beta_1=0.9$ for the
gradients and $\beta_2=0.999$ for the its square. For \textsf{L-BFGS} we choose
the history size to be $100$. The numbers of iterations for training are set to
$200$-th for each pair of instances and optimizers; as can be seen in
Figure~\ref{fig:training-curve}, all pairs have already converged at the
$100$-th iteration.

\paragraph{Training curves} We plot the
training curve for QNN instances with $2$, $4$, $8$, $16$ and $32$ parameters
with \textsf{Adam}, \textsf{RMSProp} and \textsf{L-BFGS}.
For each pair of instance and optimizer, we repeat the
experiments with uniform random initialization. As shown in
Figure~\ref{fig:training-curve},
for all the experiment setting considered here, while all initialization
converge efficiently, there are initializations that does not converge to the
global minima.
\begin{figure}[!htbp]
  \centering
  \subfigure[2 parameters: \textsf{Adam}]{
    \includegraphics[width=.3\linewidth]{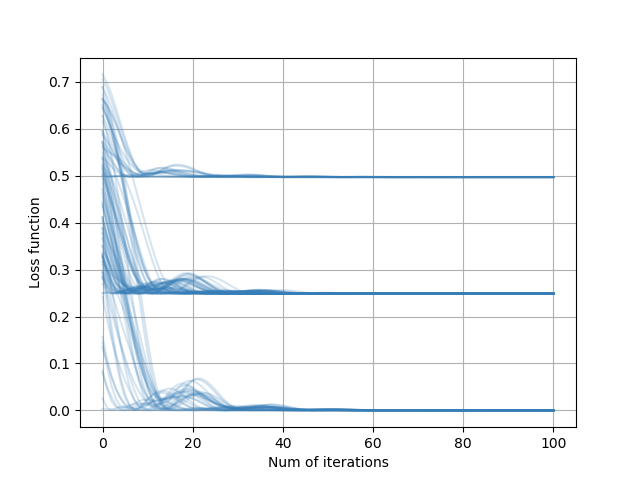}
  }
  \subfigure[2 parameters: \textsf{RMSProp}]{
    \includegraphics[width=.3\linewidth]{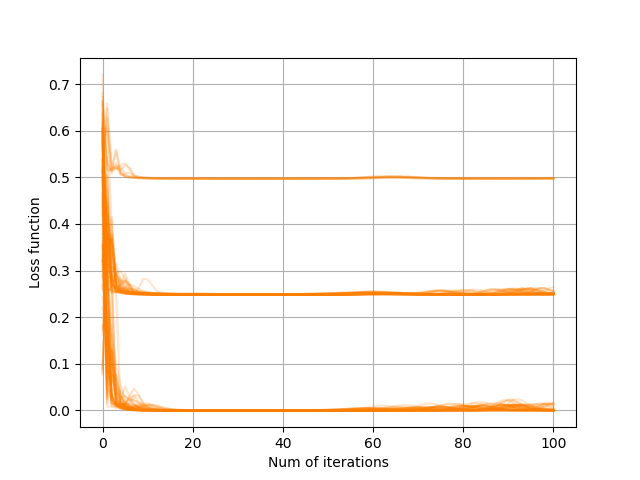}
  }
  \subfigure[2 parameters: \textsf{L-BFGS}]{
    \includegraphics[width=.3\linewidth]{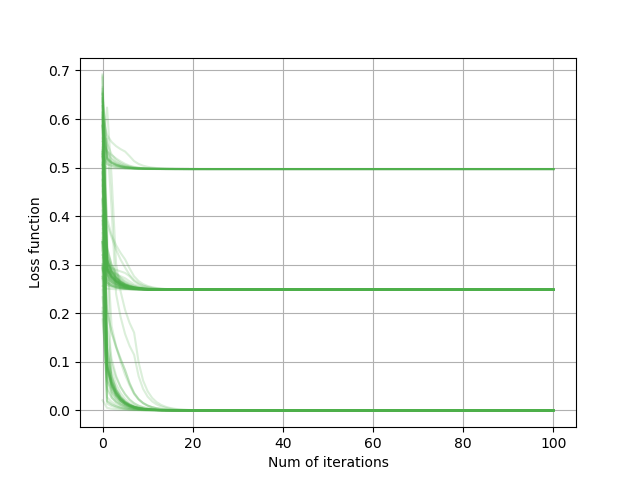}
  }\\
  \subfigure[4 parameters: \textsf{Adam}]{
    \includegraphics[width=.3\linewidth]{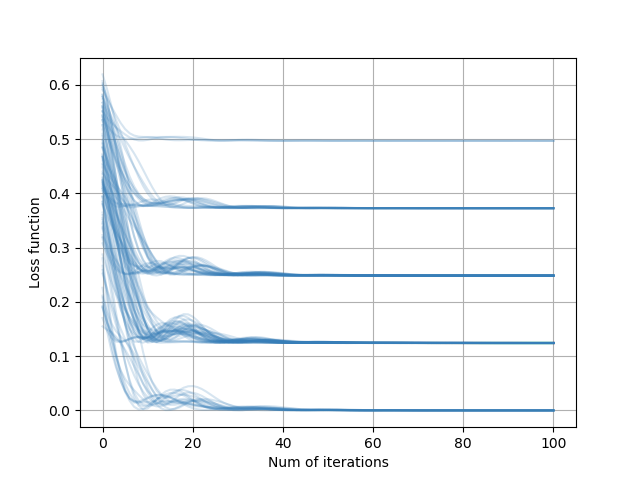}
  }
  \subfigure[4 parameters: \textsf{RMSProp}]{
    \includegraphics[width=.3\linewidth]{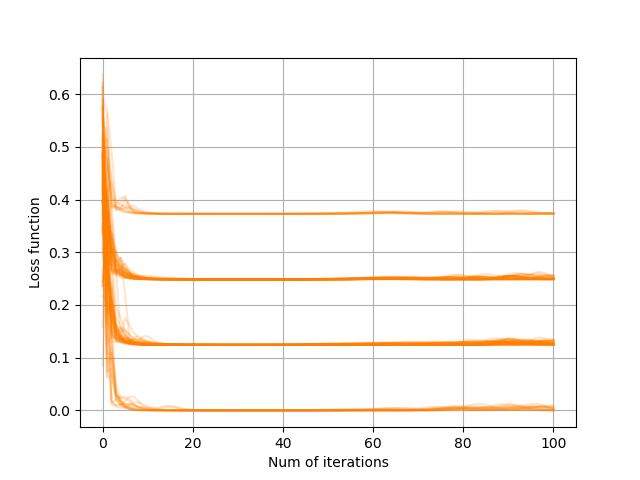}
  }
  \subfigure[4 parameters: \textsf{L-BFGS}]{
    \includegraphics[width=.3\linewidth]{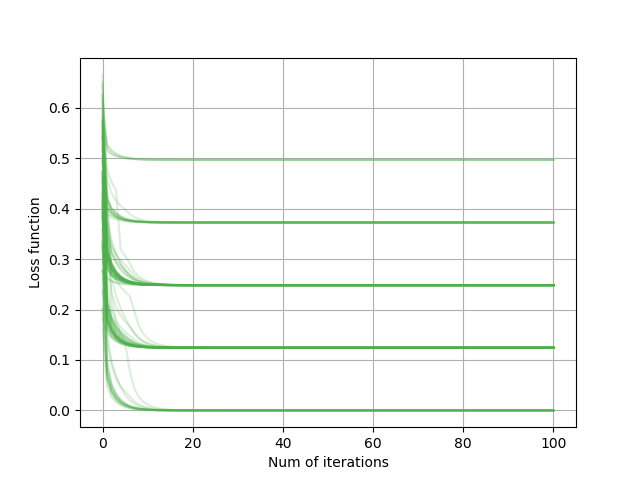}
  }\\
  \subfigure[8 parameters: \textsf{Adam}]{
    \includegraphics[width=.3\linewidth]{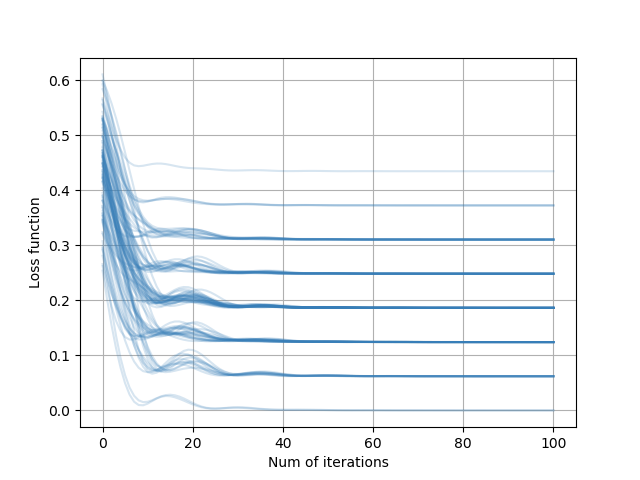}
  }
  \subfigure[8 parameters: \textsf{RMSProp}]{
    \includegraphics[width=.3\linewidth]{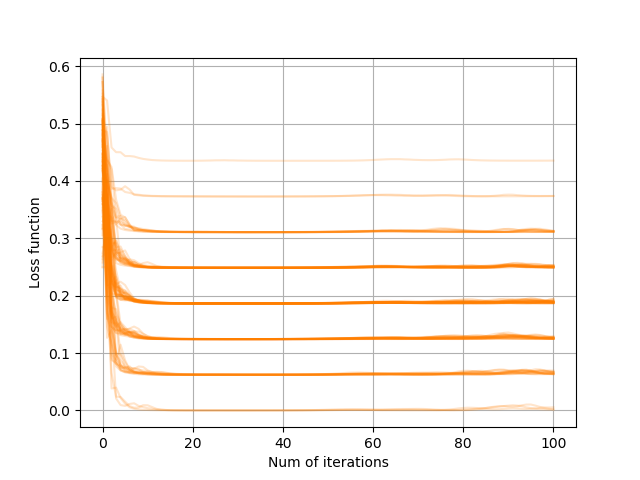}
  }
  \subfigure[8 parameters: \textsf{L-BFGS}]{
    \includegraphics[width=.3\linewidth]{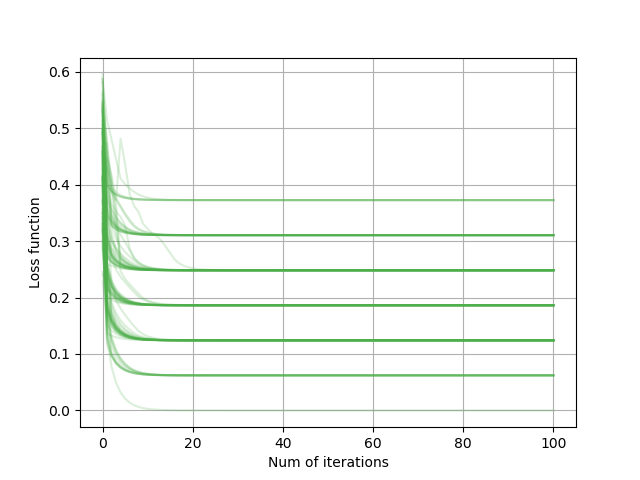}
  }\\
  \subfigure[16 parameters: \textsf{Adam}]{
    \includegraphics[width=.3\linewidth]{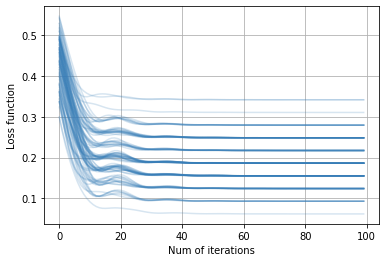}
  }
  \subfigure[16 parameters: \textsf{RMSProp}]{
    \includegraphics[width=.3\linewidth]{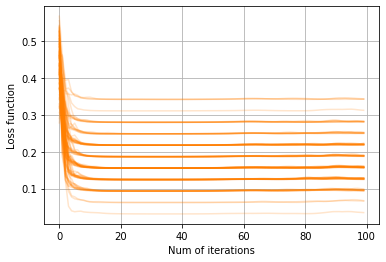}
  }
  \subfigure[16 parameters: \textsf{L-BFGS}]{
    \includegraphics[width=.3\linewidth]{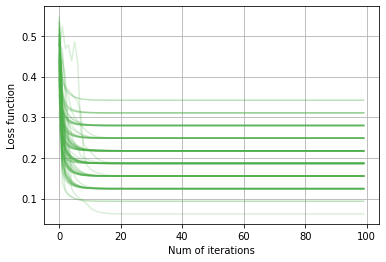}
  }
  \\
  \subfigure[32 parameters: \textsf{Adam}]{
    \includegraphics[width=.3\linewidth]{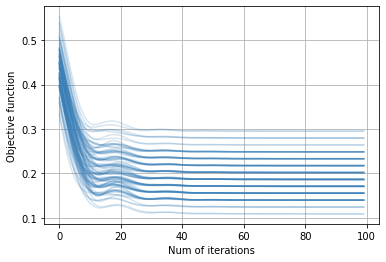}
  }
  \subfigure[32 parameters: \textsf{RMSProp}]{
    \includegraphics[width=.3\linewidth]{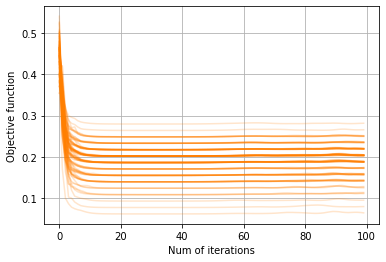}
  }
  \subfigure[32 parameters: \textsf{L-BFGS}]{
    \includegraphics[width=.3\linewidth]{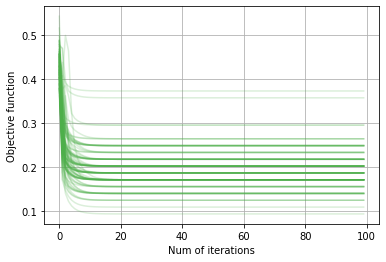}
  }
  \caption{
    Empirical risk minimization of different QNN instance with \textsf{Adam}, \textsf{RMSProp} and
    \textsf{L-BFGS}.
    For each experiment setting we repeat 100 times.
    }
  \label{fig:training-curve}
\end{figure}

\paragraph{More on distributions of function values} In
Figure~\ref{fig:more-dist}, we plot more results on the distribution of function
values under \textsf{RMSProp} and have similar observation as mentioned in Section~\ref{sec:experiments}.
\begin{figure}[!htbp]
  \centering
  \subfigure[2 parameters: \textsf{RMSProp}]{
    \includegraphics[width=.9\linewidth]{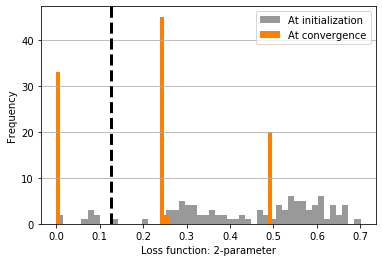}
  }\\
  \subfigure[4 parameters: \textsf{RMSProp}]{
    \includegraphics[width=.9\linewidth]{\imghome/pennylane-dist-4.png}
  }\\
  \subfigure[6 parameters: \textsf{RMSProp}]{
    \includegraphics[width=.9\linewidth]{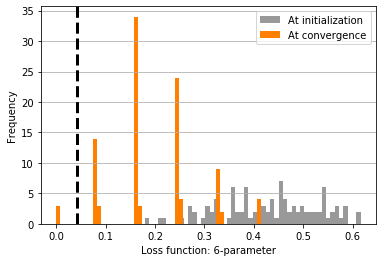}
  }\\
  \subfigure[8 parameters: \textsf{RMSProp}]{
    \includegraphics[width=.9\linewidth]{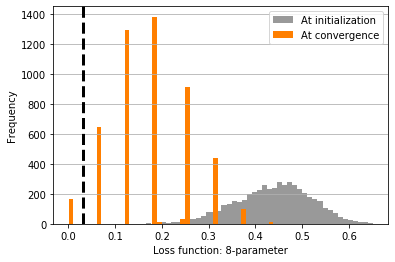}
  }
\end{figure}
\begin{figure}[!htbp]
  \subfigure[10 parameters: \textsf{RMSProp}]{
    \includegraphics[width=.9\linewidth]{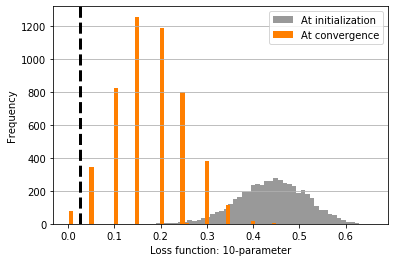}
  }\\
  \subfigure[12 parameters: \textsf{RMSProp}]{
    \includegraphics[width=.9\linewidth]{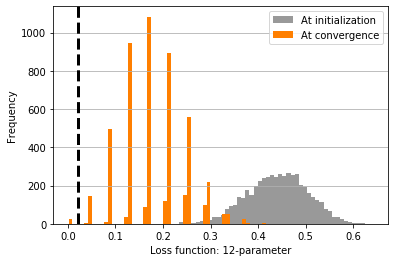}
  }\\
  \subfigure[14 parameters: \textsf{RMSProp}]{
    \includegraphics[width=.9\linewidth]{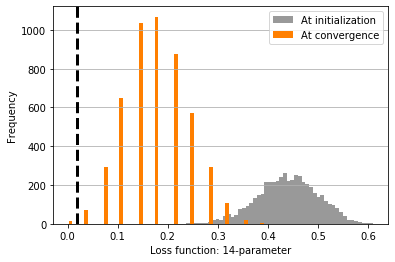}
  }\\
  \subfigure[16 parameters: \textsf{RMSProp}]{
    \includegraphics[width=.9\linewidth]{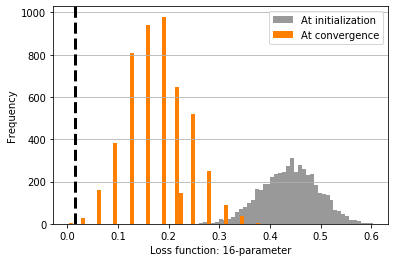}
  }
  \end{figure}
\begin{figure}[!htbp]
  \subfigure[18 parameters: \textsf{RMSProp}]{
    \includegraphics[width=.9\linewidth]{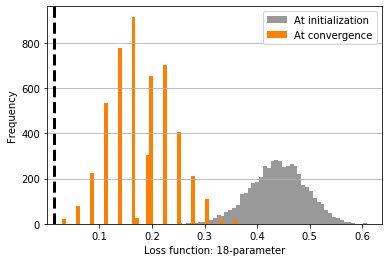}
  }
  \caption{
    Distribution of function values of QNN instances with \textsf{RMSProp}. For instances
  with size $2,4,6$, the experiments are repeated 200 times; for the rest of the
instances, the experiments are repeated 5000 times.}
  \label{fig:more-dist}
\end{figure}

\paragraph{Calculating the empirical probabilities} For all the instances of
consideration, the function values of local minima can be calculated. For
$p$-parameter instances, the function value of global minima is $0$, and for the
other local minima, the function values are at least $0.5/p$. For calculating
the empirical probability that random initialization converges to the global
minima, we count the number of trial that converge to values less than $0.25/p$.

\subsection{Visualization: Non-decomposable Construction}
\label{subsec:app_exp_visual}
For low-dimensional cases, it is possible to visualize the loss function of Example~\ref{example:address_decomposability} of the
construction by plotting the contour of the landscape. In
Figure~\ref{fig:contour}, we plot the contour of our construction for
$p=2$, with loss function proportional to
\begin{multline}
  (\sin 2\theta_1 - \sin\frac{\pi}{50})^2 + (\sin 2\theta_2 -
  \sin\frac{\pi}{50})^2\\
  +  \frac{1}{16} ((\cos 2\theta_1 - \cos\frac{\pi}{50} )^2
  + \frac{3}{2}(\cos 2\theta_2 - \cos\frac{\pi}{50})^2) \\
  + \frac{1}{8} (\cos 2\theta_1\cos 2\theta_2 - \cos^2\frac{\pi}{50})^2
\end{multline}
The global minima are $(k_1\pi + \frac{\pi}{100}, k_2\pi + \frac{\pi}{100})$ with $k_1,k_2\in\integer$.
Within each period, there are a total of $4$ local minima where black box local
search methods might stuck at. Among them, the global minima are marked in
black. The gradient-based methods only converge to the global minimum when the
initial value of the parameter lies in certain region.
\begin{figure}[!htbp]   
  \begin{center}
    \includegraphics[width=\linewidth]{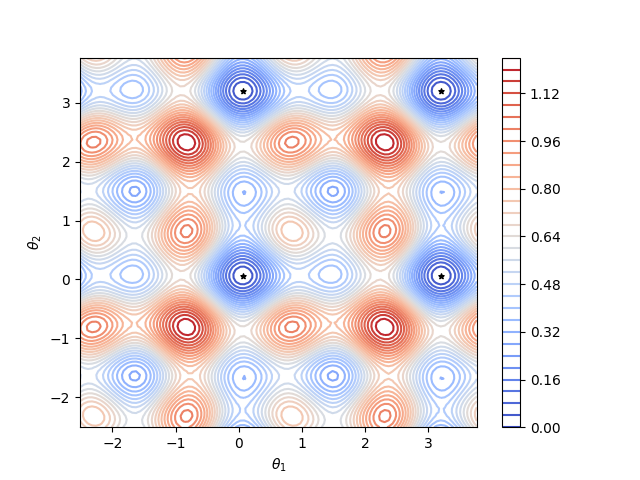}
    \caption{Landscape of the constructed QNN instance with $2$ qubits and $2$
      parameters. The global optima are marked in black. }
    \label{fig:contour}
  \end{center}
\end{figure}

\subsection{Robustness of the Constructions}
\label{subsec:app_noise}
Our construction above demonstrates that in the worst-case, under-parameterized
QNNs can have exponentially many local minima. It is natural to ask whether the
local minima in our constructions are stable under perturbation.
To this end, we repeated our experiments with Gaussian noises $\mathcal{N}(0, \sigma)$ added to the
labels. The function values at local minima, as shown in
Figure~\ref{fig:noisy_dist} (Cf. Figure~\ref{fig:dist0},\ref{fig:dist} in the main text and Figure~\ref{fig:more-dist}(h) 
in the supplementary material), have
changed, as the noise breaks the symmetry of sub-optimal minima. But as shown in Figure~\ref{fig:noisy_decay} (Cf.
Figure~\ref{fig:decay} in the main text), the
exponential decay of success rate in finding the global minima remains
for different $\sigma$ (recall that the labels in our construction are in
$[0,1]$). We used \textsf{RMSProp} optimizer, with other hyperparameters the
same as the pervious experiments. 

Moreover, by direct calculation of the suboptimality gaps and eigenvalues of Hessians at
local minima, it can be proved that our examples are indeed robust against random
label perturbations, quantum noise due to noisy gates, or due to the finite number of
measurements, and even \textbf{adversarial} perturbations, as long as the
resulting perturbation in the loss function is bounded in $\ell_\infty$-norm.

\begin{figure}[!htbp]
    \centering
    \subfigure[list][Function values at initialization and at convergence for the
    16-parameter instance with noisy labels, repeated for 5000 random
    initializations.]{
      \includegraphics[width=\linewidth]{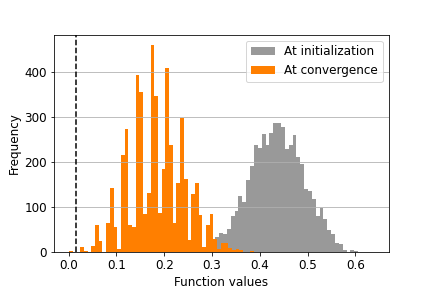}
      \label{fig:noisy_dist}
    }
    \subfigure[List entry][The exponential decay of success rate for finding the global minimum under 10000
    random initialization with label noise $\mathcal{N}(0,\sigma)$.]{
      \includegraphics[width=\linewidth]{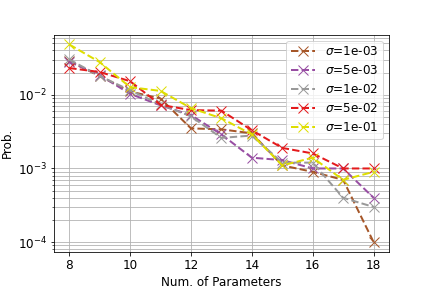}
      \label{fig:noisy_decay}
    }
    \caption{Empirical risk minimization with noisy labels. (a) the function values
      at convergence for a 16-parameter instance; the perturbation breaks the
      symmetry of the local minima, hence the more continuous spectrum of
      function values (Cf. Figure~\ref{fig:dist}(h)). (b) the decay of
      success rate for finding the global minima; the exponential tendency
      remains in the presence of Gaussian label noise up to $\sigma=1e-1$.}
  \end{figure}

\subsection{More Experimentrs on Datasets beyond Our Construction}
\label{subsec:app_common_datasets}
So far we have observed exponentially many local minima in the datasets in our
construction. Now we turn to more natural datasets that may appear in practice. Specifically,
we consider the following family of datasets with a clear interpretation as an encoding of a
classical, linearly separable concept: for the $p$-parameter instance,
we first randomly choose $\mlvec{w} \in \real^{2p}$ as the normal vector to the
separating hyperplane. The classical dataset $\{(\mlvec{x},
y)|\mlvec{x}\in\real^{2p}, y\in\{0,1\}\}$ is generated as follows:
(1) uniformly draw the feature vector $\mlvec{x} = (x_1, \cdots, x_p, x_{p+1}, \cdots, x_{2p})^T$
from $[0, 2\pi]^{2p}$; (2) $y = 1$ if $\mlvec{w}^T \mlvec{x} > 0$ and $y = 0$
otherwise. The classical feature $\mlvec{x}$ is encoded into a quantum state
$\mlvec{\rho}(\mlvec{x}) = \ket{\Psi(\mlvec{x})}\bra{\Psi(\mlvec{x})}$ using the two-layer XY-encoder:
$\Psi(\mlvec{x}) :=
\otimes_{l=1}^p\exp(-i x_{p+l} \mlvec{Y}_l)
\otimes_{l=1}^p\exp(-i x_{l} \mlvec{X}_l)
\ket{0}^{\otimes p}$. This process is repeated to construct a 100-sample
dataset.
For each QNN instance, we sampled 70 initial points and optimize the 
mean-square loss with \textsf{RMSProp} for 2000 iterations. The rest of the
settings are the same as our original experiments.
In Figure~\ref{fig:common} (Cf. Figure~\ref{fig:dist} in the main text and Figure~\ref{fig:more-dist}~(a)-(d)
in the supplementary material), we trained instances with 2,4,6,8 qubits, each with 70
random initialization, and plotted the distribution of function values at
convergence. There is a large number of local minima, and only a few
random initialization ended up at the global minima.
While we no longer have a clear exponential
dependency, we did  observe that as the number of parameters increases, the number
of local minima increases significantly, and the success rate for finding global minima drops sharply. 
Such a phenomenon is also resilient to random choices of $\mlvec{w}$ and random sampling of feature vectors. 
This could be initial numerical evidence supporting the generality of our
observed phenomena.

\begin{figure}[!htbp]
  \centering
   \subfigure[2 parameters: \textsf{RMSProp}]{
     \includegraphics[width=.8\linewidth]{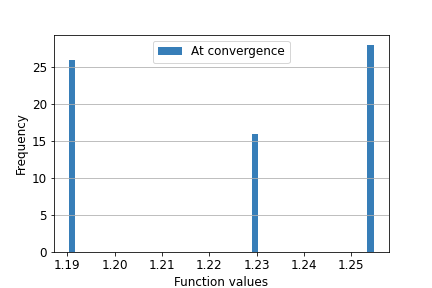}
   }
   \subfigure[4 parameters: \textsf{RMSProp}]{
     \includegraphics[width=.8\linewidth]{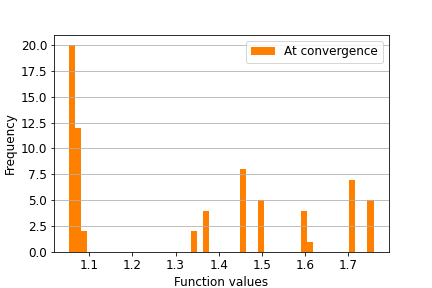}
   }
 \end{figure}
 \begin{figure}[!htbp]
   \subfigure[6 parameters: \textsf{RMSProp}]{
     \includegraphics[width=.8\linewidth]{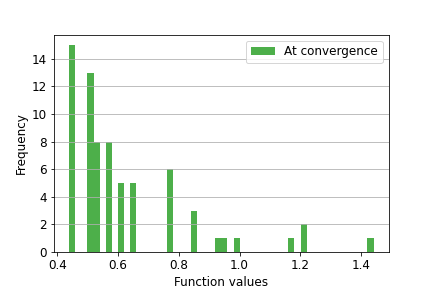}
   }\\
     \subfigure[8 parameters: \textsf{RMSProp}]{
     \includegraphics[width=.8\linewidth]{\imghome/common_dist_8.png}
   }
  \caption{
    Empirical risk minimization for the common dataset using \textsf{RMSProp}.
    For each experiment setting, we repeat for 70 random initializations and run
    for 2000 iterations. The number of local minima increases significantly with the number of
    parameters.
  }
  \label{fig:common}
\end{figure}

\end{document}